\newcounter{aid}
\newcommand{\appsec}{
\setcounter{aid}{\value{section}}
\renewcommand{\theaid}{\Alph{aid}}
\renewcommand{\thesection}{Appendix\ \Alph{section}}
\section{\ }
\renewcommand{\thesection}{\Alph{section}}
}
\theoremstyle{plain}
\newtheorem{theorem}{Theorem}[section]
\newtheorem{corollary}[theorem]{Corollary}
\newtheorem{proposition}[theorem]{Proposition}
\newtheorem{lemma}[theorem]{Lemma}
\newtheorem{definition}[theorem]{Definition}
\theoremstyle{definition}
\newtheorem{remark}[theorem]{Remark}
\DeclareMathOperator{\supp}{supp}
\DeclareMathOperator{\card}{card}
\DeclareMathOperator{\diam}{diam}
\def\R{\mathbb{R}}
\begin{document}

\bibliographystyle{ieeetralpha}

\title{High-resolution scalar quantization with R\'enyi entropy constraint}
{\renewcommand{\thefootnote}{}
\footnotetext{\hspace{-0.3cm} W.\ Kreitmeier is with the 
Department of Informatics and Mathematics, University of Passau, 
Innstra\ss e 33, 94032 Passau, Germany (email:
\texttt{wolfgang.kreitmeier@uni-passau.de}). 
T.  Linder is with the Department of
Mathematics and Statistics, Queen's University, Kingston, Ontario,
Canada K7L 3N6 (email: {\tt linder@mast.queensu.ca}). }

\footnotetext{\hspace{-0.3cm} This research was supported in part by
  the German Research Foundation (DFG) and the Natural Sciences and
  Engineering Research Council (NSERC) of Canada.} }

\author{Wolfgang Kreitmeier and Tam\'as Linder} 
\date{July 5, 2011}

\maketitle

\begin{abstract}  
We consider optimal scalar quantization with $r$th power distortion
and constrained R\'enyi entropy of order $\alpha$. For sources with 
absolutely continuous distributions the high rate asymptotics of the
quantizer distortion has long been known for $\alpha=0$ (fixed-rate
quantization) and $\alpha=1$ (entropy-constrained quantization).
These results have recently been extended to quantization with R\'enyi
entropy constraint of order $\alpha \ge r+1$. Here we consider the
more challenging case $\alpha\in [-\infty,0)\cup (0,1)$ and for a
  large class of absolutely continuous source distributions we
  determine the sharp asymptotics of the optimal quantization
  distortion. The achievability proof is based on finding
  (asymptotically) optimal quantizers via the companding approach, and
  is thus constructive.

\end{abstract}

\medskip
\noindent
{\bf Index Terms:} companding, high-resolution asymptotics, optimal
quantization, R\'enyi entropy.

\allowdisplaybreaks

\section{Introduction}

With the exception of a few very special source distributions the
exact analysis of the performance of optimal quantizers is a
notoriously hard  problem. The asymptotic
theory of quantization facilitates such  analyses by assuming that
the quantizer operates at asymptotically high rates.  The seminal work
by Zador \cite{Zad63} determined the asymptotic behavior of the
minimum quantizer distortion under a constraint on either the
log-cardinality of the quantizer codebook (fixed-rate quantization) or
the Shannon entropy of the quantizer output (entropy-constrained
quantization).  (See the article by Gray and Neuhoff
\cite{GrNe98} for a historical overview and related results.) Zador's
results were later clarified and generalized by Bucklew and Wise
\cite{BuWi82} and Graf and Luschgy \cite{GrLu00} for the fixed-rate
case, and by Gray \emph{et al}.\ \cite{GrLiLi02} for the
entropy-constrained case.

Recently, approaches that incorporate both the fixed and
entropy-constrained cases have been suggested. In \cite{GrLiGi08} a
Lagrangian formulation is developed  which puts a simultaneous
constraints on entropy and codebook size, including fixed-rate and
entropy-constrained quantization as special cases. Another approach
that has been suggested in \cite{GrLiGi08} and further developed in
\cite{Kre10b,Kre10a} uses the R\'enyi entropy of order $\alpha$  of the
quantizer output as (generalized) rate. One obtains fixed-rate
quantization for $\alpha = 0$, while  $\alpha = 1$ yields the usual
(Shannon) entropy-constrained quantization approach. 

The choice of R\'enyi entropy as the quantizer's rate can be motivated
from a purely mathematical viewpoint.  In the axiomatic approach to
defining entropy, R\'enyi's entropy is a canonical extension of
Shannon-entropy, satisfying fewer of the entropy axioms
\cite{Ren60b,AcDa75}.  From a more practical point of view, the use of
R\'enyi entropy as quantizer rate is supported by Campbell's work
\cite{Cam65}, who considered variable-length  lossless codes with  exponentially weighted average
codeword length and showed that R\'enyi's entropy plays an analogous
role to Shannon entropy in this more general setting. Further results
on lossless coding for R\'enyi entropy were obtained in \cite{Nat75}.
Jelinek \cite{Jel68} showed that R\'enyi's entropy (of an appropriate
order $\alpha \in (0,1)$) of a variable-length lossless code
determines the encoding rate for a given reliability (exponential
decrease of probability) of buffer overflow when the codewords are
transmitted over a noiseless channel at a fixed per symbol rate. At
least in such situations, measuring the quantizer's rate by R\'enyi's
entropy is operationally justified.  An overview of related results
can be found in \cite{Bae03}.  The diverse uses of R\'enyi's entropy
(and differential entropy) in emerging fields such as quantum
information theory (e.g.\ \cite{Hol06}), statistical learning
(e.g.\ \cite{Jen09}), bioinformatics (e.g.\ \cite{KrMaKa04}), etc.,
may also provide future motivation for this rate concept.

The only available general result on quantization with R\'enyi entropy
constraint appears to be \cite{Kre10b} where the sharp asymptotic
behavior of the $r$th power distortion of optimal $d$-dimensional
vector quantizers has been derived for $\alpha \in [1+r/d, \infty
]$. The proof shows that for these $\alpha$ values the optimal
quantization error is asymptotically determined by the distortion of a
ball with appropriate radius around the most likely values of the
source distribution. Thus  it suffices to evaluate the $r$th
moment of this ball (see \cite[Theorem 4.3]{Kre10b}), which remarkably
simplifies the derivation and makes  the case $\alpha \ge 1+
r/d$  quite unique. In the classical ($\alpha=0$ and $\alpha=1$)
settings, the contributions of the codecells of an optimal quantizer
to the overall distortion are asymptotically of the same order. Bounds
on the optimal performance in \cite{Kre10b} suggest a similar situation
for $\alpha < 1+r/d$, making the problem more challenging
than the case $\alpha \ge 1+r/d$.

In this paper, at the price of restricting the treatment to the scalar
($d=1$) case, we are able to determine the asymptotics of the optimal
quantization error under a R\'enyi entropy constraint of order $\alpha
\in [-\infty, 0)\cup (0,1)$ for a fairly large class of source
  densities. The achievability part of the proof (providing a sharp
  upper bound on the asymptotic performance) is constructive via
  companding quantization. In particular, we determine the optimal
  point density function for each $\alpha \in [-\infty, 1+r)$ and
    provide rigorous performance guarantees for the associated
    companding quantizers (for $\alpha=0$ and $\alpha=1$, these
    results have of course been known). Matching lower bounds are
    provided for $\alpha \in [-\infty,0)\cup (0,1)$, which leaves only
      the case $\alpha \in (1,1+r)$ open.  We note that in proving the
      matching lower bounds, one cannot simply apply the techniques
      established for $\alpha=0$ or $\alpha=1$. In our case the
      distortion and R\'enyi entropy of a quantizer must be
      simultaneously controlled, a difficulty not encountered in
      fixed-rate quantization. Similarly, the Lagrangian formulation
      that facilitated the corrected proof of Zador's
      entropy-constrained quantization result in \cite{GrLiLi02}
      cannot be used since it relies on the special functional form of
      the Shannon entropy.  On the other hand, using the monotonicity
      in $\alpha$ of the optimal quantization error, one can show that
      our results imply the well-known asymptotics for $\alpha \in
      \{0,1\}$, at least for the special class of scalar distributions
      we consider.

The paper is organized as follows. In Section~\ref{sec_notation} we
introduce the quantization problem under a R\'enyi entropy constraint
and review some definitions and notation. In Section~\ref{secmain},
after summarizing some related work, we state our main result. The
next three sections are devoted to developing the machinery needed in
the proof. Section~\ref{sec_comp} presents results on the asymptotic
distortion and R\'enyi entropy of companding quantizers, which, with
the proper choice of the compressor function in a Bennett-like
integral, will turn out to be (asymptotically) optimal. In
Section~\ref{sec_fund} technical results needed mostly for
establishing lower bounds are developed. Section~\ref{sec_mix}
presents upper and lower bounds on the optimal quantization error for
mixture distributions. Section~\ref{sec_proof} contains the proof of
the main results. Section~\ref{sec_concl} contains concluding remarks
and a discussion on extending the results to vector quantization.  All
the longer, technical proofs of the auxiliary results are relegated to
the appendices.

\section{Preliminaries and notation}
\label{sec_notation}

We begin with the definition of R\'enyi entropy of order $\alpha$.

\begin{definition}
\label{def:renyi_entr}
Let $\mathbb{N} := \{1,2,\ldots \}$. Let $\alpha \in \thinspace [-
  \infty, \infty ]$ and $p=(p_{1},p_{2},\ldots) \in
[0,1]^{\mathbb{N}}$ be a probability vector, i.e.,
$\sum_{i=1}^{\infty} p_{i} = 1$.  The R\'enyi entropy of order
$\alpha$, $\hat{H}^{\alpha}(p) \in {} [0,\infty ]$, is defined as (see
\cite{Ren60b}, \cite[Definition 5.2.35]{AcDa75} and
\cite[p.\ 1]{Har09})
\begin{equation}
\label{renyi_entr}
\hat{H}^{\alpha}(p)=
\begin{cases}
\frac{1}{1-\alpha} \log \left(  \sum\limits_{i: p_i>0} p_{i}^{\alpha} \right), 
& \alpha \in {} (- \infty , \infty) \setminus \{1\}\\
- \sum\limits_{i=1}^{\infty } p_{i} \log p_{i}, & \alpha = 1 \\ 
- \log \left( \max \{ p_{i} : i \in \mathbb{N} \} \right), & \alpha = \infty \\
- \log \left( \inf \{ p_{i} : i \in \mathbb{N}, p_{i}>0 \} \right), &
 \alpha = - \infty. 
\end{cases}
\nonumber
\end{equation}
\end{definition}

We use the conventions $0 \cdot \log 0 :=0$ and $0^0:=0$. All logarithms are to the base $e$.

\begin{remark}
\label{remark_hospital}
(a) With these conventions we obtain 
\[
\hat{H}^{0}(p) = \log \left(  \card \{ i \in \mathbb{N}: p_{i} > 0 \} \right),
\]
where $\card$ denotes cardinality.
Using l'Hospital's rule it is easy to see, that the case $\alpha = 1$
follows from the case  $\alpha \neq 1$ by taking the limit $\alpha
\rightarrow 1$. 
(see, e.g.,  \cite[Remark 5.2.34]{AcDa75}).
Moreover, one has 
\begin{equation}
\label{limitdef}
\lim_{\alpha \rightarrow \infty } \hat{H}^{\alpha}( p ) =
\hat{H}^{\infty}( p ) 
\quad \text{and} \quad
\lim_{\alpha \rightarrow - \infty } \hat{H}^{\alpha}( p ) = \hat{H}^{-
  \infty}( p ). 
\end{equation}
(b) We note that the usual definition of  R\'enyi entropy is
restricted to nonnegative values of the order $\alpha$. However, it
will turn out that the case 
$\alpha<0$ can be handled without too much additional technical
difficulties, and we believe that this generalization may turn out to
have useful implications. 
\end{remark}

Now let $d \in \mathbb{N}$ and $X$ be an $\mathbb{R}^{d}$-valued random variable with 
distribution $\mu$.
Let $\mathbb{I} \subset \mathbb{N}$ and $\mathcal{S} = \{ S_{i} : i
\in \mathbb{I}  \}$ be a countable and Borel measurable partition
of $\mathbb{R}^{d}$. Moreover let $\mathcal{C} = \{ c_{i}  : i \in
\mathbb{I} \}$ be a countable set of distinct
points in $\mathbb{R}^{d}$. Then $(\mathcal{S}, \mathcal{C} )$ defines
a \emph{quantizer}  
$q : \mathbb{R}^{d} \rightarrow \mathcal{C} $ such that
\[
q(x) = c_{i} \qquad \text{ if and only if } \qquad x \in S_{i} .
\]  
We call $\mathcal{C}$ the \emph{codebook} and the $c_i$ the
codepoints. Each $S_{i} \in \mathcal{S}$ is called
\emph{codecell}. Clearly, $\mathcal{C} = q (\mathbb{R}^{d})$ (the
range of $q$). Moreover, 
\[
\mathcal{S} = \{ q^{-1}(z) : z \in q(\mathbb{R}^{d})  \}
\]
where $q^{-1}(z)=\{x\in \R^d: q(x)=z\}$.
Let $\mathcal{Q}_{d}$ denote the set of all quantizers on $\R^d$,
i.e., the set of all Borel-measurable mappings $q : \mathbb{R}^{d}
\rightarrow \mathbb{R}^{d}$ with a countable number of codepoints $q(\mathbb{R}^{d})$.  The discrete random variable $q(X)$ is a quantized
version of the random variable $X$ whose distribution is denoted by $\mu \circ q^{-1}$. In measure-theoretical terms the
image measure $\mu \circ q^{-1}$ has a countable support and defines
an approximation of $\mu$, the so-called quantization of $\mu$ by $q$.
With any enumeration $\{ i_{1}, i_{2}, \ldots \}$ of $\mathbb{I}$ we define
\begin{equation}
\label{ref_def_entr}
H^{\alpha}_{\mu }(q) = 
\hat{H}^{\alpha } ( \mu ( S_{i_{1}} ), \mu ( S_{i_{2}} ),\ldots ) 
\end{equation}
as the R\'enyi entropy of order $\alpha$ of $q$ with respect to $\mu$.
We intend to quantify the error in approximating the original
distribution $\mu$ with its quantized version $\mu \circ q^{-1}$.  To
this end let $\parallel \cdot \parallel$ be any norm on
$\mathbb{R}^{d}$ and $\rho: [0,\infty) \rightarrow [0,\infty)$ a
    strictly increasing function. For $q \in \mathcal{Q}_{d}$ we
    measure the approximation error between $X$ and $q(X)$,
    resp.\ $\mu$ and $\mu \circ q^{-1}$, also called the quantizer
    distortion,  as
\begin{equation}
\nonumber
\label{ref_dmuf}
D_{\mu}(q)= E \rho (\| X-q(X)\| ) =
\int \rho ( \| x-q(x)\| )\,  d \mu (x).
\end{equation} 
For any $R \geq 0$ we define
\begin{equation}
\label{ref_darst_f_q}
D_{\mu}^{\alpha }(R)= \inf \{ D_{\mu}(q) : q \in \mathcal{Q}_{d}, H^{\alpha }_{\mu }(q) \leq R \},
\end{equation}
the optimal quantization distortion of $\mu$ under R\'enyi $\alpha$-entropy
bound $R$. We note that $D_{\mu}^{\alpha }(R)$ is a nonincreasing
function of $\alpha$ (see Lemma~\ref{lemm_mono}).

We call a quantizer $q$ optimal for $\mu$ under the entropy constraint
$R$ if $D_{\mu}(q)=D_{\mu}^{\alpha }(R)$ and $H^{\alpha }_{\mu }(q)
\le R$. In the rest of this paper we focus on the one-dimensional case
(scalar quantizers, $d=1$) and the so-called $r$th power distortion
measure $\rho (x) = x^{r}$, where $r \ge 1$. Thus the distortion of
quantizer $q\in \mathcal{Q}_1$ is given by
\[
D_\mu(q)=
E|X-q(X)|^r = \int | x-q(x)|^r\,   d \mu (x).
\]

For simplicity we write $\mathcal{Q}_{1}=\mathcal{Q}$. Also, let
$\mathcal{Q}^c\subset \mathcal{Q}$ denote the set of all scalar
quantizers with finitely many codecells, each of which is an interval,
and such that every codepoint lies in the closure of the corresponding
codecell. The following lemma (proved in \ref{appA}) presents two key
properties of optimal quantization under R\'enyi entropy constraint.

\begin{lemma}
\label{lemm_mono}
For all $R \geq 0$ and $\alpha,\beta  \in [- \infty, \infty ]$ with  $\beta\leq \alpha$,  we have 
\begin{equation}
\label{eq:mono}
 D_{\mu}^{\beta }(R) \ge D_{\mu}^{\alpha }(R).
\end{equation} 
Assume  that  $E|X|^r<\infty$ and $\mu$ is nonatomic. Then  for all $R \geq 0$
and $\alpha \in [-\infty,0]$, we have 
\begin{equation}
\label{eq:interval}
D_{\mu}^{\alpha }(R)= \inf \{ D_{\mu}(q) : q \in \mathcal{Q}^c,
H^{\alpha }_{\mu }(q) \leq R \}
\end{equation} 
while for all $\alpha\in (0,\infty]$,
\begin{equation}
\label{eq:interval1}
D_{\mu}^{\alpha }(R)= \inf \{ D_{\mu}(q) : q \in \mathcal{Q}^c,
H^{\alpha }_{\mu }(q) = R \}.
\end{equation} 
\end{lemma}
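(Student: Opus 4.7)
\emph{Plan.} Inequality~(\ref{eq:mono}) follows from the classical fact that $\alpha\mapsto\hat H^\alpha(p)$ is nonincreasing for every probability vector $p$, verifiable case-by-case from Definition~\ref{def:renyi_entr} (e.g., via Jensen's inequality applied to $E[P^{\alpha-1}]$ with $I\sim p$ and $P=p_I$ on the interior range, and by direct computation at the endpoints $\alpha\in\{-\infty,0,1,\infty\}$). Consequently, whenever $\beta\le\alpha$, every quantizer feasible for the $\beta$-constrained problem is feasible for the $\alpha$-constrained one, so the infimum in~(\ref{ref_darst_f_q}) is nonincreasing in $\alpha$.

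For~(\ref{eq:interval}), the direction $D^\alpha_\mu(R)\le\inf\{\cdots\}$ is immediate from $\mathcal{Q}^c\subset\mathcal{Q}$. For the reverse, I would fix $\epsilon>0$, choose $q\in\mathcal{Q}$ with $H^\alpha_\mu(q)\le R$ and $D_\mu(q)\le D^\alpha_\mu(R)+\epsilon$, and transform it into $q'\in\mathcal{Q}^c$ in three steps. \textbf{(i)~Finitization.} For $\alpha\in[-\infty,0]$, $p_i^\alpha\ge 1$ on $p_i\in(0,1]$, so $H^\alpha_\mu(q)\le R$ automatically caps the number of codecells of positive $\mu$-mass; $q$ may therefore be assumed to have finitely many cells. \textbf{(ii)~Intervalization.} Let $c_1<\cdots<c_n$ be the sorted codepoints with $p_i=\mu(S_i)$; by nonatomicity, pick $-\infty=a_0<\cdots<a_n=\infty$ with $\mu([a_{i-1},a_i))=p_i$ and define $\tilde q$ sending $[a_{i-1},a_i)$ to $c_i$. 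Cell probabilities (hence $H^\alpha_\mu$) are preserved, and a Hardy--Littlewood--P\'olya monotone-rearrangement argument shows $D_\mu(\tilde q)\le D_\mu(q)$, since any inversion between the ordering of cells and of codepoints can be undone by a local swap that strictly decreases the integrand $|x-c_i|^r$. \textbf{(iii)~Codepoints in closures.} Replace each $c_i$ by the minimizer of $c\mapsto\int_{[a_{i-1},a_i)}|x-c|^r\,d\mu(x)$, which by convexity lies in $[a_{i-1},a_i]$; this only decreases distortion and places the final $q'$ in $\mathcal{Q}^c$.

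For~(\ref{eq:interval1}) with $\alpha\in(0,\infty]$, I must additionally achieve $H^\alpha_\mu(q')=R$ exactly. Starting from the $q'$ above, if $H^\alpha_\mu(q')<R$, I would split one codecell of mass $p>0$ into two subintervals of masses $p-t$ and $t$ with optimally placed codepoints. For $\alpha>0$ the resulting R\'enyi entropy is continuous in $t$ and strictly exceeds the original value for $t\in(0,p)$ (by the strict sub-/super-additivity of $s\mapsto s^\alpha$, depending on whether $\alpha\lessgtr 1$; the cases $\alpha=1$ and $\alpha=\infty$ are direct), and iterating splits on finer pieces drives the entropy to $+\infty$; hence the intermediate value theorem delivers a parametrization achieving $H^\alpha_\mu=R$ exactly, with distortion within $O(\epsilon)$ of $D_\mu(q')$ when the splitting points are chosen near the existing codepoints. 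This argument fails for $\alpha\le 0$, because the entropy is quantized at $\alpha=0$ and jumps discontinuously to $+\infty$ under any infinitesimal split for $\alpha<0$, explaining the asymmetry between~(\ref{eq:interval}) and~(\ref{eq:interval1}). The main obstacle in the whole plan is the intervalization step~(ii): converting the intuitive ``sorted pairing minimizes transportation cost'' into a rigorous distortion inequality for arbitrary Borel codecells requires a careful swap-and-limit argument or an appeal to monotone-rearrangement theory for measurable functions against a nonatomic $\mu$.
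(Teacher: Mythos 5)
Your derivation of (\ref{eq:mono}) rests on the same core fact as the paper's own argument---that $\gamma \mapsto H^\gamma_\mu(q)$ is nonincreasing (you verify it by Jensen, the paper by a derivative computation plus continuity at $\gamma\in\{0,1\}$)---so that part is essentially the same. For (\ref{eq:interval}) and (\ref{eq:interval1}) the paper's proof is merely a citation to \cite[Thm.~3.2, Prop.~4.2]{Kre11} and \cite[Prop.~2.1(i)]{Kre10a}, whereas you attempt a self-contained reduction; your three-step plan (finitize, intervalize by monotone rearrangement, project codepoints) is a genuine constructive proof route rather than a pointer, and your caveat that step (ii) needs a careful measure-theoretic rearrangement argument for Borel cells against a nonatomic $\mu$ is correct but not fatal: the interval partition with sorted cumulative masses is precisely the monotone coupling of $\mu$ with $\sum_i p_i\delta_{c_i}$, which is optimal on the line for the cost $|x-y|^r$ with $r\ge 1$, and this gives $D_\mu(\tilde q)\le D_\mu(q)$ while preserving the mass profile and hence $H^\alpha_\mu$.

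There is, however, a real gap in your treatment of (\ref{eq:interval1}). Your finitization step (i) uses $p_i^\alpha \geq 1$ for $p_i\in(0,1]$, which holds only for $\alpha \leq 0$. For $\alpha \in (0,\infty]$ a quantizer with $H^\alpha_\mu(q) \leq R$ may have infinitely many positive-mass cells (e.g.\ geometrically decaying $p_i$ give finite $H^\alpha$ for every $\alpha > 0$), so the finite-cell $q' \in \mathcal{Q}^c$ from which you ``start the splitting argument'' does not exist yet. A separate truncation step is required: merging tail cells does decrease $H^\alpha$ for $\alpha>0$ (so feasibility is preserved), but bounding the extra distortion contributed by the merged cell requires invoking $E|X|^r < \infty$ together with some bookkeeping about where the merged mass and its codepoint lie; it is not automatic. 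Without such a truncation lemma, the passage from $q \in \mathcal{Q}$ to $q' \in \mathcal{Q}^c$ for $\alpha > 0$ is missing, and your splitting argument has no starting point. A lesser but worth-noting point: for $\alpha=\infty$ splitting an arbitrary cell need not change $H^\infty=-\log\max_i p_i$; you must target the cell(s) attaining the maximum, and possibly split several of them, before the intermediate value argument applies.
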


The second statement of the lemma says that under the given conditions
the optimum quantizer performance can be approached arbitrarily
closely by quantizers in $\mathcal{Q}^c$. For this reason, in the rest
of the paper all quantizers will be assumed to belong to
$\mathcal{Q}^c$; in particular, we only consider quantizers with
finitely many interval cells. According to
(\ref{eq:interval1}),  when $\alpha\in (0,\infty]$  it suffices to consider only those
quantizers in $\mathcal{Q}^c$ whose entropy attains $R$.

From \cite[Thm.\ 5.2]{Kre10b} it is known that for $\alpha \in [0,1]$ the
product $e^{rR}D_{\mu}^{\alpha}(R)$ remains bounded and is bounded away
from zero as $R\to \infty$. This motivates the following notion of
quantizer optimality that will play an important role in our work.

\begin{definition}
Let $(q_{n})_{n \in \mathbb{N}} \subset \mathcal{Q}$ be
a sequence of quantizers such that 
$H_{\mu}^{\alpha }(q_{n})\to \infty$  as $n \to \infty$. 
If $e^{rR}D_{\mu}^{\alpha}(R)\to c$ as  $R\to \infty$ for some $c \in
(0,\infty)$ and 
\begin{equation}
\label{def_asymp_opt_quant}
\lim_{n \to \infty} e^{rH_{\mu}^{\alpha }(q_{n})} D_{\mu }(q_{n}) = c, 
\end{equation}
then we call $(q_{n})_{n \in \mathbb{N}}$ an asymptotically optimal 
sequence of quantizers for $\mu$.
\end{definition}

We denote by $\lambda$ the one-dimensional Lebesgue measure.
For a measurable real function $f$ on $\mathbb{R}$ 
and measurable nonempty set $A \subset \mathbb{R}$, $\mathrm{ess } \inf
\nolimits_{A} f = \sup \{b: \lambda ( \{x \in A : f(x) < b \} ) = 0 \} 
$ denotes that the essential infimum of $f$ on $A$. Similarly,
$\mathrm{ess } \sup \nolimits_{A} f =\inf \{b: \lambda ( \{x \in A :
f(x) > b \} ) = 0 \}$  is the essential supremum of $f$ on $A$. We let $\supp (\mu )$ denote  the support of $\mu$ 
defined by
\[
\supp(\mu)=\{ x:\, \mu((x-\epsilon, x+\epsilon))>0 \text{ for all
          $\epsilon>0$}\}. 
\]
Note that $\supp(\mu)$ is the smallest closed set whose complement has
$\mu$ measure zero. We will often deal with  the situation where
$\supp(\mu)$ is contained in a bounded interval $I$. In such cases, we
usually  leave a  quantizer $q\in \mathcal{Q}$ undefined outside $I$, as we may
since $\mu(\R\setminus I)=0$. 

Let $\mathbb{Z}$ denote the set of all integers and assume $\Delta >
0$.  The infinite-level uniform quantizer $\hat{q}_{\Delta}$ on $\mathbb{R}$
has  codecells $\{ (i\Delta,(i+1)\Delta] : i \in
  \mathbb{Z} \}$ and corresponding  codepoints that are the midpoints
  of the associated cells, so that $\hat{q}_{\Delta}(x)=(i+1/2)\Delta$ if and only if $x\in
  (i\Delta,(i+1)\Delta]$.

\section{Main results}
\label{secmain}

First we summarize the known results regarding the sharp high-rate
asymptotics of the distortion of optimal scalar quantizers. In order to
unify the treatment, we reformulate the classical (resolution and
entropy) rate constraints in terms of the R\'enyi entropy with
appropriate order. For $r > 0$ we let
\[
C(r) = \frac{1}{(1+r)2^{r}} .
\]

\begin{theorem}[\cite{Zad63, BuWi82, GrLu00, GrLiLi02,
      Kre10b}]  
\label{ref_theo_main_resul_known}
Let $r \ge  1$ and $\mu = \mu_{a} + \mu_{s}$ be the Lebesgue
decomposition of  distribution $\mu$ of the scalar random variable $X$
with respect
to the one-dimensional Lebesgue measure $\lambda$, where $\mu_{a}$ denotes the absolutely continuous part
and $\mu_{s}$ the singular part of $\mu$.  Assume that $\mu_{a}(\R)>0$
and let $f=\frac{d\mu_{a}}{d\lambda}$  be the density of $\mu_a$. 
\begin{itemize}
\item[(i)] If $\alpha = 0$ and $E|X|^{r+\delta} < \infty$
for some $\delta > 0$,
then
\begin{equation}
\label{main_asymp}
\lim_{R \rightarrow \infty } e^{rR} D_{\mu }^{0}(R)  = 
C(r) \left( \int  f^{1/(1+r)}\, d \lambda \right)^{1+r}.
\end{equation}
\item[(ii)] If $\alpha = 1$, $\mu_{s}(\R)=0$,
$\int f \log  f\,  d \lambda$ exists and is finite, and
$H_{\mu}^{1}(\hat{q}_\Delta) < \infty$ for some $\Delta>0$, then
\begin{equation}
\label{main_asympalpha1}
\lim_{R \rightarrow \infty } e^{rR} D_{\mu}^{1}(R) = C(r) e^{-r \int f
  \log  f  \, d \lambda}.
\end{equation}
\item[(iii)] If $\alpha \in [1+r, \infty]$, $\mu_{s}(\R)=0$,
  $E|X|^{r+\delta} < \infty$ for some $\delta > 0$, and $\mathrm{ess }
  \sup _{\mathbb{R}} f < \infty$, then
\[
\lim_{R \rightarrow \infty }e^{(1+r)\beta(\alpha )R} D_{\mu}^{\alpha
}(R) = C(r)  \left( \mathrm{ess } \sup{} _{\mathbb{R}} f  \right)^{-r},
\]
where $\beta(\alpha)=(\alpha-1)/\alpha$ if $ \alpha\in [1+r,\infty)$
  and $\beta(\alpha)=1$ if $ \alpha=\infty$.
\end{itemize}
\end{theorem}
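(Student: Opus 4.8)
These three statements are the classical high-resolution asymptotics, and the proof consists in assembling the arguments of the cited works; below I outline the strategy for each and point out where the genuine difficulty lies. For part~(i), with $\alpha=0$, the bound $H^0_\mu(q)\le R$ merely restricts the number of nonempty codecells to $N:=\lfloor e^R\rfloor$, so the claim reduces to showing that $N^r$ times the optimal $N$-point quantization error of $\mu$ tends to $C(r)\bigl(\int f^{1/(1+r)}\,d\lambda\bigr)^{1+r}$. For the upper bound I would use a companding quantizer: after truncating to a bounded interval carrying all but a negligible fraction of the mass, take the increasing compressor $G$ with derivative proportional to $f^{1/(1+r)}$ and let the codecells be the $G$-preimages of the $N$ equal subintervals of the range of $G$; a Bennett-type integral estimate then yields the asserted constant, the hypothesis $E|X|^{r+\delta}<\infty$ serving to make the contribution of the truncated tail asymptotically negligible. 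For the matching lower bound I would invoke the Zador-type argument as rigorously carried out by Bucklew and Wise and by Graf and Luschgy: since $D_\mu(q)\ge\int\dist(x,\mathcal C)^r f\,d\lambda$ for any codebook $\mathcal C$, a covering/partitioning argument combined with H\"older's inequality (with conjugate exponents $1+r$ and $(1+r)/r$) produces exactly $C(r)\bigl(\int f^{1/(1+r)}\,d\lambda\bigr)^{1+r}$ in the limit.

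For part~(ii), with $\alpha=1$, the infinite-level uniform quantizer $\hat{q}_\Delta$ is itself asymptotically optimal: a direct computation gives $D_\mu(\hat{q}_\Delta)\sim C(r)\Delta^r$ and $H^1_\mu(\hat{q}_\Delta)=-\int f\log f\,d\lambda-\log\Delta+o(1)$ as $\Delta\to 0$ (finiteness of $\int f\log f\,d\lambda$ and of $H^1_\mu(\hat{q}_{\Delta_0})$ for some $\Delta_0$ are precisely what legitimize these expansions), so letting $\Delta\to 0$ with $\log\Delta=-\int f\log f\,d\lambda-R$ produces a sequence of quantizers attaining the value $C(r)e^{-r\int f\log f\,d\lambda}$ in the sense of \eqref{def_asymp_opt_quant}. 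The matching lower bound is the delicate part, and here I expect the main obstacle: one cannot transplant the fixed-rate covering argument, and instead---following Gray, Linder and Li \cite{GrLiLi02}---one reduces the entropy-constrained problem to the \emph{unconstrained} Lagrangian problem $\inf_q\{D_\mu(q)+\theta H^1_\mu(q)\}$ with $\theta>0$; the additive form of the Shannon entropy lets this Lagrangian split across codecells and be analyzed one cell at a time, and it is exactly this device---which has no analogue for a general order $\alpha$---that yields the sharp constant.

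For part~(iii), with $\alpha\in[1+r,\infty]$, note that for $\alpha>1$ the constraint $H^\alpha_\mu(q)\le R$ is equivalent to $\sum_i\mu(S_i)^\alpha\ge e^{-(\alpha-1)R}$, a concentration requirement which---in sharp contrast to the cases $\alpha\le 1$---is met most cheaply by making a \emph{single} codecell as heavy as possible, while the remaining cells are made so numerous (there being no cardinality constraint) that their contribution both to the distortion and to $\sum_i\mu(S_i)^\alpha$ is asymptotically negligible. Following \cite{Kre10b}, the achievable scheme therefore places one short interval (a ``ball'') of probability $\approx e^{-\beta(\alpha)R}$ around a point where $f$ essentially attains $\mathrm{ess}\sup_\R f$, so that this heavy cell alone contributes $\sim C(r)(\mathrm{ess}\sup_\R f)^{-r}e^{-(1+r)\beta(\alpha)R}$; the matching lower bound shows that no quantizer can do better, the distortion being asymptotically governed by the $r$th moment of such a ball, which is evaluated in \cite[Theorem~4.3]{Kre10b}. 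Boundedness of $f$ and the condition $E|X|^{r+\delta}<\infty$ are again used to control the tails and the light cells, and the case $\alpha=\infty$ (where $\beta=1$) is the limiting instance, with the $\max$-type constraint $\max_i\mu(S_i)\ge e^{-R}$.
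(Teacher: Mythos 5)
The theorem you are asked to prove is stated in the paper purely as a survey of prior results, with the proofs delegated by citation to Zador, Bucklew--Wise, Graf--Luschgy, Gray--Linder--Li, and Kreitmeier; the paper itself offers no independent argument. Your sketch is a correct and faithful reconstruction of exactly those cited proofs (companding plus Zador-type covering for $\alpha=0$, the uniform quantizer plus the Lagrangian reduction of \cite{GrLiLi02} for $\alpha=1$, and the single-heavy-cell analysis of \cite{Kre10b} for $\alpha\ge 1+r$), so it takes the same route as the paper in the only sense available.
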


Note that $f$ is a probability density function if and only if
$\mu_s(\R)=0$. Part (i) of the theorem is originally due to Zador
\cite{Zad63} who considered the multidimensional case; corrected and
generalized proofs were given by Bucklew and Wise \cite{BuWi82} and
Graf and Luschgy \cite{GrLu00}. Part (ii) is also due to Zador
\cite{Zad63} with corrections and generalizations by Gray \emph{et
  al}.\ \cite{GrLiLi02}. Part (iii) is due to Kreitmeier \cite{Kre10b}
who also gave upper and lower bounds for the case $\alpha\in (1,1+r)$.

\begin{definition}
\label{def_weak_unimod}
A one-dimensional probability density function $f$ is called weakly
unimodal if $f$ is continuous on its support and there exists an
$l_{0}>0$ such that $\{x: f(x)\ge l\}$ is a compact interval for every
$l \in (0, l_{0})$.
\end{definition}

\begin{remark} Note if $f$ is weakly unimodal density, then it  is
  bounded and its support is a (possibly unbounded) interval. Clearly,
  all continuous unimodal densities are weakly unimodal. Thus the
  class of weakly unimodal densities includes most parametric source
  density classes commonly used in modeling information sources such
  as exponential, Laplacian, Gaussian, and generalized Gaussian
  densities.
\end{remark}

For $\alpha \in {} (-\infty,r+1) \setminus \{ 1 \}$ we define
\begin{equation}
\label{lambda12def}
a_1 = \frac{1- \alpha + \alpha r}{1- \alpha + r} , \qquad
a_2 =  \frac{1- \alpha + r}{1- \alpha } .
\end{equation}
The following is the main result of the paper. 

\begin{theorem} 
\label{ref_theo_main_resul}
Let $r > 1$ and assume that the distribution $\mu$ of $X$ is
absolutely continuous with respect to $\lambda$ having density $f$.  Assume
that $\mathrm{ess } \sup \nolimits_{\mathbb{R}} f < \infty$ and let $M=
 (\inf ( \supp (\mu ) ) ), \sup ( \supp (\mu ) ))$.  In either of
the following cases:
\begin{itemize}
\item[(i)] $\alpha \in ( 0,1 )$, $E|X|^{r+\delta} <
  \infty$ for some $\delta > 0$, and $f$ is weakly unimodal,
\item[(ii)]  $\alpha \in ( -\infty, 0 )$, $ \mathrm{ess } \inf
  \nolimits_{M} f >0$ and $f$   is continuous on $M$,
 \end{itemize}
we have 
\begin{equation}
 \label{ref_equ_erdmuarf1} 
\lim_{R \rightarrow \infty } e^{rR} D_{\mu }^{\alpha }(R) 
 =   C(r) \left( \int_{M} f^{a_1 }\, d \lambda
\right)^{a_2}.
\end{equation}
If  $ \mathrm{ess } \inf \nolimits_{M} f>0$ 
and $f$ is continuous on $M$, then
\begin{equation}
\label{sharp_asymp_alph_neg_inf}
\lim_{R \rightarrow \infty } e^{rR} D_{\mu }^{- \infty }(R) 
=  C(r) \left( \int_{M} f^{1-r}\,  d \lambda \right).  
\end{equation}
\end{theorem}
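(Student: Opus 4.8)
## Proof Plan

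\textbf{Overall strategy.} The plan is to prove the main identity \eqref{ref_equ_erdmuarf1} under cases (i) and (ii) by the standard two-sided approach: a constructive achievability (upper) bound via companding quantizers, and a matching converse (lower) bound. The formula \eqref{sharp_asymp_alph_neg_inf} for $\alpha=-\infty$ will then follow by a limiting argument. Throughout, by Lemma~\ref{lemm_mono} we may restrict to quantizers in $\mathcal{Q}^c$ (finitely many interval cells), and for $\alpha<0$ we use the "$\le R$" version \eqref{eq:interval} while for $\alpha\in(0,1)$ we use the "$=R$" version \eqref{eq:interval1}. The key structural fact driving the whole argument is that, unlike the case $\alpha\ge 1+r$, here the codecells of a near-optimal quantizer contribute to the distortion at the same order, so a Bennett-type integral with a carefully optimized point density is the right object.

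\textbf{Achievability (upper bound).} First I would reduce to the case where $\mu$ is compactly supported with $f$ bounded away from $0$ and $\infty$ on a bounded interval $M'$; the tail contribution (for case (i), where $M$ may be unbounded and $f$ may vanish at the ends) is controlled using the moment hypothesis $E|X|^{r+\delta}<\infty$ and a truncation argument, presumably handled by the mixture-distribution bounds of Section~\ref{sec_mix}. On the compact core, I would invoke the companding results of Section~\ref{sec_comp}: choose a compressor function $G$ whose derivative $g=G'$ is proportional to a power of $f$, run the companding quantizer with $N$ points, and read off from the Bennett-type asymptotics that
\[
e^{rH^\alpha_\mu(q_N)}D_\mu(q_N)\ \longrightarrow\ C(r)\left(\int_M f^{a_1}\,d\lambda\right)^{a_2}.
\]
The exponents $a_1,a_2$ from \eqref{lambda12def} are exactly what one gets by a Hölder/Lagrange optimization: writing the distortion as $\asymp N^{-r}\int f\,g^{-r}\,d\lambda$ and the $\alpha$-Rényi entropy in terms of $\int (f/g)^{\alpha-1} f\,d\lambda$ (up to the $\log N$ and the $\tfrac{1}{1-\alpha}$ factor), one minimizes the distortion subject to the entropy constraint, and the optimal $g\propto f^{(1-\alpha)/(1-\alpha+r)}$ yields precisely $\int_M f^{a_1}$ raised to $a_2$. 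I would state this as a lemma (citing Section~\ref{sec_comp}) and then let $N\to\infty$.

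\textbf{Converse (lower bound).} This is the hard part, and where the techniques for $\alpha\in\{0,1\}$ genuinely fail: the Lagrangian trick of \cite{GrLiLi02} exploits the additive form of Shannon entropy and does not survive, and one must control $D_\mu(q)$ and $H^\alpha_\mu(q)$ simultaneously. My plan is to use the fundamental estimates of Section~\ref{sec_fund}. Given any $q\in\mathcal{Q}^c$ with $H^\alpha_\mu(q)\le R$ (resp.\ $=R$), let the interval cells be $S_1,\dots,S_k$ with lengths $\ell_i$ and masses $p_i=\mu(S_i)$. A per-cell lower bound for the $r$th-power distortion of a single interval against $\mu$ restricted there — of the form $\int_{S_i}|x-c_i|^r\,d\mu\ge C(r)\,\ell_i^r\,m_i$ where $m_i$ is something like $\mathrm{ess\,inf}_{S_i}f\cdot \ell_i$ or an average of $f$ — combined with the constraint $\sum_i p_i^\alpha\ (\text{or }\log\text{ thereof})\le \text{value}(R)$, sets up a finite-dimensional optimization over $(\ell_i)$. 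Applying Hölder's inequality with the conjugate exponents dictated by $a_1,a_2$ (and using continuity of $f$, resp.\ $\mathrm{ess\,inf}_M f>0$, to replace cell-averages of $f$ by pointwise values in a Riemann-sum limit), one gets $D_\mu(q)\ge C(r)(\int_M f^{a_1}\,d\lambda)^{a_2} e^{-rR}(1-o(1))$. The main obstacle I anticipate is making this Riemann-sum/Hölder passage uniform over \emph{all} admissible quantizers as $R\to\infty$ — in particular handling cells where $f$ varies a lot or cells near the ends of $M$ — and this is presumably exactly what the technical lemmas of Section~\ref{sec_fund} (and the nonatomicity/continuity hypotheses) are designed to supply.

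\textbf{The case $\alpha=-\infty$.} Finally, for \eqref{sharp_asymp_alph_neg_inf} I would argue by sandwiching. The monotonicity \eqref{eq:mono} gives $D_\mu^{-\infty}(R)\ge D_\mu^{\alpha}(R)$ for every finite $\alpha<0$, and letting $\alpha\to-\infty$ in \eqref{ref_equ_erdmuarf1} we have $a_1\to 1-r$ and $a_2\to 1$, so $C(r)(\int_M f^{a_1})^{a_2}\to C(r)\int_M f^{1-r}\,d\lambda$ by dominated convergence (using $\mathrm{ess\,inf}_M f>0$ and compactness, or an explicit bound to handle the unbounded case), giving the lower bound $\liminf_R e^{rR}D_\mu^{-\infty}(R)\ge C(r)\int_M f^{1-r}\,d\lambda$. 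For the matching upper bound I would run the companding construction with $g\propto f^{1-r}$ directly (the formal $\alpha\to-\infty$ limit of the optimal point density) and verify via the Section~\ref{sec_comp} asymptotics that its $(-\infty)$-Rényi entropy — i.e.\ $-\log\min_i p_i$ — and distortion satisfy $e^{rH^{-\infty}_\mu(q_N)}D_\mu(q_N)\to C(r)\int_M f^{1-r}\,d\lambda$, using that the companding construction equalizes the cell masses up to a constant factor so that $\min_i p_i\asymp N^{-1}$.
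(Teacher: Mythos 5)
Your achievability plan (companding with $g\propto f^{1/a_2}$, tail control via truncation and the mixture bounds, then matching the Bennett integral with Proposition~\ref{ref_rel_renyi_entr}) is essentially the paper's upper bound argument (Corollary~\ref{asymp_quant_comp} plus Step~3 of the proof), and your $\alpha=-\infty$ argument by monotonicity-sandwiching plus companding is precisely Step~4. The real issue is the converse.

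Your converse plan — a direct per-cell lower bound of the form $\int_{S_i}|x-c_i|^r\,d\mu\gtrsim C(r)\ell_i^r m_i$, followed by a Hölder/Lagrange optimization over the cell lengths subject to $\sum_i p_i^\alpha\le e^{(1-\alpha)R}$ with a Riemann-sum passage — is the Zador ($\alpha=0$) strategy transplanted to general $\alpha$, and it has a genuine gap. The authors explicitly warn in the introduction that "one cannot simply apply the techniques established for $\alpha=0$ or $\alpha=1$" because the distortion and the R\'enyi entropy must be controlled simultaneously, a difficulty absent in the fixed-rate case: for $\alpha<0$ the entropy constraint $\sum_i p_i^\alpha\le e^{(1-\alpha)R}$ is driven by the \emph{smallest} cell masses, so the Hölder inequality connecting $\sum_i\ell_i^{r+1}f(x_i)$ to $\sum_i p_i^\alpha$ does not go through without detailed control of the cell geometry, and for $\alpha\in(0,1)$ the difficulty is similar. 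The Riemann-sum passage you defer to Section~\ref{sec_fund} is not what those lemmas do; they supply cell-shrinkage (Lemma~\ref{ref_lemm_gray}), scaling (Lemma~\ref{ref_lemm_scale}), and the $\mathcal{H}_R$, $\mathcal{K}_R$, $\mathcal{G}_R$ machinery that lets one restrict to quantizers with a controlled number of cells and a controlled entropy slack for $\alpha<0$.

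What the paper actually does is a three-stage reduction that avoids your direct Hölder argument. Step~1 proves the lower bound for $\mu$ a mixture of uniforms on $m$ equal-length intervals, by tracking how a near-optimal $q$ restricts to each piece $A_i$, relating that restriction to an optimal quantizer for $U(A_i)$ (via Proposition~\ref{ref_prop_unit_cube} and Lemma~\ref{ref_lemm_scale}), and then solving a finite-dimensional constrained minimization over the induced per-piece rates $v_i$ (Lemma~\ref{ref_lem_lower_bound}). This step crucially requires showing that the \emph{straddling} cells crossing the boundaries between pieces are entropy-negligible, which for $\alpha<0$ takes the dedicated Lemma~\ref{lemm_inf_g} and Lemma~\ref{lemma_straddle_conv}; your sketch does not address straddling cells at all. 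Step~2 approximates a continuous density bounded away from zero on a compact interval by such piecewise-constant $\mu_k$, carefully propagating both the distortion and the entropy error (the $c_k$, $v_k$, $\|f-f_k\|_\infty$ controls). Step~3 handles the general weakly unimodal case for $\alpha\in(0,1)$ by the truncation $I_1,\dots,I_4$ plus Lemma~\ref{ref_lemm_smu12}. Without this piecewise-constant $\to$ compact continuous $\to$ general ladder and the straddling-cell control, your converse does not close; it stops at a heuristic.

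Two smaller remarks: for $\alpha<0$ the paper never uses the constraint $H^\alpha_\mu(q)=R$ (indeed \eqref{eq:interval} only gives "$\le R$"), so the near-equality $e^{R-H^\alpha_\mu(q)}\to1$ that the converse ultimately needs must instead be extracted via the $\mathcal{K}_R$ construction of Lemma~\ref{pierce_neg_para}; and your Lagrange computation $\ell_i\propto f(x_i)^{(\alpha-1)/(r+1-\alpha)}$ does recover $a_1$ in the exponent of $f$, so your heuristic is pointing in the right direction — it is the rigorous uniformization over all admissible $q$ that is missing.
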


The proof of the theorem is given in Section~\ref{sec_proof}. Upper
bounds will be established using a companding approach, while matching
lower bounds are developed by considering increasingly more general
classes of source densities. 

\begin{remark} \label{remH} (a) Note that if we formally substitute
  $\alpha=0$ in (\ref{ref_equ_erdmuarf1}), it reduces to
 (\ref{main_asymp}).  Moreover, it is easy to show that 
 (\ref{ref_equ_erdmuarf1}) reduces to (\ref{main_asympalpha1}) if
 $\alpha \to 1$. Due to monotonicity of the 
  quantization error (Lemma~\ref{lemm_mono}) and by the
  upper bound   for the quantization error for $\alpha \in [-\infty,
    1+r)$ (Corollary~\ref{asymp_quant_comp})
 one can rigorously show that the known asymptotics for $\alpha \in
 \{0,1\}$ also follow from Theorem~\ref{ref_theo_main_resul}, at
 least in the scalar case and 
under our restrictions on  the source density.

\noindent (b) The results of the theorem can be expressed in terms of  the R\'enyi differential entropy $h^{\alpha}(\mu) =
\frac{1}{1-\alpha}\log \bigl( \int f^{\alpha}\, d\lambda\bigr)$ of
order $\alpha\neq 1$. It is easy to check that
(\ref{ref_equ_erdmuarf1}) can be rewritten as 
\begin{equation}
\label{eq:unified}
\lim_{R \rightarrow \infty } e^{rR} D_{\mu }^{\alpha }(R) 
 =   C(r) e^{r h^{a_1}(\mu)}.
\end{equation} 
Setting  $a_1 = \lim_{\alpha\to -\infty} \frac{1- \alpha +
  \alpha r}{1- \alpha + r}  =1-r$ for  $\alpha=\!-\infty$, we also obtain 
 (\ref{sharp_asymp_alph_neg_inf}) from the above expression. 
Also, for $\alpha=0$ we have $a_1=\frac{1}{1+r}$, and 
  (\ref{eq:unified}) reduces to  (\ref{main_asymp}); while for
  $\alpha=1$, we have $a_1=1$, and we formally get back 
(\ref{main_asympalpha1}) since $\lim_{a_1\to 1} h^{a_1}(\mu)
=h^1(\mu) = -\int f\log f\, d\lambda$ (cf.\ Section~\ref{secentr}).
Thus (\ref{eq:unified}) expresses the old and the new asymptotic
  results in a unified form. 

\noindent (c)  Since $\mathrm{ess } \inf \nolimits_{M} f>0$
  the right hand side of (\ref{sharp_asymp_alph_neg_inf}) is finite.
  For the same reason, the right hand side of
  (\ref{ref_equ_erdmuarf1}) is finite for all $\alpha < 0$.  For
  $\alpha \in [ 0,1 )$ the right hand side of
    (\ref{ref_equ_erdmuarf1}) can be shown to be finite by an
    application of H\"older's inequality as in \cite[Remark 6.3
      (a)]{GrLu00}.

\noindent (d) The weak unimodality and continuity conditions on $f$
are the results of our approximation techniques in proving lower
bounds and are probably not necessary. In fact, with a a little
tweaking of the companding approach in the next section one can show
that the right hand sides of (\ref{ref_equ_erdmuarf1}) and
(\ref{sharp_asymp_alph_neg_inf}) still upper bound the asymptotic
performance if these conditions are dropped.

\noindent (e) Note that condition (ii) implies (i). Also, the right
hand side of (\ref{ref_equ_erdmuarf1}) converges to the right hand
side of (\ref{sharp_asymp_alph_neg_inf}) as $\alpha\to -\infty$. 

\noindent (f) The condition $r>1$ is needed in the proof of
the lower bounds on $D_{\mu }^{\alpha }(R)$ where
\cite[Thm.\ 3.1]{Kre10a} is invoked (see
Proposition~\ref{ref_prop_unit_cube}). The upper bounds only need $r\ge 1$
(see Section~\ref{sec_comp}). 

\end{remark}

\section{Distortion and R\'enyi entropy asymptotics of companding quantizers}
\label{sec_comp}

\subsection{Companding quantizers}

Let $N \geq 2$ and $Q_N \in \mathcal{Q}$ denote the $N$-level uniform scalar quantizer with step size
$1/N$ for sources supported in the unit interval $[0,1]$ defined by
$Q_N(x)= 1/2N$ if $x\in [0,1/N]$ and 
\begin{equation}
\label{QNdef}
Q_N(x) = \frac{i-1}{N}+\frac{1}{2N} \quad \text{if\ \ } x\in
  \biggl(\frac{i-1}{N}, \frac{i}{N}\biggr], \quad i=2,\ldots,N.
\end{equation}
The {\em compressor} $G$ 
derived from  a probability density  $g$ on the real line  is the
function
\begin{equation}
\label{def_compress}
G(x) =\int_{-\infty}^x g (y)\,  d \lambda(y).
\end{equation}
Thus the increasing function $G: \mathbb{R} \rightarrow [0,1]$ is the
cumulative distribution function associated with the density $g$.  The
generalized inverse $\hat{G}$ of $G$ is defined by
\[
\hat{G}(y) := \sup\{x: \, G(x)\leq y\}=  \max\{x: \, G(x)\leq y\}
\]
for $y\in (0,1)$. Note that if $g$ is positive almost everywhere with
respect to $\lambda$ (a.e.\ for short), then $G$ is strictly increasing
and $\hat{G}$ is its (ordinary) inverse.

In this paper we will work only with compressor densities $g$ having
compact support, i.e., if $\nu$ denotes the measure induced by $g$, then
$\supp(\nu)$ is  bounded. Thus we can extend the definition of
$\hat{G}$ onto $[0,1]$ by letting  
\[
\hat{G}(0):= \min\{ \supp(\nu)\} > - \infty \quad \text{and}\quad  \hat{G}(1):= \max\{
\supp(\nu)\} < \infty .
\]
The $N$-level {\em companding} quantizer $Q_{g,N}$ associated with
$g$ is defined on $[\hat{G}(0),\hat{G}(1)]$ by 
\[
Q_{g,N}(x) = \hat{G}(Q_N(G(x))).
\]
Note that the codecells of $Q_{g,N}$ are $N$ intervals
$I_{1,N},\ldots,I_{N,N}$ with $I_{1,N}=[\hat{G}(0), \hat{G}(1/N)]$ and 
\[
I_{i,N}=(\hat{G}((i-1)/N), \hat{G}(i/N)], \quad i=2,\ldots,N.
\]
The corresponding quantization points are $\hat{G}((2i-1)/2N),i=1,\ldots,N$.

\begin{remark} (a)  The function $g$ is often called the 
point density for $Q_{g,N}(x)$ since it has the property that for
any $a<b$, 
\[
\lim_{N\to \infty} \frac{1}{N} 
\card ( Q_{g,N}((a,b)) )
= \int_{a}^b g(x)\, d\lambda(x).
\]
(b) If $P_N$ is an arbitrary $N$-level quantizer on $\mathbb{R}$
having convex (interval) codecells, then it can be implemented as a
companding quantizer. In particular, there exists a positive point
density $g$ such that $P_N(x)= Q_{g,N}(x)$ for all (except perhaps a
finite number of) $x \in \mathbb{R}$ (any $x$ such that $ P_N(x)\neq
Q_{g,N}(x)$ is a cell boundary for both quantizers).
\end{remark}

The following result represents the error asymptotics of the compander
if the number of output levels increases without bound. The result
originates with Bennett \cite{Ben48} for $r=2$ and has appeared in
the literature in several different forms (but most often without
precise conditions and a  rigorous proof); see \cite{GrNe98} for a
historical overview.  The proof is given in \ref{appA} and follows the
development in \cite{Lin91} which gives a rigorous
proof for the limit (\ref{bennet}) under different conditions that
include the continuity of $g$ and certain tail conditions, but allow
$f$ and $g$ to have unbounded support. 

\begin{proposition}
\label{bennet_int}
Let $X$ be a random variable with distribution $\mu$ which is
absolutely continuous with respect to $\lambda$ and let $f$ denote its
density.  Let $G$ be a compressor with point density $g$.  Assume that
the support of $\mu$ is included in a compact interval $I$ such that
$\mathrm{ess } \inf_{I} g > 0$ and $g(x)=0$ a.e.\ on $\mathbb{R}
\setminus I$.  Then for $r\ge 1$,
\begin{equation}
\label{bennet}
\lim_{N \rightarrow \infty } N^{r} D_{\mu }(Q_{g,N}) = C(r) 
\int_I  \frac{f}{g^{r}} \, d \lambda.
\end{equation}
\end{proposition}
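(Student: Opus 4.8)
The plan is to prove the Bennett integral formula by comparing the companding quantizer $Q_{g,N}$ on each codecell to a uniform quantizer of matching step size, and then using Riemann-sum convergence. First I would fix $N$ and examine the contribution of a single codecell $I_{i,N} = (\hat G((i-1)/N), \hat G(i/N)]$. Writing $u_i = (i-1)/N$ and $v_i = i/N$, the codepoint is $\hat G((u_i+v_i)/2)$, and the local distortion is $\int_{I_{i,N}} |x - \hat G((u_i+v_i)/2)|^r\, f(x)\, d\lambda(x)$. The key idea is that since $\mathrm{ess\,inf}_I g > 0$, the compressor $G$ is bi-Lipschitz on $I$, hence $\hat G$ is Lipschitz, and the length of $I_{i,N}$ is $O(1/N)$ uniformly in $i$. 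On such a short interval $f$ is "almost constant" in an averaged sense (this is where one is careful: $f$ need not be continuous, only bounded), so the cell's distortion should be close to $C(r)\, (\lambda(I_{i,N}))^{r+1}\, \bar f_i$ where $\bar f_i$ is an average of $f$ over the cell — mirroring the exact formula $\int_0^\Delta |x - \Delta/2|^r\, dx = \Delta^{r+1}/((r+1)2^r) = C(r)\Delta^{r+1}$ for the uniform quantizer on an interval of length $\Delta$ with midpoint codepoint.

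Next I would convert the sum over cells into an integral over $[0,1]$ via the substitution $y = G(x)$. Since $G$ is absolutely continuous with $G' = g$ a.e., one has $\lambda(I_{i,N}) = \hat G(v_i) - \hat G(u_i) = \int_{u_i}^{v_i} \hat G'(y)\, dy$, and $\hat G'(y) = 1/g(\hat G(y))$ a.e.\ where $g$ is positive. Thus $N \lambda(I_{i,N}) \to 1/g(\hat G(y))$ in an appropriate sense as the mesh shrinks, and $N^r$ times the total distortion becomes a Riemann-type sum approximating
\[
C(r) \int_0^1 \frac{f(\hat G(y))}{g(\hat G(y))^r}\cdot \frac{1}{g(\hat G(y))}\, dy
= C(r)\int_I \frac{f(x)}{g(x)^r}\, d\lambda(x),
\]
the last equality again by the change of variables $y = G(x)$, $dy = g(x)\, dx$. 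The finiteness of the right-hand side follows since $f$ is integrable and $g$ is bounded away from zero on $I$.

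The main obstacle — and the reason the proof is relegated to the appendix and follows \cite{Lin91} rather than being a one-line Riemann-sum argument — is handling the lack of continuity of $f$ (and possibly of $g$). One cannot simply say "$f$ is approximately constant on each cell" pointwise. The standard remedy is a sandwiching/truncation argument: approximate $f$ from above and below by step functions (or continuous functions) on $I$, prove the limit for these well-behaved densities where the cellwise estimates are uniform, and then pass to the limit using the $L^1$-density of such approximations together with the uniform bound $\mathrm{ess\,inf}_I g > 0$ (which keeps all the relevant ratios $f/g^r$ controlled and the cell lengths comparable to $1/N$). A secondary technical point is obtaining matching upper and lower bounds for the single-cell distortion in terms of $\lambda(I_{i,N})^{r+1}$ with error terms that sum to something $o(N^{-r})$; this requires care with the first cell $I_{1,N} = [\hat G(0), \hat G(1/N)]$ and with the fact that the codepoint is the image under $\hat G$ of the midpoint in $y$-coordinates rather than the true midpoint of $I_{i,N}$ — but since $\hat G$ is Lipschitz with a two-sided bound, this displacement is $O(1/N)$ and absorbed into the error analysis.
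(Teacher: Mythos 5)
Your outline (cellwise $C(r)\Delta^{r+1}$ estimate, change of variables through $G$, approximating $f$, and finally reconciling the companding codepoint with the cell midpoint) follows the paper's proof quite closely. The paper also introduces the auxiliary midpoint quantizer $\bar Q_{g,N}$ with the same cells as $Q_{g,N}$ and first proves the limit for it. The difference in how the discontinuity of $f$ is handled is essentially cosmetic: the paper replaces $f$ by its cell-averaged version $f_N$ and invokes Lebesgue differentiation plus Scheff\'e's theorem to get $\int|f-f_N|\,d\lambda\to 0$, whereas you propose sandwiching by step or continuous functions and an $L^1$-density argument; both deliver the same $L^1$ control, which together with the uniform bound $N^r|x-\bar Q_{g,N}(x)|^r \le (\mathrm{ess\,inf}_I g)^{-r}$ controls the approximation error.

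Where your proposal has a real gap is the very last point. You dismiss the discrepancy between the companding codepoint $\hat G\bigl((2i-1)/(2N)\bigr)$ and the true cell midpoint $\tfrac12\bigl(\hat G((i-1)/N)+\hat G(i/N)\bigr)$ as ``$O(1/N)$ and absorbed into the error analysis.'' But $O(1/N)$ is the \emph{same} order as the cell width, so a displacement of that size is not a priori negligible: with $|x-c_i|\sim 1/N$ and a displacement $\delta_i\sim 1/N$, the change in the single-cell distortion is $\sim r|x-c_i|^{r-1}|\delta_i|\cdot\mu(I_{i,N})=\Theta(N^{-r})\,\mu(I_{i,N})$, which sums to $\Theta(N^{-r})$ and thus contributes $\Theta(1)$ after rescaling by $N^r$. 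What is actually needed, and what the paper supplies in its step 5, is that the displacement is $o(1/N)$ at $\mu$-a.e.\ $x$: the paper bounds it by $\lambda(I_{i,N}^{j})\,|L(x,N)|$, where $L(x,N)$ is a difference of two averages of $\hat G'$ over nested shrinking intervals, so Lebesgue's differentiation theorem gives $L(x,N)\to 0$ a.e., and the generalized dominated convergence theorem finishes. Lipschitz continuity of $\hat G$ alone can only give the insufficient $O(1/N)$ bound, so this step requires a separate argument rather than being ``absorbed.''
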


\begin{remark}
\label{comp_dens_int}
Since  $\mathrm{ess } \inf_{I} g > 0$
we know that $\mu$ is absolutely continuous with respect to
$g \lambda$ and $\int_{ I } \frac{f}{g^{r}} \, d \lambda <
\infty$. 
\end{remark}

\subsection{R\'enyi entropy asymptotics of companding quantizers}
\label{secentr}

In order to be able to construct asymptotically optimal companding quantizers
(cf.\ (\ref{def_asymp_opt_quant})), in addition to the asymptotic
distortion,  we also have to
control the  quantizer's entropy, at least for
high rates. In this section we derive a result (Proposition~\ref{ref_rel_renyi_entr}) which asymptotically describes the R\'enyi
entropy of the compander as a function of the number of quantization
points. Let  $1_{A}$ denote the indicator function of $A\subset \R$.

\begin{definition}
\label{ref_Def_renyi_entr_diff}
Let $\mu$ be absolutely continuous with respect to $\lambda$ with density $f$
 and define  $M = (\inf ( \supp (\mu ) ), \sup ( \supp (\mu ) ))$.
Let $\alpha \in [-\infty, \infty ]$ and assume that 
\begin{itemize}
\item[(i)]
$1_{ \supp (\mu ) } f^{\alpha}$ is integrable if $\alpha \in {} (-\infty, \infty ) \setminus \{ 1 \} $,
\item[(ii)]
$\mathrm{ess } \inf _{M} f > 0$
if $\alpha = - \infty$,
\item[(iii)]
$f \log f $ is integrable if $\alpha = 1$,
\item[(iv)]
$\mathrm{ess } \sup _{\mathbb{R}} f < \infty$ if $\alpha = + \infty$.
\end{itemize}
Then the  R\'enyi differential entropy of order $\alpha$ of  $\mu$ is defined by
\[
h^{\alpha }(\mu )=
\begin{cases}
\frac{1}{1 - \alpha } \log ( \int_{ \supp (\mu ) } f^{\alpha } \, d
  \lambda ),  &\alpha \in {} (-\infty, \infty ) \setminus \{ 1 \} \\ 
- \int f \log f \, d \lambda, &\alpha = 1 \\  
- \log ( \mathrm{ess } \sup _{\mathbb{R}} f ), &\alpha =  \infty\\
- \log ( \mathrm{ess } \inf _{M} f),
&  \alpha = - \infty.
\end{cases}
\]
\end{definition}

\begin{remark}
Just as in  the case of R\'enyi entropy  (see Remark
\ref{remark_hospital}) the mapping
$[-\infty, \infty] \ni \alpha \rightarrow h^{\alpha}(\mu )$ is
continuous for the differential entropy. 
\end{remark}

Recall that $\hat{q}_{\Delta}$ denotes the infinite-level uniform quantizer
with step-size $\Delta>0$. Recall $M$ from Definition \ref{ref_Def_renyi_entr_diff}  and let
$A({\Delta},M)=\{ a \in \hat{q}_{\Delta }(\mathbb{R}) : \hat{q}_{\Delta}^{-1}(a)
\subset M \}$ and
\[
q_{\Delta, M}(\cdot ) = \sum_{a \in A({\Delta },M)} a \cdot
1_{\hat{q}_{\Delta }^{-1}(a)}(\cdot ) . 
\]

The following result is due to R\'enyi \cite{Ren60a} and
Csisz\'ar \cite{Csi73} for $\alpha\in (0,\infty)$. The 
proof for $\alpha\in [-\infty,0]$ is given in \ref{appA}.

\begin{lemma}
\label{ref_lemma_xyz}
Let $\mu$ be absolutely continuous with respect to $\lambda$ having
density $f$.  Let $\alpha \in [ -\infty,
\infty )$ and assume that the R\'enyi differential entropy of order
$\alpha$ of 
$\mu$ exists and is finite.  Assume that $H_{\mu }^{\alpha
}(\hat{q}_{\Delta }) < \infty$ 
for some $\Delta > 0$.  If $\alpha \in {} (- \infty, \infty )$, then
\[
\lim_{\Delta \rightarrow 0} \left(  H_{\mu }^{\alpha }(\hat{q}_{\Delta }) + \log ( \Delta ) \right) =
h^{\alpha }(\mu ).
\]
Moreover,
\[
\lim_{\Delta \rightarrow 0} \left(  H_{\mu }^{- \infty}(\hat{q}_{\Delta , M}) + \log ( \Delta ) \right) =
h^{-\infty }(\mu ).
\]
\end{lemma}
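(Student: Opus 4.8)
The plan is to treat the two cases $\alpha\in(-\infty,\infty)$ and $\alpha=-\infty$ separately, since the paper's statement explicitly attributes the $\alpha\in(0,\infty)$ case to R\'enyi and Csisz\'ar and only asks for a proof when $\alpha\in[-\infty,0]$. For $\alpha\in(-\infty,0]$ I would argue as follows. Write $p_i(\Delta)=\mu(\hat q_\Delta^{-1}(a_i))=\int_{a_i-\Delta/2}^{a_i+\Delta/2} f\,d\lambda$ for the cell masses of $\hat q_\Delta$, so that, by the $\alpha\ne 1$ branch of Definition~\ref{def:renyi_entr} and \eqref{ref_def_entr},
\[
H^\alpha_\mu(\hat q_\Delta)+\log\Delta=\frac{1}{1-\alpha}\log\Bigl(\sum_{i:p_i(\Delta)>0} p_i(\Delta)^\alpha\,\Delta^{1-\alpha}\Bigr)
=\frac{1}{1-\alpha}\log\Bigl(\sum_{i:p_i(\Delta)>0}\Bigl(\frac{p_i(\Delta)}{\Delta}\Bigr)^{\alpha}\Delta\Bigr).
\]
The key observation is that $\Delta^{-1}p_i(\Delta)$ is the average of $f$ over the $i$th cell, and that $\sum_i (\Delta^{-1}p_i(\Delta))^\alpha\Delta$ is a Riemann-type sum converging (as $\Delta\to 0$) to $\int f^\alpha\,d\lambda=\int_{\supp(\mu)} f^\alpha\,d\lambda$, which is finite by hypothesis (the R\'enyi differential entropy of order $\alpha$ exists and is finite). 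Since $\alpha<1$ the prefactor $1/(1-\alpha)$ and the logarithm are continuous on $(0,\infty)$, and the limit equals $\frac{1}{1-\alpha}\log\bigl(\int_{\supp(\mu)} f^\alpha\,d\lambda\bigr)=h^\alpha(\mu)$. The case $\alpha=0$ is degenerate: then every summand with $p_i(\Delta)>0$ contributes $\Delta$, so the sum is $\lambda(\{x:\text{cell of }x\text{ has positive mass}\})\to\lambda(\supp(\mu))$, recovering $h^0(\mu)$.

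The main obstacle is justifying the convergence of the sum $S(\Delta):=\sum_i(\Delta^{-1}p_i(\Delta))^\alpha\Delta$ to $\int f^\alpha\,d\lambda$ when $\alpha<0$, because then $t\mapsto t^\alpha$ is unbounded near $0$ and convex, so one cannot simply appeal to continuity of a Riemann sum; cells where the local average of $f$ is small can contribute disproportionately, and one needs a domination argument. I would handle this by splitting according to whether a cell meets $\{f>\epsilon\}$ or not. On the ``high'' part, $\Delta^{-1}p_i(\Delta)$ is bounded above and below by positive constants on compacts (after a further localization using that $f^\alpha$ is integrable hence $f>0$ a.e.\ on $\supp(\mu)$), and standard Lebesgue differentiation / dominated convergence gives convergence of the partial sum to $\int_{\{f>\epsilon\}\cap\supp(\mu)} f^\alpha\,d\lambda$. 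On the ``low'' part one needs an upper bound on $\sum_i(\Delta^{-1}p_i(\Delta))^\alpha\Delta$ restricted to those cells; here I would use Jensen's inequality in the form $(\Delta^{-1}\int_{\text{cell}}f)^\alpha\le \Delta^{-1}\int_{\text{cell}}f^\alpha$ (valid because $t\mapsto t^\alpha$ is convex for $\alpha<0$), so that this partial sum is at most $\int_{\{f\le\epsilon'\}} f^\alpha\,d\lambda$ for a slightly enlarged threshold, which is small uniformly in $\Delta$ by absolute continuity of the finite measure $f^\alpha\,d\lambda$ and can be driven to $0$ by letting $\epsilon\to 0$. Combining the two estimates and using $\liminf$/$\limsup$ in the usual way yields $S(\Delta)\to\int f^\alpha\,d\lambda$; the assumption $H^\alpha_\mu(\hat q_\Delta)<\infty$ for some $\Delta$ is what guarantees all these sums are finite to begin with.

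For $\alpha=-\infty$, recall $h^{-\infty}(\mu)=-\log(\mathrm{ess}\inf_M f)$ and that $q_{\Delta,M}$ retains only those cells of $\hat q_\Delta$ entirely contained in $M$. By \eqref{ref_def_entr} and the $\alpha=-\infty$ branch of Definition~\ref{def:renyi_entr},
\[
H^{-\infty}_\mu(q_{\Delta,M})+\log\Delta=-\log\Bigl(\inf_{a\in A(\Delta,M)}\frac{p_a(\Delta)}{\Delta}\Bigr),
\]
so it suffices to show $\inf_{a\in A(\Delta,M)}\Delta^{-1}p_a(\Delta)\to \mathrm{ess}\inf_M f$ as $\Delta\to 0$. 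For the upper bound ``$\le$'': for any $\delta>0$ there is a positive-measure subset of $M$ on which $f<\mathrm{ess}\inf_M f+\delta$; by the Lebesgue density theorem there is a point of density $1$ of this set lying in $M$ at positive distance from $\partial M$, and for $\Delta$ small the cell of $\hat q_\Delta$ containing it lies in $M$ and has average $\le\mathrm{ess}\inf_M f+2\delta$. For the lower bound ``$\ge$'': each such cell is a subset of $M$, so its average of $f$ is $\ge\mathrm{ess}\inf_M f$ (the essential infimum over $M$ dominates the essential infimum over any measurable subset). Restricting to cells contained in $M$ is precisely what makes this lower bound work and is why the infinite-level $\hat q_\Delta$ is replaced by $q_{\Delta,M}$ — cells straddling $\partial M$ would pick up regions where $f=0$, sending the infimum to $0$. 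Taking logarithms and negating gives the claimed limit $h^{-\infty}(\mu)$.
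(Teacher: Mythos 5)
Your treatment of the $\alpha=-\infty$ case is essentially identical to the paper's: for the ``$\ge$'' direction the containment of the cells of $q_{\Delta,M}$ in $M$ gives $\Delta^{-1}p_a(\Delta)\ge\mathrm{ess}\inf_M f$ directly, and for ``$\le$'' the paper also applies Lebesgue's differentiation theorem to a point of density of $\{f<\mathrm{ess}\inf_M f+\varepsilon\}\cap M$. No issues there.

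For $\alpha\in(-\infty,0]$ you have correctly identified the two essential ingredients — Lebesgue differentiation for a.e.\ convergence of cell averages, and Jensen's inequality (convexity of $t\mapsto t^\alpha$ for $\alpha\le0$) to control the contribution of cells where the average of $f$ is small — and you correctly diagnose the obstruction. However, the high/low split you propose has a genuine gap: splitting by whether a cell \emph{meets} $\{f>\epsilon\}$ does not ensure that $\Delta^{-1}p_i(\Delta)$ is bounded away from zero on the ``high'' cells. A cell can intersect $\{f>\epsilon\}$ on a tiny fraction of its length while $f$ is arbitrarily small on the rest, so the cell average (and hence $(\Delta^{-1}p_i)^\alpha$) can be arbitrarily bad even after localizing to a compact set; the claim ``bounded above and below by positive constants on compacts'' is simply false for such cells, and ``standard dominated convergence'' has no fixed dominator to hand it. Moreover, if one repairs this by dominating via Jensen — $(\Delta^{-1}p_i)^\alpha\Delta\le\int_{\text{cell}}f^\alpha\,d\lambda$ — the dominating object depends on $\Delta$, so ordinary DCT still does not apply.

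The paper avoids the split entirely and uses exactly this observation in one shot. Define, on $\supp(\mu)$, the two piecewise-constant functions
\[
g_{1,\Delta}(x)=\frac1\Delta\int_{\hat q_\Delta^{-1}(\hat q_\Delta(x))}f^\alpha\,d\lambda,
\qquad
g_{2,\Delta}(x)=\Bigl(\frac1\Delta\int_{\hat q_\Delta^{-1}(\hat q_\Delta(x))}f\,d\lambda\Bigr)^{\alpha}.
\]
Jensen gives $0\le g_{2,\Delta}\le g_{1,\Delta}$; Lebesgue differentiation gives $g_{1,\Delta}\to f^\alpha$ and $g_{2,\Delta}\to f^\alpha$ a.e.; and $\int g_{1,\Delta}\,d\lambda=\int_{\supp(\mu)}f^\alpha\,d\lambda<\infty$ is constant in $\Delta$. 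Now the \emph{generalized} dominated convergence theorem (Pratt's lemma; \cite[Chapter 11.4]{Roy68}, cited in the paper) — which permits the dominating function to vary with $\Delta$ provided it converges a.e.\ and its integrals converge — yields $\int g_{2,\Delta}\,d\lambda\to\int_{\supp(\mu)}f^\alpha\,d\lambda$, which is the required limit (\ref{alpha_entr_conv}). So the fix for your gap is not a cleverer choice of threshold $\epsilon$, but invoking Pratt's lemma with $g_{1,\Delta}$ as the moving dominator; once you see this, the split is unnecessary.
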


Next we define the R\'enyi relative entropy between two probability
measures for the case where both have densities. 

\begin{definition}
\label{def_relent}
Let $\mu$ and $\nu$ be probability measures which are absolutely continuous with
respect to $\lambda$. Denote by $f$ and $g$ the densities 
of $\mu$ and $\nu$. 
Moreover, assume that $\mu$ is absolutely continuous with respect to $\nu$ and, therefore,
we assume w.l.o.g.\ that $\{ g = 0 \} \subset \{ f = 0 \}$.
Setting
\[
E = \{f>0 \} \quad \text{ and } \quad M = {} ( \inf (\supp (\mu )), \sup (\supp (\mu )) )
\]
the R\'enyi relative entropy of order $\alpha$ between the
distributions $\mu$ and $\nu$ is defined as
\begin{equation}
\label{def_rel_renyi_entropy}
\mathcal{D}_{\alpha}( \mu \| \nu ) = 
\begin{cases}
\frac{1}{\alpha-1} \log\biggl( \int_{E}
f^{\alpha } g^{1-\alpha} \, d\lambda \biggr), 
 &\alpha \in {} (-\infty, \infty) \setminus \{ 1 \} \\ 
\int_{E} f \log \frac{f}{g}\, d\lambda, &\alpha = 1 \\
\log ( \mathrm{ess } \sup\nolimits_{E} \frac{f}{g} ), &\alpha = \infty \\
\log ( \mathrm{ess } \inf\nolimits_{M} \frac{f}{g} ), &\alpha = -
        \infty.
\end{cases}
\end{equation}
(For $\mathcal{D}_{-\infty}(\mu\|\nu)$ to be well defined, we need the condition
$ \mathrm{ess } \inf\nolimits_{M} f>0$.)
\end{definition}

The following result determines the asymptotics of the
R\'enyi entropy of a companding quantizer.

\begin{proposition}
\label{ref_rel_renyi_entr}
Let $\alpha \in [-\infty , \infty )$.
Suppose   $\mu$ and $\nu$ are as in Definition~\ref{def_relent} and
$\mathcal{D}_{\alpha}(\mu \| \nu )<\infty$. Then  
\[
\lim_{N\to \infty}\biggl(  H_{\mu}^{\alpha}(Q_{g, N})-  \log N\biggr) =
-\mathcal{D}_{\alpha}(\mu \| \nu ).
\] 
\end{proposition}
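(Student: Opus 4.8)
The plan is to compute the cell probabilities of $Q_{g,N}$ explicitly and feed them into the definition of $\hat H^\alpha$. Write $p_{i,N}=\mu(I_{i,N})$ for the probability of the $i$th codecell. By the definition of the companding quantizer, $I_{i,N}=(\hat G((i-1)/N),\hat G(i/N)]$ (with the obvious modification for $i=1$), so $G$ maps $I_{i,N}$ onto an interval of Lebesgue length exactly $1/N$. Changing variables $y=G(x)$, i.e.\ $x=\hat G(y)$, $dx=\hat G'(y)\,dy=1/g(\hat G(y))\,dy$ (valid since $\operatorname{ess\,inf}_I g>0$ forces $g$ positive a.e.\ on the relevant set, so $G$ is a bi-Lipschitz homeomorphism onto $[0,1]$), gives
\[
p_{i,N}=\int_{I_{i,N}} f\,d\lambda=\int_{(i-1)/N}^{i/N}\frac{f(\hat G(y))}{g(\hat G(y))}\,dy
=\int_{(i-1)/N}^{i/N}\phi(y)\,dy,
\]
where $\phi:=(f/g)\circ\hat G$ on $[0,1]$. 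Note $\phi$ is a probability density on $[0,1]$ (its integral is $\mu(I)=1$), and it is precisely the density of $\mu\circ G^{-1}$. So $p_{i,N}$ is just the uniform-quantizer cell probability for the pushforward measure $\mu\circ G^{-1}$ on $[0,1]$ with step size $1/N$; that is, $H_\mu^\alpha(Q_{g,N})=H_{\mu\circ G^{-1}}^\alpha(\hat q_{1/N})$ restricted to $[0,1]$.

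The second step is to identify the limit. Apply Lemma~\ref{ref_lemma_xyz} to the pushforward distribution $\tilde\mu:=\mu\circ G^{-1}$ with $\Delta=1/N$, so that $\log\Delta=-\log N$. This gives
\[
\lim_{N\to\infty}\bigl(H_\mu^\alpha(Q_{g,N})-\log N\bigr)=h^\alpha(\tilde\mu)
\]
for $\alpha\in(-\infty,\infty)$ (and the analogous statement with $\hat q_{\Delta,M}$ for $\alpha=-\infty$, using that $\supp(\tilde\mu)$ sits inside $[0,1]$ and the endpoint cells are handled as in the definition of $\hat G$ on $\{0,1\}$). It remains to check the hypotheses of Lemma~\ref{ref_lemma_xyz} for $\tilde\mu$: that $h^\alpha(\tilde\mu)$ exists and is finite, and that $H_{\tilde\mu}^\alpha(\hat q_{1/N_0})<\infty$ for some $N_0$. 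The finiteness of $H_{\tilde\mu}^\alpha$ for one step size follows because $\tilde\mu$ is a compactly supported distribution with bounded density $\phi$ (bounded above since $\operatorname{ess\,sup}f/g<\infty$ is not assumed directly, but $\operatorname{ess\,inf}_I g>0$ and $f$ is a density give $\int_I f/g\,d\lambda<\infty$; for $\alpha\le 1$ one bounds $\sum p_{i,N_0}^\alpha$ using that only finitely many cells meet $[0,1]$ — there are at most $N_0$ of them — so the sum is trivially finite, and for $\alpha>1$ the sum is $\le(\sum p_{i,N_0})^{\alpha}\cdot(\text{const})$ via a standard $\ell^\alpha\le\ell^1$ comparison).

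The third and final step is the translation $h^\alpha(\tilde\mu)=-\mathcal{D}_\alpha(\mu\|\nu)$. For $\alpha\in(-\infty,\infty)\setminus\{1\}$,
\[
h^\alpha(\tilde\mu)=\frac{1}{1-\alpha}\log\!\int_{\supp(\tilde\mu)}\phi^\alpha\,d\lambda
=\frac{1}{1-\alpha}\log\!\int_{[0,1]}\Bigl(\tfrac{f}{g}\circ\hat G\Bigr)^{\!\alpha}\,d\lambda,
\]
and changing variables back, $y=G(x)$, $dy=g(x)\,dx$, turns this into $\frac{1}{1-\alpha}\log\int_{E}(f/g)^\alpha g\,d\lambda=\frac{1}{1-\alpha}\log\int_E f^\alpha g^{1-\alpha}\,d\lambda=-\mathcal{D}_\alpha(\mu\|\nu)$, matching \eqref{def_rel_renyi_entropy}. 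The case $\alpha=1$ is the parallel computation: $h^1(\tilde\mu)=-\int\phi\log\phi\,d\lambda=-\int_E (f/g)\log(f/g)\cdot g\,d\lambda=-\int_E f\log(f/g)\,d\lambda=-\mathcal{D}_1(\mu\|\nu)$. For $\alpha=-\infty$, $h^{-\infty}(\tilde\mu)=-\log(\operatorname{ess\,inf}_{\tilde M}\phi)=-\log(\operatorname{ess\,inf}_M f/g)=-\mathcal{D}_{-\infty}(\mu\|\nu)$, where one checks that the essential infimum is preserved under the bi-Lipschitz change of variables $\hat G$ (it maps $\lambda$-null sets to $\lambda$-null sets in both directions), and that $\tilde M=(\inf\supp\tilde\mu,\sup\supp\tilde\mu)=G(M)$ up to endpoints. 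Also note $\mathcal{D}_\alpha(\mu\|\nu)<\infty$ is exactly the hypothesis ensuring $h^\alpha(\tilde\mu)$ is finite, closing the hypothesis check in step two.

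The main obstacle is the careful bookkeeping of the change of variables at the two boundary cells and for the $\alpha=\pm\infty$ cases: one must verify that $\hat G$ (the generalized inverse, extended to $\{0,1\}$ via $\supp(\nu)$) is genuinely a bi-measurable, measure-class-preserving map between $(M,\lambda)$ and $((0,1),\lambda)$ so that essential infima/suprema and the absolute-continuity relation $\{g=0\}\subset\{f=0\}$ transfer correctly, and that the finitely-many endpoint cells $I_{1,N}$, $I_{N,N}$ do not affect the limit (they don't, since they contribute $O(1)$ terms to sums that grow like $N$ inside a $\log$, or are absorbed exactly by the $\hat q_{\Delta,M}$ device in the $\alpha=-\infty$ case). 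Everything else is a direct substitution into Lemma~\ref{ref_lemma_xyz}.
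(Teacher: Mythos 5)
Your proof takes essentially the same route as the paper's: identify $H_\mu^\alpha(Q_{g,N}) = H_{\mu\circ G^{-1}}^\alpha(\hat q_{1/N})$, apply Lemma~\ref{ref_lemma_xyz} to the pushforward $\tilde\mu=\mu\circ G^{-1}$, and translate $h^\alpha(\tilde\mu)$ into $-\mathcal{D}_\alpha(\mu\|\nu)$ via a change of variables. The paper organizes this as ``first prove $h^\alpha(\mu\circ G^{-1})=-\mathcal{D}_\alpha(\mu\|\nu)$, then apply Lemma~\ref{ref_lemma_xyz}''; you do the two halves in the opposite order, but the substance is identical.

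There is, however, one genuine gap. You repeatedly invoke $\operatorname{ess\,inf}_I g>0$ to justify the change of variables (``so $G$ is a bi-Lipschitz homeomorphism onto $[0,1]$'') and again to transport essential infima in the $\alpha=-\infty$ case. But this condition is \emph{not} a hypothesis of the proposition: as the paper's Remark immediately following the proposition emphasizes, the only requirement on $f$ and $g$ here is $\mu\ll\nu$ (equivalently $\{g=0\}\subset\{f=0\}$) and $\mathcal{D}_\alpha(\mu\|\nu)<\infty$; $g$ need not be bounded away from zero, nor even compactly supported. Without $\operatorname{ess\,inf}_I g>0$, $\hat G$ is not Lipschitz, $G$ can be constant on intervals where $g$ vanishes, and the naive substitution $dx=\hat G'(y)\,dy$ needs justification. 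The paper handles exactly this via Lemma~\ref{dens_inv_compr}, which shows $\mu\circ G^{-1}\ll\lambda$ with density $f_G(y)=f(\hat G(y))\hat G'(y)$ using only $\{g=0\}\subset\{f=0\}$, and then applies the chain rule of Serrin--Varberg \cite[Corollary 4]{SeVa69}, which requires only monotonicity of $\hat G$ and integrability of the integrand (which you get for free from the finiteness of $\mathcal{D}_\alpha(\mu\|\nu)$). Your argument would go through verbatim under the stronger assumptions used in the companding distortion analysis, but it does not establish the proposition in the stated generality.
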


\begin{remark} (a) For the sake of distortion analysis we previously specified
that $g$ has bounded support, but  in this proposition the only condition on
$f$ and $g$ is the finiteness of
$\mathcal{D}_{\alpha}(\mu\|\nu)$. 

\noindent{(b)} In a sense, the proposition generalizes 
Lemma~\ref{ref_lemma_xyz}. Indeed, if the support of $\mu$ is included
in a compact interval $I$ and $g$ is the uniform density on $I$, then
$Q_{g,N}$ is the uniform quantizer of step-size $\Delta_N=\lambda(I)/N$
over $I$, and the proposition reduces to Lemma~\ref{ref_lemma_xyz}
(for the sequence of step-sizes $\Delta_N$).
\end{remark}

\begin{proof}
Recall the definition of the compressor $G$ from (\ref{def_compress}).
We proceed in two steps. \smallskip \\
\emph{1.} \   We  show that $h^{\alpha }(\mu \circ G^{-1}) = -\mathcal{D}_{\alpha}(\mu \| \nu )$
for every $\alpha \in [- \infty, \infty)$.
\smallskip 

Let $\alpha \in {} (-\infty, \infty ) \setminus \{ 1 \}$ and let $f_G$
be the density of $\mu \circ G^{-1}$ (see Lemma~\ref{dens_inv_compr}
in \ref{appA}).
Definition \ref{ref_Def_renyi_entr_diff} and Lemma~\ref{dens_inv_compr} imply
\begin{eqnarray*}
h^{\alpha }(\mu \circ G^{-1}) & = &  
\frac{1}{1- \alpha } \log  \int (f_{G})^{\alpha }\, d \lambda  \\
&=& 
\frac{1}{1- \alpha } \log  \int (f(\hat{G}(y)) \hat{G}^{\prime }(y))^{\alpha }\, d \lambda(y)  \\
&=&
\frac{1}{1-\alpha}\log\biggl( \int_{\hat{G}^{-1}(E)}
f(\hat{G}(y))^{\alpha} g(\hat{G}(y))^{1-\alpha} \hat{G}^{\prime}(y)\, d \lambda(y) \biggr)  \\  
&=&   \frac{1}{1-\alpha}\log\biggl( \int_{E}
f(x)^{\alpha} g(x)^{1-\alpha} \, d \lambda(x) \biggr)  =
-\mathcal{D}_{\alpha}(\mu \| \nu) 
\end{eqnarray*}
where in the penultimate equality we used again the chain rule for the
Lebesgue integral (see \cite[Corollary 4]{SeVa69}), which is applicable due to
the monotonicity of $\hat{G}$ and the integrability of
$f^{\alpha}g^{1-\alpha}$ (which follows from the finiteness of 
$\mathcal{D}_{\alpha}(\mu \| \nu)$). Note that the above chain of
equalities implies that  $(f_G)^{\alpha}$ is integrable. One can
deduce the assertion of step 1  
for $\alpha \in \{ -\infty, 1 \}$ in a very similar manner.
\smallskip \\ 
\emph{2.} \  Now we prove the assertion of the proposition. Since
$G$ is increasing and continuous, $\hat{G}$ is strictly increasing on
$(0,1)$. Recall the definition of $Q_N$ in (\ref{QNdef}) and note that
$Q_N=\hat{q}_{1/N}$ on $(0,1)$. Then 
\begin{equation}
\label{entropy_ident}
H_{\mu}^{\alpha }(Q_{g,N}) = 
H_{\mu \circ G^{-1}}^{\alpha } ( Q_{N} ) 
\end{equation}
for all $\alpha \in [- \infty, \infty)$.  Since $\mu \circ
G^{-1}((0,1))= 1$, we obtain $H_{\mu \circ G^{-1}}^{\alpha } ( Q_{N} )
= H_{\mu \circ G^{-1}}^{\alpha } ( \hat{q}_{1/N} )$. In view of
(\ref{entropy_ident}) we deduce $H_{\mu}^{\alpha }(Q_{g,N}) = H_{\mu
  \circ G^{-1}}^{\alpha } ( \hat{q}_{1/N} )$.  From step 1 and by the
assumption we know that $h^{\alpha }(\mu \circ G^{-1})$ is finite.
Since $\hat{q}_{1/N}$ has no more than $N$ cells with nonzero $\mu \circ
G^{-1}-$measure, the entropy $H_{\mu \circ G^{-1}}^{\alpha}(\hat{q}_{1/N})$
is also always finite.  Lemma~\ref{ref_lemma_xyz} and step 1 imply
\begin{eqnarray*}
\lim_{N\to \infty}\biggl(H_{\mu }^{\alpha}(Q_{g,N})-  \log N\biggr) &=& 
\lim_{N\to \infty}\biggl(H_{\mu  \circ G^{-1}}^{\alpha}(\hat{q}_{1/N}) - \log N \biggr) \\
&=& h^{\alpha } (\mu \circ G^{-1}) = -\mathcal{D}_{\alpha}(\mu \| \nu).
\end{eqnarray*}
\end{proof}

\begin{remark}
Although we do not need this fact in the sequel it is worth noting 
that Lemma~\ref{ref_lemma_xyz} and Proposition~\ref{ref_rel_renyi_entr} 
are also valid for $\alpha = \infty$. For example, by  an application
of Lebesgue's density theorem one can show that 
\[
\lim_{\Delta \rightarrow 0} 
\frac{\sup \{ \mu(\hat{q}_{\Delta }^{-1}(a)) : a \in \hat{q}_{\Delta}(\mathbb{R})  \} }{\Delta }
= {\mathrm{ess } \sup}_{\mathbb{R}} f,
\]
which yields the assertion of Lemma~\ref{ref_lemma_xyz} for $\alpha = \infty$. 
Generalizing the proof of  Proposition~\ref{ref_rel_renyi_entr}  to
$\alpha = \infty$ is straightforward.

\end{remark}

\subsection{Optimal point densities}

Combining the previous results  we can 
find a companding quantizer  which provides an (asymptotic) upper bound for the
optimal quantization error. Later on we will show that this quantizer is an
asymptotically optimal one. Recall  definition (\ref{lambda12def}) of
$a_1$ and $a_2$.

\begin{corollary}
\label{asymp_quant_comp}
Let $r\ge 1$ and $\alpha \in [-\infty, 1+r )$. Assume that $\mu$ is
  supported on a compact interval $I$ and has  density
  $f$ such that  $\mathrm{ess } \inf_I f > 0$. Moreover, assume that
  $f^{a_1}$ is integrable if $\alpha \in {} (1, 1+r) $ and $f
  \log f$ is integrable if $\alpha = 1$.  Let
\begin{equation}
\label{def_opt_dens}
f^{*} 
=
\begin{cases}
( \int_I f^{1 / a_2}\, d \lambda )^{-1} f^{1 / a_2}, 
 &\alpha \in {} (-\infty, 1+r) \setminus \{ 1 \} \\ 
(\lambda(I))^{-1}1_{I}, &\alpha = 1  \\
f, &\alpha = - \infty.
\end{cases}
\end{equation}
Then,
\begin{equation}  
\label{eq:comp_asymp}
\lim_{N \rightarrow \infty } e^{r H_{\mu }^{\alpha }(Q_{f^{*},N})} D_{\mu }(Q_{f^{*},N})  
= 
\begin{cases}
C(r) (\int_I f^{a_1 }\, d \lambda
  )^{a_2},  & \alpha \in (-\infty, 1+r) \setminus \{ 1 \} \\ 
 C(r) e^{-r \int f \log f\,  d \lambda}, &\alpha = 1\\
C(r) \int_I f^{1-r}\, d \lambda, &\alpha = - \infty.
\end{cases}
\end{equation}
\end{corollary}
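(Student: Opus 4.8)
The plan is to combine Proposition~\ref{bennet_int} (the Bennett integral) with Proposition~\ref{ref_rel_renyi_entr} (R\'enyi entropy asymptotics of companders) by choosing the compressor density $g=f^{*}$ so that the two asymptotic quantities fit together to produce the constant $e^{rH}D\to c$. Concretely, for $\alpha\neq 1,-\infty$ set $g=f^{*}=(\int_I f^{1/a_2}\,d\lambda)^{-1}f^{1/a_2}$; then $\mathrm{ess}\inf_I g>0$ (since $\mathrm{ess}\inf_I f>0$ and $I$ is bounded) and $g$ has compact support $I$, so Proposition~\ref{bennet_int} gives $N^{r}D_{\mu}(Q_{f^{*},N})\to C(r)\int_I f g^{-r}\,d\lambda$. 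For $\alpha=1$ take $g=(\lambda(I))^{-1}1_I$, and for $\alpha=-\infty$ take $g=f$; in both these cases Proposition~\ref{bennet_int} applies verbatim as well. The entropy side uses Proposition~\ref{ref_rel_renyi_entr}: provided $\mathcal{D}_{\alpha}(\mu\|\nu)<\infty$ (with $\nu$ the measure of density $g=f^{*}$), we get $H_{\mu}^{\alpha}(Q_{f^{*},N})-\log N\to -\mathcal{D}_{\alpha}(\mu\|\nu)$, hence $e^{rH_{\mu}^{\alpha}(Q_{f^{*},N})}\sim N^{r}e^{-r\mathcal{D}_{\alpha}(\mu\|\nu)}$.

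Multiplying the two limits, $e^{rH_{\mu}^{\alpha}(Q_{f^{*},N})}D_{\mu}(Q_{f^{*},N})\to C(r)\,e^{-r\mathcal{D}_{\alpha}(\mu\|\nu)}\int_I f g^{-r}\,d\lambda$, so the whole problem reduces to two bookkeeping computations: (a) verify $\mathcal{D}_{\alpha}(\mu\|\nu)<\infty$ for the chosen $g$, and (b) evaluate $e^{-r\mathcal{D}_{\alpha}(\mu\|\nu)}\int_I f g^{-r}\,d\lambda$ and check it equals the claimed right-hand side of (\ref{eq:comp_asymp}). For $\alpha\in(-\infty,1+r)\setminus\{1\}$, with $g=\text{const}\cdot f^{1/a_2}$ one has $f g^{-r}=\text{const}\cdot f^{\,1-r/a_2}$; using $1/a_2=(1-\alpha)/(1-\alpha+r)$ one checks $1-r/a_2=a_1$, so $\int_I f g^{-r}\,d\lambda=(\int_I f^{1/a_2})^{r}\int_I f^{a_1}\,d\lambda$. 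Likewise $\int_E f^{\alpha}g^{1-\alpha}\,d\lambda$ is a constant times $\int_I f^{\alpha+(1-\alpha)/a_2}\,d\lambda$, and $\alpha+(1-\alpha)/a_2=a_1$ as well, giving $\mathcal{D}_{\alpha}(\mu\|\nu)=\frac{1}{\alpha-1}\log\bigl((\int_I f^{1/a_2})^{\alpha-1}\int_I f^{a_1}\bigr)$ — finite precisely because $f^{a_1}$ is integrable (assumed for $\alpha\in(1,1+r)$, automatic for $\alpha<1$ since then $0<1/a_2<1$ or by boundedness). Substituting and simplifying the powers of $\int_I f^{1/a_2}$ and $\int_I f^{a_1}$ (the exponents should collapse using $a_2=(1-\alpha+r)/(1-\alpha)$ and $1/a_2\cdot a_2=1$) yields exactly $C(r)(\int_I f^{a_1}\,d\lambda)^{a_2}$. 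The cases $\alpha=1$ (where $g$ is uniform, $\mathcal{D}_1(\mu\|\nu)=\int f\log f\,d\lambda+\log\lambda(I)$ and $\int_I fg^{-r}=\lambda(I)^{r}$, the $\lambda(I)$ factors canceling) and $\alpha=-\infty$ (where $g=f$, $\mathcal{D}_{-\infty}(\mu\|\nu)=\log\mathrm{ess}\inf_M(f/f)=0$, $\int_I fg^{-r}=\int_I f^{1-r}$) are then immediate.

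I expect the main obstacle to be purely the algebra of exponents: one must carefully verify the two identities $1-r/a_2=a_1$ and $\alpha+(1-\alpha)/a_2=a_1$ and then track how the normalizing constant $(\int_I f^{1/a_2}\,d\lambda)$ enters the distortion (raised to power $r$) versus the relative entropy (raised to power $\alpha-1$, inside a $\frac{1}{\alpha-1}\log$, hence contributing a clean factor after exponentiation by $-r$), and confirm these combine to leave only $(\int_I f^{a_1})^{a_2}$. A secondary, genuinely substantive point is justifying finiteness of $\mathcal{D}_{\alpha}(\mu\|\nu)$ in all regimes: for $\alpha<1$ one should note $f^{1/a_2}$ is integrable because $f$ is bounded and $\lambda(I)<\infty$ when $1/a_2\ge 1$, and integrability is automatic (dominated by $f$ up to a constant) when $1/a_2<1$, i.e.\ the condition $\mathrm{ess}\inf_I f>0$ together with boundedness of $f$ and $I$ covers everything; for $\alpha\in(1,1+r)$ the hypothesis that $f^{a_1}$ is integrable is exactly what is needed, and one should also observe $1/a_2>0$ there so $f^{1/a_2}$ is again bounded hence integrable over the compact $I$. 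Once these finiteness checks are in place, the proof is a direct application of the two propositions followed by simplification.
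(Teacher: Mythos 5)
Your proposal is correct and follows essentially the same route as the paper: choose the compressor density $g=f^{*}$, apply Proposition~\ref{bennet_int} for the distortion limit and Proposition~\ref{ref_rel_renyi_entr} for the entropy limit, and combine. Your exponent identities $1-r/a_2=a_1$, $\alpha+(1-\alpha)/a_2=a_1$, and $1-r/(\alpha-1)=a_2$ all check out, and the finiteness/integrability bookkeeping (using compactness of $I$ and $\mathrm{ess}\inf_I f>0$) matches the paper's brief argument.
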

\begin{proof}
It is not hard to show using H\"older's inequality that $f^{a_1}1_I$
is integrable for every $\alpha \in {} (-\infty, 1+r) \setminus \{ 1
\}$ (cf.\ \cite[Remark 6.3 (a)]{GrLu00}).  Moreover $f^{1-r}$ is
integrable.  Clearly, $f^{1 / a_2} 1_{I}$ is integrable for every
$\alpha \in {} (-\infty, 1+r)$. These facts imply that $f^*$ is well
defined (note that $\mathrm{ess } \inf_I f^* > 0$), $\int f/(f^*)^r \,
d\lambda<\infty$, and the integrals on the right hand side of
(\ref{eq:comp_asymp}) are finite.  It is also easy to check that $
\mathcal{D}_{\alpha }(\mu \| f^{*} \lambda )$ is finite. Thus we can
apply Propositions~\ref{ref_rel_renyi_entr} and \ref{bennet_int}. We
obtain
\begin{eqnarray*}
 \lim_{N \rightarrow \infty } e^{r H_{\mu }^{\alpha }(Q_{f^{*},N})} D_{\mu }(Q_{f^{*},N}) 
&=& 
\lim_{N \rightarrow \infty } e^{- r \mathcal{D}_{\alpha }(\mu \| f^{*} \lambda )} N^{r} D_{\mu }(Q_{f^{*},N}) 
\nonumber \\
&=& 
e^{- r \mathcal{D}_{\alpha }(\mu \| f^{*} \lambda )} C(r) \int_I
\frac{f}{(f^{*})^{r}}\,  d \lambda.
\end{eqnarray*}
Now (\ref{def_opt_dens}) and (\ref{def_rel_renyi_entropy}) yield the assertion.
\end{proof}

\begin{remark} For $\alpha\in [-\infty,1+r)$ the point density $g=f^*$
  in the corollary minimizes the asymptotic 
  performance $\lim\limits_{N\to \infty} e^{r H_{\mu }^{\alpha
    }(Q_{g,N})} D_{\mu }(Q_{g,N})$.  For $\alpha=0$ and
  $\alpha=1$ this optimal choice of $g$ has long been
  known. In the  case 
   $\alpha\in (-\infty,1+r)\setminus\{1\}$, 
  by Propositions \ref{bennet_int} and \ref{ref_rel_renyi_entr} the
  above limit is proportional to 
\[
\left( \int_I f^{\alpha } g^{1-\alpha} \, d \lambda
\right)^{\frac{r}{1-\alpha }} \int_I f g^{-r}\,  d \lambda
\]
and H\"older's inequality (for $\alpha<1$) or the reverse H\"older
inequality (for $\alpha\in (1,1+r)$) can be used to show that this
functional is minimized by $g=f^*$. The resulting minimum is $\bigl(\int_I
f^{a_1 }\, d \lambda \bigr)^{a_2}$. The case $\alpha=-\infty$ follows by
letting $\alpha\to -\infty$.
\end{remark}

\section{Some important properties of optimal scalar quantization}
\label{sec_fund}

Define
\[
i(f) = \mathrm{ess } \inf\nolimits_{\supp (\mu )} f , \qquad 
s(f) = \mathrm{ess } \sup\nolimits_{\supp (\mu )} f .
\]
For the case  $\alpha =0$ the following result is originally due to Pierce (\cite{Pie70}, 
\cite[Lemma 6.6]{GrLu00}). In our proof, given in \ref{appB},  we use
a refined version  
provided by Luschgy and Pag\`es \cite[Lemma 1]{LuPa08}.

\begin{proposition}
\label{ref_lemm_pierce}
(i) If $R \geq 1$ and $\int | x | ^{r+\beta} \, d \mu (x) < \infty$
for some $\beta > 0$,  
then there exists a constant $C_0 > 0$  (which depends only on $r$ and
$\beta$) such that 
\[
e^{rR} D_{\mu }^{\alpha }(R) \leq   
C_0 \left( \int  | x | ^{r+\beta }\,  d\mu (x)  \right)^{r/(r + \beta )} 
\]
for every $\alpha \in [0, \infty ]$. \\
(ii) Suppose $\supp(\mu)$  is a compact interval and  $\mu$
absolutely continuous with respect to $\lambda$ 
with density  $f$. Assume that $i(f) > 0$.
Then for all  $\alpha < 0$
\begin{equation}
\label{upp_bou_neg_alp}
e^{rR}D_{\mu}^{\alpha }(R) \leq \frac{2^{r}}{i(f)^{r}}.
\end{equation}

\end{proposition}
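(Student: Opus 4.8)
The plan is to prove the two parts by exhibiting explicit quantizers that meet the entropy budget while controlling the distortion. For part (i), since $D_{\mu}^{\alpha}(R)$ is nonincreasing in $\alpha$ by Lemma~\ref{lemm_mono}, it suffices to prove the bound for $\alpha=0$, i.e.\ for fixed-rate quantization, where $H^{0}_{\mu}(q)\le R$ just means the codebook has at most $\lfloor e^{R}\rfloor$ points. This is precisely the content of Pierce's lemma; I would invoke the refined version of Luschgy and Pag\`es \cite[Lemma 1]{LuPa08}. Concretely, one takes a quantizer whose cells are a union of finitely many equal-length intervals covering a central region $[-T,T]$ together with a geometrically growing sequence of intervals in the tails, chosen so that the tail contributions are controlled by the moment $\int |x|^{r+\beta}\,d\mu$ via a Markov/H\"older argument. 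Optimizing the number of central cells against $T$ and the number of tail cells gives the exponent $r/(r+\beta)$ and a constant $C_0$ depending only on $r$ and $\beta$. Since the details are exactly as in \cite{GrLu00, LuPa08}, I would keep this part brief and defer the bookkeeping to \ref{appB}.

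For part (ii), the idea is simpler because $\supp(\mu)$ is a compact interval $I$ and $i(f)>0$. Fix $R\ge 0$ (WLOG $R>0$) and let $N=\lfloor e^{R}\rfloor$. Partition $I$ into $N$ subintervals of equal length $\ell=\lambda(I)/N$, with codepoints at the midpoints; call this quantizer $q_N$. The distortion is bounded cellwise: on each cell the deviation from the midpoint is at most $\ell/2$, so $D_{\mu}(q_N)\le (\ell/2)^{r}=(\lambda(I)/(2N))^{r}$. For the entropy, since $\alpha<0$ the quantity $\hat H^{\alpha}$ is an \emph{increasing} function of the vector of cell probabilities in the sense that $\hat H^{\alpha}(p)=-\log\bigl(\inf_i p_i\bigr)$ when $\alpha=-\infty$ and, for $\alpha\in(-\infty,0)$, $\hat H^{\alpha}(p)\le \frac{1}{1-\alpha}\log\bigl(N (\min_i p_i)^{\alpha}\bigr)$ — the point being that making the smallest cell probability larger only helps. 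Because $i(f)>0$, every cell has $\mu$-measure at least $i(f)\,\ell = i(f)\lambda(I)/N$, and a short computation shows $H^{\alpha}_{\mu}(q_N)\le \log N + O(1)$; in fact one can check directly that $H^{\alpha}_{\mu}(q_N)\le \log N$ is achievable after a slight adjustment, or simply that $H^{\alpha}_{\mu}(q_N)\le R$ holds for the uniform-on-$I$ choice since then all relevant bounds are dominated by $\log N\le R$. This gives $e^{rR}D_{\mu}^{\alpha}(R)\le e^{rR}(\lambda(I)/(2N))^{r}=(\lambda(I)/2)^{r}(e^{R}/N)^{r}\le (\lambda(I)/2)^{r}\cdot 2^{r}$ for $N\ge 1$... but this is not yet $2^{r}/i(f)^{r}$.

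The main obstacle is getting the \emph{sharp} constant $2^{r}/i(f)^{r}$ rather than a constant involving $\lambda(I)$. The fix is to not use a uniform partition of $I$ but rather one adapted to the constraint: we want $N$ cells each of $\mu$-measure at least some threshold, and we should pack cells only where $f$ is large enough, while the small-$f$ region carries negligible mass. More precisely, I would instead use the companding quantizer $Q_{g,N}$ with $g=\tilde f$ the uniform density on $I$ is wrong; rather take $g$ proportional to a truncation/modification ensuring $\mathrm{ess\,inf}_I g>0$ and letting $N\to\infty$; but for a \emph{finite-}$R$ bound the cleanest route is: choose the $N$ cells so that consecutive $G$-values are equally spaced where $G$ is the c.d.f.\ of $\mu$ itself restricted so each cell has mass exactly $1/N$. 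Then $H^{\alpha}_{\mu}(q_N)=\hat H^{\alpha}(1/N,\dots,1/N)=\log N\le R$ for every $\alpha$, so the entropy constraint is met exactly. Each such cell $[a,b]$ has $\mu([a,b])=1/N$ and, since $f\ge i(f)$ on $I$, length $b-a\le 1/(N\,i(f))$; placing the codepoint at an endpoint (permissible since $q\in\mathcal Q^{c}$) gives contribution $\le (b-a)^{r}\mu([a,b])\le (N i(f))^{-r}\cdot N^{-1}$, summing over $N$ cells to $D_{\mu}(q_N)\le (N i(f))^{-r}$. Hence $e^{rR}D_{\mu}^{\alpha}(R)\le e^{rR}(N i(f))^{-r}\le (e^{R}/N)^{r} i(f)^{-r}\le 2^{r} i(f)^{-r}$, using $e^{R}/N\le e^{R}/(e^{R}-1)\le 2$ for... actually $e^{R}/\lfloor e^{R}\rfloor \le 2$ fails near $R=0$; one takes $R\ge \log 2$ or notes $D^{\alpha}_{\mu}$ is nonincreasing so it suffices to bound it for large $R$ where $e^{R}/\lfloor e^{R}\rfloor\to 1$, and then the bound $2^{r}/i(f)^{r}$ is comfortably satisfied. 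Centering the codepoint in each cell would even give the sharper $2^{-r}(Ni(f))^{-r}$-type estimate, but the endpoint placement already suffices for \eqref{upp_bou_neg_alp}. I would present this adapted-partition construction as the core of part (ii) and relegate the elementary verification that $\hat H^{\alpha}(1/N,\dots,1/N)=\log N$ for all $\alpha$ to a one-line remark.
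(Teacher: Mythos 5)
Your proposal is correct, and both parts track the paper closely. Part (i) is identical: reduce to $\alpha=0$ via Lemma~\ref{lemm_mono} and invoke \cite[Lemma 1]{LuPa08}. Part (ii) is a mild variant of the paper's argument: the paper takes an arbitrary admissible quantizer and repeatedly subdivides cells until every cell has $\mu$-mass in $[e^{-R},2e^{-R})$, whereas you directly build a partition into $N=\lfloor e^{R}\rfloor$ cells of exactly equal $\mu$-mass $1/N$. The core estimate---cell length bounded by $\mu(\text{cell})/i(f)$, then sum $(\text{length})^{r}\mu(\text{cell})$---is the same in both. Your explicit construction has the minor advantage that $H_{\mu}^{\alpha}(q_N)=\log N$ for every order $\alpha$ simultaneously, which sidesteps the paper's preliminary monotonicity reduction to $\alpha=-\infty$.

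Two small remarks. First, your worry that $e^{R}/\lfloor e^{R}\rfloor\le 2$ might fail near $R=0$ is unfounded: for any $R\ge0$ we have $e^{R}\ge1$ so $k:=\lfloor e^{R}\rfloor\ge1$, and $e^{R}<k+1\le 2k$, hence $e^{R}/\lfloor e^{R}\rfloor<2$; the detour via ``restrict to $R\ge\log2$ or use monotonicity in $R$'' is unnecessary. Second, the first attempt via equal-\emph{length} cells with midpoint codepoints (which you correctly abandoned) never recovers $i(f)$ and only yields a constant involving $\lambda(I)$; the equal-\emph{mass} version is the one that matches the paper's bound, and your self-correction lands exactly there.
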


As an immediate consequence we obtain the following.
\begin{corollary}
\label{ref_coro_lim0}
Under either condition (i) or (ii) of
Proposition~\ref{ref_lemm_pierce} we have 
$\lim_{R \rightarrow \infty } D_{\mu }^{\alpha }(R) = 0$.
\end{corollary}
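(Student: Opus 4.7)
The plan is to observe that Proposition~\ref{ref_lemm_pierce} is already doing all the work: each of its bounds on $D_\mu^\alpha(R)$ contains the factor $e^{-rR}$ (after dividing through by $e^{rR}$), while the remaining factors on the right-hand side are constants independent of $R$. The corollary is then a one-line consequence of letting $R \to \infty$.

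More concretely, under condition (i) I would pick any $\alpha \in [0,\infty]$ and use Proposition~\ref{ref_lemm_pierce}(i) in the form
\begin{equation*}
D_\mu^\alpha(R) \;\le\; C_0\,\Bigl( \int |x|^{r+\beta}\,d\mu(x) \Bigr)^{r/(r+\beta)}\, e^{-rR}
\end{equation*}
valid for all $R \ge 1$. The hypothesis $\int |x|^{r+\beta}\,d\mu(x) < \infty$ makes the prefactor finite and $R$-independent, so $D_\mu^\alpha(R) \to 0$ as $R \to \infty$. Under condition (ii), the same argument with the bound (\ref{upp_bou_neg_alp}) gives, for any $\alpha < 0$,
\begin{equation*}
D_\mu^\alpha(R) \;\le\; \frac{2^r}{i(f)^r}\, e^{-rR} \;\longrightarrow\; 0
\end{equation*}
since $i(f) > 0$.

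I would also note for completeness that condition (ii) implies condition (i): compactness of $\supp(\mu)$ forces all moments of $\mu$ to be finite, so under (ii) the statement actually holds for every $\alpha \in [-\infty,\infty]$ by combining the two bounds. There is no real obstacle in the argument; the only ``content'' is the observation that Proposition~\ref{ref_lemm_pierce} produces a bound of the shape $(\text{constant}) \cdot e^{-rR}$, and the corollary is just the trivial statement that such a quantity vanishes at infinity.
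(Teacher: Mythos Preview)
Your proposal is correct and matches the paper's own treatment: the paper simply states ``As an immediate consequence we obtain the following'' and gives no further proof, which is exactly the one-line observation you spell out (divide the bounds of Proposition~\ref{ref_lemm_pierce} by $e^{rR}$ and send $R\to\infty$). Your side remark that the hypotheses of (ii) force all moments to be finite, hence also cover $\alpha\ge 0$, is accurate and harmless, though not needed for the corollary as stated.
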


Let $\diam (A) = \sup \{ | x - y | : x,y \in A \}$ denote the diameter
of an arbitrary non-empty set $A \subset \mathbb{R}$. The next result
shows that the measure of the codecells of optimal quantizers tends to
zero for absolutely continuous distributions.  The proof, given in
\ref{appB}, adopts some techniques of Gray \emph{et al$.$} \cite[Proof
of Lemma 11]{GrLiLi02}.

\begin{lemma}
\label{ref_lemm_gray}
Let $\mu$ be absolutely continuous 
with respect to 
$\lambda $ having density $f$.
Assume further either of the following conditions
\begin{itemize}
\item[(i)] $\alpha \in [0, \infty]$ and $\int | x | ^{r+\beta}\, d \mu
  (x) < \infty$ for some $\beta > 0$,
\item[(ii)] $\alpha < 0$ and $\supp(\mu)$ is a compact 
interval  and $0<i(f) \leq s(f) < \infty$. 
\end{itemize}
Then for every $\varepsilon > 0$ there exists an  $R_{0}>0$ 
with the property that for every $R \geq R_{0}$ 
there is a  $\delta > 0$ such that
\begin{equation}
\label{ref_iequ_b_01}
\max \{ \mu ( q^{-1}(a) ) : a \in q(\mathbb{R})  \}  < \varepsilon 
\end{equation}
for every $q \in \mathcal{Q}$ with $H_{\mu}^{\alpha }(q) \leq R$ and
$| D_{\mu }(q) - D_{\mu }^{\alpha }(R) | < \delta$.
If, additionally, in case (i) the support of $\mu$ consists of $m \geq 1$ compact intervals $I_{1},\ldots,I_{m}$ 
and $i(f) > 0$, then in both cases (i) and (ii) we have
\begin{equation}
\label{ref_equ_onedim} 
\max \{ \diam ( q^{-1}(a) \cap I_{i} ) :  a \in q(\mathbb{R}), i \in \{ 1,\ldots,m \} \} < \varepsilon \cdot i(f)^{-1}
\end{equation}
where $m=1$ and $I_1=I$ for case (ii).
\end{lemma}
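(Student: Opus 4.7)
I will prove (\ref{ref_iequ_b_01}) by a direct contradiction argument based on Markov's inequality together with the fact that the optimal distortion $D_\mu^\alpha(R)$ tends to zero (Corollary~\ref{ref_coro_lim0}), then deduce (\ref{ref_equ_onedim}) immediately from (\ref{ref_iequ_b_01}), the interval structure of the cells in $\mathcal{Q}^c$, and the lower density bound $i(f)>0$.

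\textbf{Proof plan for (\ref{ref_iequ_b_01}).} Fix $\varepsilon>0$. Since $f$ is integrable, dominated convergence yields $M>0$ with $\mu(\{f>M\})<\varepsilon/2$ (in case (ii) we may simply take $M=s(f)$). Corollary~\ref{ref_coro_lim0} guarantees $D_\mu^\alpha(R)\to 0$, so for a threshold $\tau>0$ to be fixed later I select $R_0$ with $D_\mu^\alpha(R)<\tau/2$ for $R\ge R_0$ and set $\delta=\tau/2$, thereby forcing $D_\mu(q)<\tau$ for every admissible $q$. Now suppose for contradiction that some codecell $S=q^{-1}(c)$ satisfies $\mu(S)\ge \varepsilon$, and write $d_S=\int_S|x-c|^r\,d\mu\le D_\mu(q)<\tau$. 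Put $S^{\le}=S\cap\{f\le M\}$, so that $\mu(S^{\le})\ge \varepsilon/2$. Markov's inequality gives $\mu(\{x\in S^{\le}:|x-c|\ge \eta\})\le d_S/\eta^r$ for every $\eta>0$, and the choice $\eta:=(4d_S/\varepsilon)^{1/r}$ yields $\mu(\{x\in S^{\le}:|x-c|<\eta\})\ge \varepsilon/4$. On the other hand, this set is contained in $\{f\le M\}\cap(c-\eta,c+\eta)$, whose $\mu$-measure is at most $2M\eta$. Combining the two bounds produces $d_S\ge \varepsilon^{r+1}/(4\cdot 8^r M^r)$, and choosing $\tau$ strictly smaller than this constant contradicts $d_S<\tau$.

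\textbf{Proof plan for (\ref{ref_equ_onedim}) and main obstacle.} Under the additional hypotheses each $q\in\mathcal{Q}^c$ has interval cells, hence $q^{-1}(a)\cap I_i$ is an interval and its diameter equals its Lebesgue measure. Since $f\ge i(f)$ a.e.\ on $I_i\subset \supp(\mu)$, one obtains $\mu(q^{-1}(a)\cap I_i)\ge i(f)\cdot\diam(q^{-1}(a)\cap I_i)$; combining with $\mu(q^{-1}(a)\cap I_i)\le \mu(q^{-1}(a))<\varepsilon$ from (\ref{ref_iequ_b_01}) gives $\diam(q^{-1}(a)\cap I_i)<\varepsilon/i(f)$. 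The main subtlety lies in the concentration step above: a cell carrying mass at least $\varepsilon$ yet contributing arbitrarily little to the total distortion must concentrate most of its mass in an arbitrarily short interval around $c$, and the truncation $\{f\le M\}$ is precisely what rules this out. Setting up the coupling between near-optimality, the Markov estimate, and the density truncation—uniformly in $\alpha\in[-\infty,\infty]$—is the real work; this route avoids the more delicate splitting/recoding constructions of Gray \emph{et al.}\ \cite{GrLiLi02} and treats both cases (i) and (ii) in one stroke.
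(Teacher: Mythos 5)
Your proof is correct, and for the key inequality (\ref{ref_iequ_b_01}) you take a genuinely different route from the paper. The paper truncates the density both above and below (plus a spatial truncation), passes from $\mu$-measure to Lebesgue measure, and then invokes the ball-rearrangement lower bound $\int_A |x-a|^r\,d\lambda \ge \int_{B(a,\lambda(A)/2)}|x-a|^r\,d\lambda$ from \cite[Lemma~2.8]{GrLu00} to show that any cell meeting the truncated set in large Lebesgue measure must incur a distortion $\ge \kappa\,\lambda(q^{-1}(a)\cap A_{c,t})^{1+r}$. You instead argue by contradiction with Markov's inequality: a cell of $\mu$-mass $\ge\varepsilon$ whose distortion contribution is small must concentrate at least $\varepsilon/4$ of its mass within radius $\eta=(4d_S/\varepsilon)^{1/r}$ of the codepoint, and the upper density cap $f\le M$ then bounds that local mass by $2M\eta$, forcing the explicit lower bound $d_S\ge \varepsilon^{r+1}/(4\cdot8^r M^r)$. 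Your version only needs the upper truncation $\{f\le M\}$ (the paper's lower truncation $\{f\ge c\}$ is used solely to convert $d\mu$ into $c\,d\lambda$ before rearranging, a step you avoid entirely), and it avoids citing the rearrangement lemma; the constants are less elegant but entirely explicit. The bookkeeping of $\varepsilon\mapsto M\mapsto\tau\mapsto R_0\mapsto\delta$ closes correctly, and the invocation of Corollary~\ref{ref_coro_lim0} under the two hypothesis sets is exactly what the paper uses too. For (\ref{ref_equ_onedim}) your argument is essentially identical to the paper's: interval cells (via $\mathcal{Q}^c$, justified by Lemma~\ref{lemm_mono}) give $\lambda(q^{-1}(a)\cap I_i)=\diam(q^{-1}(a)\cap I_i)$, then $f\ge i(f)$ on $I_i$ plus (\ref{ref_iequ_b_01}) finishes.
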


Let $\mu$ be absolutely continuous with respect to $\lambda$
and denote the  density of $\mu$ with $f$.
Let 
\begin{equation}
\label{const_quant}
C=\left( \frac{i(f)}{s(f)} \right)^{\frac{r+1}{r}}
\left(\frac{1}{4^{r}(1+r)}\right )^{1/r} \in (0, 1) . 
\end{equation}
For any $q \in \mathcal{Q}$ let 
\[
N_{q} = \{ a \in q(\mathbb{R}) : \mu (q^{-1}(a)) > 0 \} .
\]
In the case  $\alpha < 0$ we need to control in our proofs
the cardinality of the codebook of any quantizer whose entropy is
less than or equal to the rate constraint $R$.  To this end, for $R 
\geq 0$, we define 
\[
\mathcal{H}_{R} = 
\{ q \in \mathcal{Q}^c:  \quad H_{\mu }^{\alpha }(q) \leq R, \quad 
C e^{R} \leq \card ( N_{q} ) \leq e^{R}  \} .
\]
In addition, we will have to control the difference between the rate
constraint  and the entropy of the quantizer.
Thus, for $\alpha \in (- \infty , 0)$, arbitrary constant $\kappa>0$, 
and $R> \log ( \frac{2^{1-\alpha }-1}{\kappa} )$, we define
\[
\mathcal{K}_{R} = \mathcal{K}_{R}(\kappa)=
\left\{ q \in \mathcal{H}_{R} :
e^{R - H_{\mu }^{\alpha }(q)} \leq \biggl( \frac{1}{1-(2^{1-\alpha
  }-1)\kappa^{-1}e^{-R}} \biggr)^{1/(1-\alpha )} \right\}. 
\]
The next lemma is proved in \ref{appB}.

\begin{lemma}
\label{pierce_neg_para}
Let $\mu$ be absolutely continuous 
with respect to 
$\lambda $ having density $f$. 
Assume that $\supp (\mu )$  is a  compact interval and  $0 < i(f)  \leq s(f)  < \infty .$
For every $\alpha \in [-\infty, 0]$ and $R \geq 0$ we have 
\begin{equation}
\label{dmuealphrhr}
D_{\mu }^{\alpha }(R) = \inf \{ D_{\mu }(q) : q \in \mathcal{H}_{R} \}.
\end{equation}
If $\alpha \in (- \infty , 0)$ and $R> \log ( \frac{2^{1-\alpha }-1}{C} )$, then 
\begin{equation}
\label{dmuealphrkr}
D_{\mu }^{\alpha }(R) = \inf \{ D_{\mu }(q) : q \in \mathcal{K}_{R}(C) \}.
\end{equation}
\end{lemma}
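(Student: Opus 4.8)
The plan is to establish the two identities by showing that quantizers outside the restricted classes $\mathcal{H}_R$, resp.\ $\mathcal{K}_R(C)$, cannot beat (or even come close to) the infimum $D_\mu^\alpha(R)$, so that restricting to these classes does not change the infimum. For \eqref{dmuealphrhr}, the first ingredient is the upper bound from Proposition~\ref{ref_lemm_pierce}(ii): for $\alpha<0$ we have $D_\mu^\alpha(R)\le 2^r e^{-rR}/i(f)^r$, and by monotonicity (Lemma~\ref{lemm_mono}) the same bound holds for $\alpha=-\infty$ as well, while for the trivial boundary we may take $R$ large enough that this is meaningful (for small $R$ one checks \eqref{dmuealphrhr} directly, or notes both sides equal the same infimum over a smaller class). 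The point is that any candidate quantizer $q$ we need to consider may be assumed to satisfy $D_\mu(q)\le 2 D_\mu^\alpha(R)$, say; combined with the lower bound on the per-cell distortion of a uniform-type quantizer restricted to an interval of length $\mathrm{diam}(q^{-1}(a))$, this forces the cells to be short, hence forces $\card(N_q)$ to be large — at least $Ce^R$ after plugging in the constant $C$ from \eqref{const_quant}. This is exactly where the somewhat mysterious value of $C$ comes from: it is tuned so that a quantizer on a compact interval with fewer than $Ce^R$ cells necessarily has distortion exceeding what Proposition~\ref{ref_lemm_pierce}(ii) guarantees is achievable. The upper bound $\card(N_q)\le e^R$ is the easy direction: since $\alpha<0$ (or $\alpha=-\infty$), having $n$ cells of positive measure forces $H_\mu^\alpha(q)\ge \log n$ (for $\alpha\in(-\infty,0)$, $\frac{1}{1-\alpha}\log(\sum p_i^\alpha)\ge\frac{1}{1-\alpha}\log(n\cdot(\max_i p_i)^\alpha)\ge\ldots$; more simply each $p_i\le 1$ gives $p_i^\alpha\ge 1$, so $\sum p_i^\alpha\ge n$ and $H^\alpha\ge\frac{\log n}{1-\alpha}$ — actually one gets $H^\alpha\ge\log n$ since $\hat H^\alpha$ is nondecreasing in $\alpha$ and $\hat H^0=\log n$), so $H_\mu^\alpha(q)\le R$ forces $n\le e^R$. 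Thus the infimum in \eqref{dmuealphrhr} is unchanged by intersecting with $\mathcal{H}_R$.

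For the second identity \eqref{dmuealphrkr}, we start from \eqref{dmuealphrhr} and must further restrict to quantizers whose entropy is close to $R$ in the precise sense defining $\mathcal{K}_R(C)$. The idea is that a quantizer $q\in\mathcal{H}_R$ with entropy strictly below $R$ has "rate to spare," and one can refine it — split some cells — to produce a quantizer $q'$ with strictly smaller distortion and entropy still $\le R$, contradicting near-optimality unless $q$ was already close to the entropy budget. Quantitatively, for $\alpha\in(-\infty,0)$ with $n=\card(N_q)$ cells, the smallest cell has probability $p_{\min}$, and $\sum_{i}p_i^\alpha = e^{(1-\alpha)H_\mu^\alpha(q)}$; the constraint $Ce^R\le n\le e^R$ together with $H_\mu^\alpha(q)\le R$ and the elementary bound relating $e^{(1-\alpha)H}$ to $n$ and the largest term $p_{\min}^\alpha$ (note $\alpha<0$ makes the smallest probability the largest contributor) yields a lower bound on $p_{\min}$ of the form $p_{\min}\ge$ (something like) $(e^{(1-\alpha)R}-(n-1))^{1/\alpha}$; then splitting that cell into two halves changes the entropy contribution by a controlled amount, and the requirement that the new entropy not exceed $R$ translates — after a short computation with the $2^{1-\alpha}-1$ factor visible in the definition — into precisely the inequality $e^{R-H_\mu^\alpha(q)}\le\bigl(1-(2^{1-\alpha}-1)C^{-1}e^{-R}\bigr)^{-1/(1-\alpha)}$ defining $\mathcal{K}_R(C)$. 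So any $q\in\mathcal{H}_R\setminus\mathcal{K}_R(C)$ can be strictly improved while staying feasible, hence is not a limit point of optimal quantizers, and the infimum over $\mathcal{K}_R(C)$ equals $D_\mu^\alpha(R)$. The condition $R>\log\bigl(\frac{2^{1-\alpha}-1}{C}\bigr)$ is exactly what is needed for the right-hand side of that defining inequality to be a well-defined real number greater than $1$.

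The main obstacle, and where the real work lies, is the second part: carefully executing the cell-splitting argument so that the improvement in distortion is genuinely strict and the bookkeeping on the R\'enyi entropy — which, unlike Shannon entropy, does not decompose additively over cells — yields exactly the stated threshold. One has to control $\sum_i p_i^\alpha$ under splitting a single cell of mass $p$ into two cells of masses $\theta p$ and $(1-\theta)p$: the sum changes by $p^\alpha(\theta^\alpha+(1-\theta)^\alpha-1)$, which for $\alpha<0$ is maximized at $\theta=1/2$ with value $p^\alpha(2^{1-\alpha}-1)$, and one must argue that the cell we choose to split (the one of minimal mass, to keep the entropy increase small, but with the distortion on that cell large enough to gain) admits a split that keeps $H_\mu^\alpha\le R$ — this is a nonatomicity/absolute-continuity point (every interval cell of positive $\mu$-measure can be split into two subintervals of prescribed positive masses). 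Handling the boundary cases $\alpha=-\infty$ (where $\hat H^{-\infty}$ depends only on the minimal positive probability, so the splitting analysis degenerates nicely) and the small-$R$ regime for \eqref{dmuealphrhr} requires separate but routine remarks. I would present the $\alpha\in(-\infty,0)$ argument in full and dispatch $\alpha=-\infty$ and $\alpha=0$ with short comments, relegating the most computational estimates to \ref{appB} as the paper already signals.
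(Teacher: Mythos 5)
Your high-level plan is the same as the paper's: for (\ref{dmuealphrhr}) bound the number of codecells $n=\card(N_q)$ both above (from the entropy constraint) and below (from the distortion bound $D_{\mu}^{\alpha}(R)\le 2^r e^{-rR}/i(f)^r$ combined with the lower bound on per-quantizer distortion), and for (\ref{dmuealphrkr}) iteratively split a cell to drive the entropy up to within the required tolerance of $R$. The skeleton is right, but several of the details are stated backwards in ways that would break the argument.

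First, a small slip: you write that $\hat{H}^{\alpha}$ is nondecreasing in $\alpha$; it is \emph{nonincreasing}. Your conclusion $n\le e^R$ is still correct (since for $\alpha<0$ one has $\hat{H}^{\alpha}\ge \hat{H}^{0}=\log n$), but as written the justification does not give what you claim. (The paper instead applies Jensen's inequality to $x\mapsto x^{1/(1-\alpha)}$, which gives $e^{H^{\alpha}_{\mu}(q)}\ge n$ directly.)

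More seriously, the cell-splitting step is specified in the wrong direction twice, and these two errors are inconsistent with your own stated motivation. You propose to split the cell of \emph{minimal} mass ``to keep the entropy increase small.'' For $\alpha<0$, $p\mapsto p^{\alpha}$ is decreasing, so the minimal-mass cell is the one with the \emph{largest} term $p^{\alpha}$ in $\sum_i p_i^{\alpha}$, and splitting it produces the \emph{largest} increase $(2^{1-\alpha}-1)p^{\alpha}$. The paper splits the cell with \emph{maximal} mass $p=p_{\max}$, which contributes the smallest $p^{\alpha}$ and hence yields the smallest entropy increase. You also assert that $\theta\mapsto \theta^{\alpha}+(1-\theta)^{\alpha}-1$ is maximized at $\theta=1/2$ for $\alpha<0$; it is in fact \emph{minimized} there (the map is convex on $(0,1)$ with a unique interior minimum), which is again exactly why the equal-mass split is the cautious choice. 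With these two reversals, the stopping condition you would obtain after iterating is not the one that defines $\mathcal{K}_R(C)$: the derivation in the paper hinges on the chain $e^{(1-\alpha)R}\ge e^{(1-\alpha)H_{\mu}^{\alpha}(q)}\ge \card(N_q)\,p_{\max}^{\alpha}\ge Ce^{R}p_{\max}^{\alpha}$, which uses precisely that $p_{\max}^{\alpha}$ is the \emph{smallest} term in the sum (so $n$ copies of it lower bound the sum), and this is why it is the max-mass cell that must be split. Splitting the minimal-mass cell gives no such bound. In short: correct plan, but the extremization directions in the heart of the (\ref{dmuealphrkr}) argument are wrong, and as written the computation would not close.
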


We let $U(I)$ denote the uniform distribution on a bounded interval $I
\subset \mathbb{R}$ with positive length.  Let $m \geq 2$ and let
$I_{1},\ldots,I_{m}$ be a partition of $I$ into $m$ intervals of equal
length $\diam (I) /m$. Let $s_{1},\ldots,s_{m} \in {} (0,1)^{m}$ with
$\sum_{i=1}^{m}s_{i}=1$ and assume the source distribution is of the
form $\mu = \sum_{i=1}^{m} s_{i} U(I_{i})$.  Of special interest in
our proofs are the codecells which are  straddling the intervals $I_{i}$.
Hence we define 
for any quantizer $q \in \mathcal{Q}$ the sets 
\begin{equation}
\label{aqsqdef}
A(q) = \bigcup_{i=1}^m  \{ a \in q( \mathbb{R} ) :
\lambda (q^{-1}(a)\setminus  I_{i})=0 \} \quad \text{and} \quad S(q)=q(\mathbb{R}) \setminus A(q).
\end{equation}
In the proof of our main result we have to ensure that the
contribution of the straddling cells to the overall entropy of the
quantizer can be (asymptotically) neglected.  For $\alpha < 0$ this is
the case if it suffices to consider only quantizers with the property
that the length of each straddling cell is at least as large as a certain
(fixed)  constant times  the length of the smallest non-straddling cell.
Exactly this is ensured by the following lemma which sharpens
Lemma~\ref{pierce_neg_para}. The proof is given in  \ref{appB}.
Recall the definition (\ref{const_quant}) of the constant  $C$ and let
$\kappa\in (0,C)$. For $R>\log ( \frac{2^{1-\alpha }-1}{\kappa})$ let
\begin{eqnarray}
\mathcal{G}_R  = \mathcal{G}_R  (\kappa)
&=& \bigl\{ q \in \mathcal{K}_{R}(\kappa) :   
2 \inf \{ \diam ( q^{-1}(a)\cap I ) : a \in S(q) \} \nonumber \\
&& \qquad \geq  
\inf \{ \diam ( q^{-1}(a) ) : a \in A(q) \} \bigr\}.
\nonumber
\end{eqnarray}

\begin{lemma}  
\label{lemm_inf_g}
Assume that $\mu = \sum_{i=1}^{m} s_{i} U(I_{i})$ is a
piecewise uniform distribution as specified above.  Let  $r>1$  and
$\kappa \in (0,C)$. Then for every $\alpha \in (-\infty, 0)$ there is an
$R_{0}(\kappa)>0$ such that for every $R \geq R_{0}(\kappa)$,
\[
D_{\mu }^{\alpha }(R) = \inf \{ D_{\mu }(q) : q \in \mathcal{G}_R(\kappa) \} .
\] 
\end{lemma}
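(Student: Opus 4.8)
The plan is to start from Lemma~\ref{pierce_neg_para}, which already tells us that for $R>\log(\frac{2^{1-\alpha}-1}{\kappa})$ we may restrict to quantizers in $\mathcal{K}_R(\kappa)$ when computing $D_\mu^\alpha(R)$. So it suffices to show that any $q\in\mathcal{K}_R(\kappa)$ can be modified into a quantizer $\tilde q\in\mathcal{G}_R(\kappa)$ with $D_\mu(\tilde q)\le D_\mu(q)+o(e^{-rR})$ (or, more simply, $D_\mu(\tilde q)\le D_\mu(q)$ up to a term that vanishes in the limit relative to $D_\mu^\alpha(R)\asymp e^{-rR}$), while keeping the entropy constraint $H_\mu^\alpha(\tilde q)\le R$ and the codebook-cardinality bounds defining $\mathcal{H}_R$ intact. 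The modification should only touch the straddling cells: if some straddling cell $q^{-1}(a)\cap I$, $a\in S(q)$, is shorter than half the shortest non-straddling cell, we enlarge it (and shrink neighbouring cells) so that the defining inequality of $\mathcal{G}_R$ holds. Since the source is piecewise uniform with density bounded between $i(f)=\min_i s_i m/\diam(I)$ and $s(f)=\max_i s_i m/\diam(I)$, the distortion contributed by any interval cell of length $\ell$ is $O(\ell^{r+1})$ and enlarging a straddling cell at the expense of its neighbours changes the total distortion by a controlled amount.

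The key steps, in order. First, invoke Lemma~\ref{pierce_neg_para} to reduce to $q\in\mathcal{K}_R(\kappa)$, and invoke Lemma~\ref{ref_lemm_gray} (case (ii), with $i(f),s(f)$ as above) together with the distortion bound $D_\mu^\alpha(R)\le 2^r/i(f)^r\cdot e^{-rR}$ from Proposition~\ref{ref_lemm_pierce}(ii) to conclude that for $R$ large the shortest non-straddling cell of a near-optimal $q$ is already small — so the ``fix'' we need to apply is a local one. Second, fix $q\in\mathcal{K}_R(\kappa)$ and let $\ell_{\min}=\inf\{\diam(q^{-1}(a)):a\in A(q)\}$; for each straddling cell shorter than $\ell_{\min}/2$, merge it with an adjacent cell or redistribute a small length from adjacent non-straddling cells so that its length becomes exactly $\ell_{\min}/2$ (there are at most $m-1$ straddling cells, so only boundedly many local surgeries). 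Third, estimate the effect on distortion: each surgery changes only $O(1)$ cells, each of length $O(\ell_{\min})$, and since $\ell_{\min}=O(e^{-R})$ by Lemma~\ref{ref_lemm_gray}, the total distortion change is $O(m\,\ell_{\min}^{r+1})=o(e^{-rR})$, hence negligible compared with $D_\mu^\alpha(R)\asymp e^{-rR}$ (here we finally use $r>1$ — or at least $r\ge1$ — so the exponent $r+1$ genuinely beats $r$). Fourth, check the entropy: enlarging a straddling cell can only increase its probability and correspondingly decrease a neighbour's; one must verify, using the explicit form $H_\mu^\alpha=\frac{1}{1-\alpha}\log\sum p_i^\alpha$ with $\alpha<0$ (so $p\mapsto p^\alpha$ is decreasing and convex), that the net effect on $\sum p_i^\alpha$ keeps $H_\mu^\alpha(\tilde q)\le R$, or alternatively re-split one cell to restore the constraint at negligible distortion cost. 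Fifth, confirm $\tilde q$ still lies in $\mathcal{H}_R$: the number of codecells changes by $O(m)$, which does not disturb the bounds $Ce^R\le\card(N_{\tilde q})\le e^R$ for $R$ large since $C<1$ is fixed; and recheck the $\mathcal{K}_R(\kappa)$ inequality $e^{R-H_\mu^\alpha(\tilde q)}\le(1-(2^{1-\alpha}-1)\kappa^{-1}e^{-R})^{-1/(1-\alpha)}$, which holds because $H_\mu^\alpha(\tilde q)$ is within $o(1)$ of $H_\mu^\alpha(q)$ and the right-hand side is bounded away from its constraint when $\kappa<C$.

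The main obstacle I expect is Step 4 — simultaneously controlling the distortion and the R\'enyi entropy under the surgery. Unlike the Shannon case, $H_\mu^\alpha$ for $\alpha<0$ is dominated by the \emph{smallest} cell probabilities, so enlarging a short straddling cell (which may be among the smallest) can substantially \emph{decrease} $H_\mu^\alpha$ — that is good for the constraint $H_\mu^\alpha\le R$ but it could push us out of $\mathcal{K}_R(\kappa)$, where we also need a matching \emph{lower} bound on $e^{R-H_\mu^\alpha}$ implicitly via the cardinality bound $\card(N_q)\ge Ce^R$. The careful bookkeeping is to show that the decrease in entropy from enlarging the straddling cell is at most of order $\ell_{\min}=O(e^{-R})$, which — combined with the fact that $\kappa\in(0,C)$ leaves slack in both the $\mathcal{H}_R$ cardinality bounds and the $\mathcal{K}_R$ inequality — keeps $\tilde q$ inside $\mathcal{K}_R(\kappa)$. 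This slack, engineered precisely by the hypothesis $\kappa<C$, is what makes the construction close; making it rigorous is the heart of the proof, and it is exactly the point where the detailed computations belong in the appendix.
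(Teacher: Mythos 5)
Your overall strategy matches the paper's: reduce via Lemma~\ref{pierce_neg_para} to $q\in\mathcal{K}_R(\kappa)$, then modify the straddling cells (there are at most $m-1$ of them) to get into $\mathcal{G}_R(\kappa)$, while tracking entropy and the codebook cardinality with the slack afforded by $\kappa<C$. You also correctly anticipate the main difficulty: for $\alpha<0$ the R\'enyi entropy is dominated by the smallest cell probabilities, so enlarging a short straddling cell decreases $H_\mu^\alpha$ and one must re-split a large non-straddling cell to stay inside $\mathcal{K}_R(\kappa)$; that re-splitting step is in fact exactly what the paper does, reusing the subdivision argument from the proof of (\ref{dmuealphrkr}).

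The genuine gap is in Step~3. You propose to bound the distortion change under the surgery by $O(\ell_{\min}^{r+1})=o(e^{-rR})$ and argue it is ``negligible compared with $D_\mu^\alpha(R)\asymp e^{-rR}$.'' That would only give $\inf\{D_\mu(q):q\in\mathcal{G}_R(\kappa)\}\le D_\mu^\alpha(R)+o(e^{-rR})$ as $R\to\infty$, whereas the lemma asserts \emph{exact} equality for every fixed $R\ge R_0(\kappa)$. The paper instead shows the modification \emph{strictly decreases} the distortion and does not increase the entropy. The mechanism is the stationarity of the optimal codepoint in a straddling cell: inequality (\ref{iequ_delta_neg_01}) (derived from (\ref{iequ_delta_neg_02})) pins down the position of $a$, and together with (\ref{ref_equ_b1}) one gets (\ref{iequ_delta_neg_03})--(\ref{iequ_delta_neg_04}), which say that while $2\Delta < \min(\diam q^{-1}(b_1),\diam q^{-1}(b_2))$, the boundary points being transferred into $q^{-1}(a)$ were farther from their old codepoint than from $a$ — so enlarging $q^{-1}(a)$ is a strict improvement, not a perturbation whose size you estimate. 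Simultaneously (\ref{eq:max}) guarantees $\mu(q^{-1}(a))<\max(\mu(q^{-1}(b_1)),\mu(q^{-1}(b_2)))$, so for $\alpha<0$ the entropy is monotone in the cell boundaries in the right direction, and the enlargement does not increase $H_\mu^\alpha$. Related to this, the stopping rule is local: you enlarge until $2\Delta=\min(\diam q^{-1}(b_1),\diam q^{-1}(b_2))$, i.e.\ until the straddling cell is half the length of its shorter \emph{neighbor}, rather than targeting the global minimum cell length $\ell_{\min}/2$ as you suggest; since $b_1,b_2\in A(q)$ this local condition already implies the global condition defining $\mathcal{G}_R$, and it is exactly the point at which the ``strict improvement'' argument ceases to apply. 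Without this stationarity-based comparison your surgery may genuinely worsen the distortion at a finite $R$, and the lemma as stated would fail.
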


A bijective mapping $T: \mathbb{R} \rightarrow \mathbb{R}$ is called 
a similarity transformation if there exists $c \in (0, \infty )$, the
scaling number, such 
that $|Tx-Ty| = c |x-y|$ for every $x,y \in \mathbb{R}$. 
The last result of this section 
describes how the optimal
quantization error scales under a similarity transformation.
For $\alpha = 0$ the reader is also referred to \cite[Lemma
  3.2]{GrLu00}. 
Let us denote by
\[
C_{\mu}^{\alpha }(R) = \{ q \in \mathcal{Q}^{c} : D_{\mu }(q) = D_{\mu }^{\alpha }(R) \}
\]
the set of all optimal quantizers in $\mathcal{Q}^{c}$ for $\mu$ under R\'enyi entropy
constraint $R$ of order $\alpha$.

\begin{lemma}
\label{ref_lemm_scale}
Let $\alpha \in [-\infty, \infty]$ and $T: \mathbb{R} \to \mathbb{R}$ be a similarity
transformation with scaling number $c > 0$. Then for any $R \geq 0$ we have
\[
D_{\mu \circ T^{-1}}^{\alpha }(R) = c^{r} D_{\mu }^{\alpha }(R) .
\]
Moreover, 
\[
C_{\mu \circ T^{-1}}^{\alpha }(R) = \{ T \circ q \circ T^{-1} : q \in C_{\mu }^{\alpha }(R)  \}.  
\]
\end{lemma}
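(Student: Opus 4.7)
The plan is to establish both statements simultaneously by showing that the map $q \mapsto \tilde{q} := T \circ q \circ T^{-1}$ is a bijection from $\mathcal{Q}$ to itself that preserves R\'enyi entropy and scales $r$th power distortion by exactly $c^r$. Since $T$ is bijective and affine (either $x \mapsto cx+b$ or $x \mapsto -cx+b$), $\tilde{q}$ is well defined, has countable range $T(q(\mathbb{R}))$, and its codecells are the images $T(S_i)$ of the codecells $S_i$ of $q$; these remain Borel and partition $\mathbb{R}$. Moreover, composition by $T$ on both sides is clearly an involution-like operation: replacing $\mu$ by $\mu\circ T^{-1}$ and $q$ by $\tilde{q}$, the inverse map is $\tilde{q}\mapsto T^{-1}\circ \tilde{q}\circ T$, so the correspondence is bijective.

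Next I would carry out two short computations. For the entropy, the image measure satisfies
\[
(\mu\circ T^{-1})\bigl(T(S_i)\bigr) = \mu\bigl(T^{-1}(T(S_i))\bigr) = \mu(S_i),
\]
so the probability vectors associated to $\tilde{q}$ under $\mu\circ T^{-1}$ and to $q$ under $\mu$ coincide (after relabeling). By Definition of $H^{\alpha}_{\mu}$ in \eqref{ref_def_entr}, this gives $H^{\alpha}_{\mu\circ T^{-1}}(\tilde{q}) = H^{\alpha}_{\mu}(q)$. For the distortion, using the change-of-variables formula (or simply $y = Tx$ under $\mu\circ T^{-1}$) and $|Tx - Ty|^r = c^r|x-y|^r$,
\[
D_{\mu\circ T^{-1}}(\tilde{q}) = \int |y-\tilde{q}(y)|^r\, d(\mu\circ T^{-1})(y)
= \int |Tx - T(q(x))|^r\, d\mu(x) = c^r D_{\mu}(q).
\]

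Finally, I would conclude both claims. From the two identities above, the constraint $H^{\alpha}_{\mu\circ T^{-1}}(\tilde{q}) \le R$ is equivalent to $H^{\alpha}_{\mu}(q)\le R$, so taking infima over the corresponding sets in \eqref{ref_darst_f_q} and using the bijection yields
\[
D^{\alpha}_{\mu\circ T^{-1}}(R) = \inf\{c^r D_{\mu}(q): q\in\mathcal{Q},\, H^{\alpha}_{\mu}(q)\le R\} = c^r D^{\alpha}_{\mu}(R).
\]
The same bijection, restricted to the level set where the infimum is attained, sends $C^{\alpha}_{\mu}(R)$ onto $C^{\alpha}_{\mu\circ T^{-1}}(R)$: if $q\in C^{\alpha}_{\mu}(R)$, then $\tilde{q}=T\circ q\circ T^{-1}$ satisfies $H^{\alpha}_{\mu\circ T^{-1}}(\tilde{q})\le R$ and $D_{\mu\circ T^{-1}}(\tilde{q}) = c^r D^{\alpha}_{\mu}(R) = D^{\alpha}_{\mu\circ T^{-1}}(R)$, and conversely any $\tilde{q}\in C^{\alpha}_{\mu\circ T^{-1}}(R)$ arises in this way from $q = T^{-1}\circ \tilde{q}\circ T\in C^{\alpha}_{\mu}(R)$. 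One should also verify that $\tilde{q}\in\mathcal{Q}^c$ whenever $q\in\mathcal{Q}^c$, which is immediate since $T$ maps intervals to intervals and preserves closures.

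There is no real obstacle here; the only minor point to check is the change-of-variables identity and that the class $\mathcal{Q}^c$ is preserved under the conjugation, both of which are routine consequences of $T$ being an affine bijection.
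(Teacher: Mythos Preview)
Your proof is correct and follows essentially the same approach as the paper: both establish the two key identities $H^{\alpha}_{\mu\circ T^{-1}}(\tilde{q}) = H^{\alpha}_{\mu}(q)$ and $D_{\mu\circ T^{-1}}(\tilde{q}) = c^r D_{\mu}(q)$ for $\tilde{q} = T\circ q\circ T^{-1}$, observe that $q\in\mathcal{Q}^c$ iff $\tilde{q}\in\mathcal{Q}^c$, and conclude by taking infima. Your write-up is in fact more detailed than the paper's brief sketch; one minor remark is that the entropy constraint you verify for $\tilde{q}$ in the second part is not actually needed, since by the paper's definition $C_{\mu}^{\alpha}(R)=\{q\in\mathcal{Q}^c:D_{\mu}(q)=D_{\mu}^{\alpha}(R)\}$ does not include the constraint $H^{\alpha}_{\mu}(q)\le R$.
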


\begin{proof} The lemma follows because for any $q\in
  \mathcal{Q}$   we have $\bar{q}:=
  T\circ q \circ T^{-1}\in  \mathcal{Q}$, 
 $ H_{\mu \circ T^{-1} }^{\alpha}(q) = H_{\mu }^{\alpha}(\bar{q})$, and
  $ D_{\mu \circ T^{-1} }(q) = c^rD_{\mu }(\bar{q})$ (also, $q\in
  \mathcal{Q}^c$  iff $\bar{q}\in  \mathcal{Q}^c$). See  also \cite[Lemma 2.4]{Kre10a} where  $\alpha\ge 0$ and $r>1$
are considered, but the same proof clearly works for all 
$\alpha < 0$ and $r>0$. 
\end{proof}

\section{Inequalities for mixture distributions}
\label{sec_mix}

In this section  we provide  upper and lower bounds for the 
optimal quantization error of mixture distributions in terms of the
optimal quantizer performance for the component distributions. Proofs are given in \ref{appC}. 

\begin{definition}
\label{ref_def_mdivis}
Let $m \geq 2$ and 
$A_{1},\ldots,A_{m}$ be measurable sets which are pairwise disjoint.
The distribution $\mu$ is called $m-$divisible with respect to 
$(A_{1},\ldots,A_{m})$  if $\mu ( A_{i} ) > 0$ for all $i=1,\ldots,m$ and
$\mu ( \cup_{i=1}^{m} A_{i} )=1$.
\end{definition}

For any measurable $A \subset \mathbb{R}$ with $\mu (A) > 0$ we let 
$\mu ( \cdot | A )$ denote  the conditional probability of $\mu$ with
respect to  $A$, i.e., $\mu(B|A)=\mu(B\cap A)/\mu(A)$ for all
measurable $B\subset \R$. 
If $\mu$ is $m-$divisible, then we write $\mu_{i}=\mu ( \cdot | A_{i} )$.

\begin{proposition}
\label{ref_pro_zerleg}
Let $R \geq 0$,  $\alpha \in [0,\infty ) \setminus \{ 1 \}$, and $m \geq 2$. 
Assume that $\mu$ is $m-$divisible with partition
$(A_{1},\ldots,A_{m})$. Moreover assume, that  
$\int | x | ^{r} d \mu_{i}(x) < \infty $ for every $i=1,\ldots,m$.
Let $R_{1},\ldots,R_{m} \in [0,\infty )$. Letting $s_{i}=\mu ( A_{i} )$, 
we have 
\[
D_{\mu }^{\alpha } (R) \leq \sum_{i=1}^{m} s_{i} D_{\mu_{i} }^{\alpha } \left( R_{i} \right)
\]
if either one of 
the following inequalities holds:
\begin{eqnarray}
\label{ref_cond_01}
\log \left( \sum_{i=1}^{m} s_{i}^{\alpha} e^{(1-\alpha )R_{i}} \right) & \leq & (1 - \alpha )R 
\qquad \text {if } \alpha \in [0,1 ) , \\
\label{ref_cond_02}
\log \left( \sum_{i=1}^{m} s_{i}^{\alpha} e^{(1-\alpha )R_{i}} \right) & \geq & (1 - \alpha )R 
\qquad \text {if } \alpha \in ( 1,\infty ).
\end{eqnarray}
\end{proposition}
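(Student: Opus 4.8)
The plan is to construct, for each $i$, a near-optimal quantizer $q_i$ for the conditional distribution $\mu_i$ meeting the entropy budget $R_i$, and then to paste these together into a single quantizer $q$ for $\mu$ whose codecells are the union of the codecells of the $q_i$ (each $q_i$ restricted to $A_i$). Since the $A_i$ are disjoint and cover $\mu$-almost everything, the distortion of the composite quantizer is exactly $D_\mu(q)=\sum_{i=1}^m s_i D_{\mu_i}(q_i)$, where $s_i=\mu(A_i)$. The only work is then to check that the R\'enyi entropy constraint $H_\mu^\alpha(q)\le R$ is satisfied under hypothesis \eqref{ref_cond_01} or \eqref{ref_cond_02}, and to pass to the infimum over the $q_i$.

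First I would reduce to quantizers in $\mathcal{Q}^c$ and note (using Lemma~\ref{lemm_mono}, i.e.\ \eqref{eq:interval1}, or just a direct truncation argument together with the moment assumption $\int|x|^r\,d\mu_i<\infty$) that for any $\varepsilon>0$ there are $q_i$ with finitely many interval cells and $H_{\mu_i}^\alpha(q_i)\le R_i$, $D_{\mu_i}(q_i)\le D_{\mu_i}^\alpha(R_i)+\varepsilon$. Write $p_{i,j}=\mu_i(S_{i,j})$ for the probabilities of the cells of $q_i$; then the cell probabilities of the composite quantizer $q$ are $s_i p_{i,j}$. The key computation is
\[
\sum_{i,j}(s_i p_{i,j})^\alpha = \sum_{i=1}^m s_i^\alpha \sum_j p_{i,j}^\alpha = \sum_{i=1}^m s_i^\alpha e^{(1-\alpha)H_{\mu_i}^\alpha(q_i)}.
\]
For $\alpha\in[0,1)$ the map $t\mapsto e^{(1-\alpha)t}$ is increasing, so $H_{\mu_i}^\alpha(q_i)\le R_i$ gives $\sum_{i,j}(s_ip_{i,j})^\alpha\le\sum_i s_i^\alpha e^{(1-\alpha)R_i}\le e^{(1-\alpha)R}$ by \eqref{ref_cond_01}; since $1-\alpha>0$, taking $\frac{1}{1-\alpha}\log$ of both sides yields $H_\mu^\alpha(q)\le R$. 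For $\alpha\in(1,\infty)$ the exponent $1-\alpha$ is negative, so $H_{\mu_i}^\alpha(q_i)\le R_i$ now gives the \emph{reverse} inequality $e^{(1-\alpha)H_{\mu_i}^\alpha(q_i)}\ge e^{(1-\alpha)R_i}$; combined with \eqref{ref_cond_02} this gives $\sum_{i,j}(s_ip_{i,j})^\alpha\ge e^{(1-\alpha)R}$, and applying $\frac{1}{1-\alpha}\log$ (which reverses inequalities since $1-\alpha<0$) again yields $H_\mu^\alpha(q)\le R$. In both cases $q$ is feasible, hence $D_\mu^\alpha(R)\le D_\mu(q)=\sum_i s_iD_{\mu_i}(q_i)\le\sum_i s_iD_{\mu_i}^\alpha(R_i)+\varepsilon$; letting $\varepsilon\to0$ finishes the proof.

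The main obstacle, such as it is, is bookkeeping rather than a genuine difficulty: one must be careful that the composite quantizer is well-defined as an element of $\mathcal{Q}^c$ (finitely many interval cells — this is why each $q_i$ is taken in $\mathcal{Q}^c$ and why the $A_i$ should be taken to be intervals, or at least handled so that the union of cells is again a finite interval partition), and that the direction of the two monotonicities (of $t\mapsto e^{(1-\alpha)t}$ and of $s\mapsto\frac{1}{1-\alpha}\log s$) is tracked consistently across the sign change of $1-\alpha$ at $\alpha=1$ — this is precisely the reason the hypothesis flips from $\le$ in \eqref{ref_cond_01} to $\ge$ in \eqref{ref_cond_02}. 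The finiteness of all entropies involved is automatic because each $q_i$ has finitely many cells.
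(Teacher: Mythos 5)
Your proof is correct and follows essentially the same route as the paper: paste near-optimal quantizers $q_i$ for the conditional laws $\mu_i$ together on the pieces $A_i$, compute the R\'enyi entropy of the composite as $\frac{1}{1-\alpha}\log\bigl(\sum_i s_i^\alpha e^{(1-\alpha)H_{\mu_i}^\alpha(q_i)}\bigr)$, verify feasibility via the sign-tracking on $1-\alpha$, and pass to the infimum. The only cosmetic difference is that you restrict to $\mathcal{Q}^c$ and worry about the $A_i$ being intervals, which is unnecessary here --- $D_\mu^\alpha(R)$ is an infimum over all of $\mathcal{Q}$, and the paper simply takes $q_i \in \mathcal{Q}$ and assembles a general (not necessarily interval-cell) quantizer $q$.
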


Recall the definition (\ref{lambda12def}) of $a_1$ and
$a_2$. 
Let $m \geq 2$ and $s_{1},\ldots,s_{m} \in (0,1)^{m}$ with
$\sum_{i=1}^{m}s_{i}=1$. For every $i \in \{1,\ldots,m\}$ and $\alpha \in
[0,r+1 ) \setminus  \{ 1 \}$ let
\begin{equation}
t_{i} = s_{i}^{1/a_2}  
\left( \sum_{j=1}^{m} s_{j}^{ 
a_1} \right)^{-\frac{1}{1-\alpha }} .
\label{ref_def_ti}
\end{equation}

\begin{lemma}
\label{ref_lemm_smu12}
Let $m \geq 2$. Let $\mu$ be non-atomic and $m-$divisible with respect
to $(A_{1},\ldots,A_{m})$. 
Assume  $\int | x |^{r} \, d \mu _{i} (x) < \infty$ for all $i=1,\ldots,m$. 
Let $i_{0} \in \{1,\ldots,m\}$ with $\mu (A_{i_{0}} )=s=\max \{ \mu (A_{i} ) : i=1,\ldots,m \}$.
If $\alpha \in [0,1) $, then   
\begin{equation}
\label{ref_equ_ergnjks}
\liminf_{R \rightarrow \infty } e^{r R} D_{\mu }^{\alpha }(R) \geq 
s^{ a_1 a_2 } \liminf_{R \rightarrow \infty}
e^{rR} D_{\mu_{i_{0}}}^{\alpha } ( R  ) .
\end{equation}

Let $s_{i}=\mu ( A_{i} )$ 
and assume $\alpha \in [0,r+1 ) \setminus \{ 1 \}$. 
Then we have
\begin{equation}
\label{ref_iequ_rrinferr}
\limsup_{R \rightarrow \infty } e^{rR} D_{\mu }^{\alpha }(R) \leq  
\sum_{i=1}^{m}  s_{i} t_{i} ^{-r} 
\limsup_{R \rightarrow \infty } e^{rR} D_{\mu_{i} }^{\alpha }(R).
\end{equation}

\end{lemma}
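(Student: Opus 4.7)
The two inequalities require different techniques. For (\ref{ref_equ_ergnjks}) I would use any quantizer for $\mu$ directly as a quantizer for the conditional distribution $\mu_{i_{0}}$ and track what happens to its distortion and entropy. For (\ref{ref_iequ_rrinferr}) I would invoke Proposition~\ref{ref_pro_zerleg} with a deliberately chosen splitting of the rate $R$ into rates $R_{i}$.

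\textbf{Lower bound.} Set $s=\mu(A_{i_{0}})$ and fix any $q\in\mathcal{Q}^{c}$ with $H_{\mu}^{\alpha}(q)\le R$. Since $\mu\ge s\mu_{i_{0}}$ on Borel sets, on the one hand
\[
D_{\mu_{i_{0}}}(q)=\frac{1}{s}\int_{A_{i_{0}}}|x-q(x)|^{r}\,d\mu(x)\le \frac{D_{\mu}(q)}{s},
\]
and on the other hand the codecell inequality $\mu(q^{-1}(a))\ge s\,\mu_{i_{0}}(q^{-1}(a))$ together with monotonicity of $x\mapsto x^{\alpha}$ (under the conventions of the paper) and positivity of $(1-\alpha)^{-1}$ yields
\[
H_{\mu_{i_{0}}}^{\alpha}(q)\le R+\gamma,\qquad \gamma=\frac{\alpha\log(1/s)}{1-\alpha}\ge 0.
\]
Hence $D_{\mu_{i_{0}}}^{\alpha}(R+\gamma)\le D_{\mu}(q)/s$; minimizing over admissible $q$ gives $D_{\mu}^{\alpha}(R)\ge s\,D_{\mu_{i_{0}}}^{\alpha}(R+\gamma)$. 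Multiplying by $e^{rR}$ and using the exponent identity $s\cdot e^{-r\gamma}=s^{a_{1}a_{2}}$ (immediate from (\ref{lambda12def})), the substitution $R'=R+\gamma$ in the liminf delivers (\ref{ref_equ_ergnjks}). Nonatomicity of $\mu$ enters only via Lemma~\ref{lemm_mono}, which lets us restrict to $q\in\mathcal{Q}^{c}$ without loss.

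\textbf{Upper bound.} I would choose $R_{i}=R+\log t_{i}$. The central calculation is
\[
\sum_{i=1}^{m}s_{i}^{\alpha}e^{(1-\alpha)R_{i}}=e^{(1-\alpha)R}\sum_{i=1}^{m}s_{i}^{\alpha}t_{i}^{1-\alpha}=e^{(1-\alpha)R},
\]
where the last equality follows because, by (\ref{lambda12def}) and (\ref{ref_def_ti}), the exponent of $s_{i}$ in $s_{i}^{\alpha}t_{i}^{1-\alpha}$ simplifies to $a_{1}$, after which the normalizing factor $(\sum_{j}s_{j}^{a_{1}})^{-1}$ built into $t_{i}$ kills the sum. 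This equality makes both (\ref{ref_cond_01}) (when $\alpha\in[0,1)$) and (\ref{ref_cond_02}) (when $\alpha\in(1,r+1)$) hold, so Proposition~\ref{ref_pro_zerleg} gives $D_{\mu}^{\alpha}(R)\le \sum_{i}s_{i}D_{\mu_{i}}^{\alpha}(R_{i})$. Multiplying by $e^{rR}=t_{i}^{-r}e^{rR_{i}}$ inside each summand and taking $\limsup$ (noting $R_{i}\to\infty$ as $R\to\infty$) produces (\ref{ref_iequ_rrinferr}).

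\textbf{Main obstacle.} The only non-routine step is the algebraic verification of the two exponent identities $s\cdot e^{-r\gamma}=s^{a_{1}a_{2}}$ and $\sum_{i}s_{i}^{\alpha}t_{i}^{1-\alpha}=1$; both reduce to showing that $\alpha+(1-\alpha)/a_{2}$ telescopes to $a_{1}=(1-\alpha+\alpha r)/(1-\alpha+r)$, which needs careful fraction bookkeeping so that the two sign regimes $\alpha\in[0,1)$ and $\alpha\in(1,r+1)$ are treated uniformly. Once these identities are in hand, both bounds follow from soft monotonicity and limit arguments.
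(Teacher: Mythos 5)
Your proof is correct and follows essentially the same route as the paper: for the lower bound, both arguments use the same quantizer for $\mu_{i_0}$, bound its $\mu_{i_0}$-entropy via $\mu \geq s\mu_{i_0}$ and concavity/monotonicity considerations, and read off the exponent $s^{a_1 a_2}$ from $s\cdot e^{-r\gamma}$; for the upper bound, both choose $R_i = R + \log t_i$ so that $\sum_i s_i^\alpha t_i^{1-\alpha} = 1$ makes conditions (\ref{ref_cond_01})/(\ref{ref_cond_02}) hold with equality, then invoke Proposition~\ref{ref_pro_zerleg} and take $\limsup$. The presentation is slightly terser but the decomposition, the choice of $R_i$, and the exponent bookkeeping all match the paper's argument.
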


\section{Proof of main result}
\label{sec_proof}

Recall that $U(I)$ denotes the uniform distribution on a bounded
interval $I$ with positive length. First we show that the optimal
quantizer performance for $U(I)$ is the same for all negative $\alpha$.

\begin{lemma}
\label{uni_neg_para}
Let $-\infty<a<b<\infty$. For every $R \geq 0$
and $\alpha < 0$, we have  
\[
D_{U([a,b]) }^{\alpha}(R)=D_{U([a,b]) }^{0}(R).
\]
\end{lemma}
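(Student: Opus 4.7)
My plan is to show that both $D_{U([a,b])}^{\alpha}(R)$ and $D_{U([a,b])}^{0}(R)$ equal the common value $C(r)(b-a)^r/N^r$, where $N=\lfloor e^R\rfloor$. The inequality $D_{U([a,b])}^{\alpha}(R)\ge D_{U([a,b])}^{0}(R)$ is immediate from the monotonicity bound (\ref{eq:mono}) with $\beta=\alpha<0$, so the task is to establish the matching upper bound.

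The key observation is that the $N$-cell uniform quantizer $q_N\in\mathcal{Q}^c$ on $[a,b]$, with codecells of equal length $(b-a)/N$ and midpoint codepoints, assigns every codecell $\mu$-probability exactly $1/N$. The resulting uniform output probability vector makes all R\'enyi entropies coincide, so
\[
H_{U([a,b])}^{\alpha}(q_N)=\log N\le R
\]
for every $\alpha\in[-\infty,\infty]$, and in particular $q_N$ is feasible for the $\alpha$-entropy constraint. A direct integration gives $D_{U([a,b])}(q_N)=C(r)(b-a)^r/N^r$, hence
\[
D_{U([a,b])}^{\alpha}(R)\le \frac{C(r)(b-a)^r}{N^r}.
\]

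It remains to match this with the lower bound $D_{U([a,b])}^{0}(R)\ge C(r)(b-a)^r/N^r$. By (\ref{eq:interval}) in Lemma~\ref{lemm_mono} applied at $\alpha=0$, it suffices to consider $q\in\mathcal{Q}^c$ whose codecells $S_1,\ldots,S_k$ have positive $\mu$-measure with $k\le N$. Setting $\ell_i=\lambda(S_i\cap[a,b])$ (so $\sum_i\ell_i=b-a$), the optimal codepoint for cell $S_i$ is the midpoint of $S_i\cap[a,b]$, yielding
\[
D_{U([a,b])}(q)\ge \sum_{i=1}^{k}\frac{\ell_i^{r+1}}{(b-a)(r+1)2^r}\ge \frac{(b-a)^r}{(r+1)2^r\,k^r}\ge \frac{C(r)(b-a)^r}{N^r},
\]
where the middle inequality is Jensen's for the convex function $x\mapsto x^{r+1}$. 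Combining all bounds yields equality. There is no real obstacle in this argument; its essence is simply that the equal-cell uniform quantizer optimally solves both the fixed-rate and the $\alpha<0$ R\'enyi-entropy problems, because its codecell probability vector is uniform on $N$ atoms and hence achieves the same R\'enyi entropy $\log N$ of every order $\alpha$.
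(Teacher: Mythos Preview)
Your proof is correct and follows essentially the same route as the paper's: both use the monotonicity $D_{U([a,b])}^{\alpha}(R)\ge D_{U([a,b])}^{0}(R)$ for $\alpha<0$ together with the observation that the uniform $N$-cell quantizer (with $N=\lfloor e^{R}\rfloor$) is feasible for every order $\alpha$ and optimal for $\alpha=0$. The only cosmetic difference is that the paper reduces to $\alpha=-\infty$, derives $D^{-\infty}\ge D^{0}$ via the explicit inclusion of feasible sets, and cites \cite[Example~5.5]{GrLu00} for the $\alpha=0$ optimum, whereas you invoke Lemma~\ref{lemm_mono} directly and re-derive the $\alpha=0$ case by the Jensen argument---making your version slightly more self-contained.
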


\begin{proof}

Note that by Lemma~\ref{ref_lemm_scale} it suffices to consider the
case $[a,b]=[0,1]$.
Since $D_{U([a,b])}^{\alpha }(R)$ is nonincreasing in $\alpha$ by Lemma~\ref{lemm_mono} it suffices to prove the assertion
for $\alpha = - \infty$. Let $R \geq 0$ 
and assume $q \in \mathcal{Q}$ satisfies  $H_{\mu}^{- \infty }(q) \leq R$.
Setting
\[
N_{q} = \{ a \in q(\mathbb{R}) : \mu (q^{-1}(a)) > 0 \}
\]
this condition is equivalent to 
\begin{equation}
\label{ref_unendl}
p = \min \{ \mu (q^{-1}(a)) : a \in N_{q} \} \geq \exp(-R) .
\end{equation}
Let $\lfloor x\rfloor $ denote the largest
integer less than or equal to $x\in \R$. Using $1 \geq \card ( N_{q} ) \cdot p$ we get
\begin{equation}
\label{ref_null}
\card ( N_{q}  ) \leq \lfloor \exp(R) \rfloor ,
\end{equation}
which is equivalent to $H_{\mu}^{0}(q) \leq R$.  From, e.g.,
\cite[Example 5.5]{GrLu00} we know that only the quantizer $g \in
\mathcal{Q}$ which partitions the unit interval into $\lfloor \exp(R)
\rfloor$ intervals of equal length with their midpoints as
quantization points, attains the optimal error, i.e.,
$D_{U([0,1])}(g)=D_{U([0,1])}^{0}(R)$. But this quantizer satisfies
conditions (\ref{ref_unendl}) and (\ref{ref_null})
simultaneously. Hence, $D_{U([0,1])}(g)=D_{U([0,1])}^{-\infty}(R)$,
which yields the assertion.
\end{proof}

Next we determine the exact
behavior of $D_{U([0,1])}^{\alpha }(R)$ for large $R$.  For $\alpha =
1$ the following result is from 
\cite{GyLi00}. For the case $\alpha = 0$ the reader is referred, for
example, to  \cite[Example 5.5]{GrLu00}.

\begin{proposition}
\label{ref_prop_unit_cube}
Let $r > 1$ and $R>0$. Let $-\infty<a<b<\infty$. Then the following hold:
\begin{itemize}
\item[(i)]  If $\alpha \in [0 , r+1)$, an
optimal quantizer always exists for $U([a,b])$, i.e., we can find a $q
\in \mathcal{Q}$ with $H_{U([a,b])}^{\alpha }(q) \leq R$ and
$D_{U([a,b])}(q) = D_{U([a,b])}^{\alpha }(R)$. 

\item[(ii)] Suppose $\alpha \in [0 , r+1)$ and let $n \in \mathbb{N}$
  be such that $R \in ( \log (n) , \log ( n+1 )]$. Then the
  restriction to $[a,b]$ of the quantizer $q$ in (i) has $(n+1)$
  interval cells, $n$ of which are of equal lengths and one having
  length less than or equal to that of the others.  If $\alpha>0$,
  then $q$ meets the entropy constraint with equality, i.e.,
  $H_{U([a,b])}^{\alpha }(q) = R$.

\item[(iii)] For all  $\alpha \in [-\infty , r+1)$, we have 
\begin{equation}
\label{ref_equ_duoe45}
\lim_{R\to \infty} e^{rR} D_{U([0,1])}^{\alpha }(R) = C(r) . 
\end{equation}
\end{itemize}
\end{proposition}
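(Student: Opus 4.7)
The plan is to reduce the problem to a finite-dimensional optimization on the probability simplex and combine Lagrangian analysis with monotonicity in $\alpha$. By Lemma~\ref{ref_lemm_scale} we may assume $[a,b]=[0,1]$, and by Lemma~\ref{lemm_mono} we may restrict to quantizers in $\mathcal{Q}^c$. For $\mu=U([0,1])$ with an $N$-interval partition of lengths $\ell_1,\ldots,\ell_N$, the optimal choice of codepoints (the interval midpoints) yields distortion $C(r)\sum_{i=1}^N\ell_i^{r+1}$, while the cell probabilities coincide with the lengths. The problem thus becomes: minimize $\sum_i\ell_i^{r+1}$ subject to $\sum_i\ell_i=1$, $\ell_i\geq 0$, and $H^\alpha(\ell)\leq R$.

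For part (iii), I obtain the upper bound by taking the uniform quantizer with $N=\lfloor e^R\rfloor$ cells of length $1/N$; since all cell probabilities equal $1/N$, $H^\alpha=\log N\leq R$ for every $\alpha\in[-\infty,\infty]$ and $D=C(r)N^{-r}$, so $e^{rR}D=C(r)(e^R/N)^r\to C(r)$. For the matching lower bound, I use monotonicity (Lemma~\ref{lemm_mono}): $D_{U([0,1])}^\alpha(R)\geq D_{U([0,1])}^{r+1}(R)$ for every $\alpha<r+1$, and Theorem~\ref{ref_theo_main_resul_known}(iii) applied to $U([0,1])$ (for which $\mathrm{ess\,sup}\,f=1$) yields $\lim_{R\to\infty} e^{rR}D_{U([0,1])}^{r+1}(R)=C(r)$.

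For parts (i) and (ii), I first fix $N$. The feasibility set in the simplex is compact (for $\alpha\in(0,\infty)$, $H^\alpha$ is continuous in $\ell$; for $\alpha=0$, the constraint $H^0\leq R$ forces $|\{i:\ell_i>0\}|\leq e^R$, so one restricts to $N\leq\lfloor e^R\rfloor$) and $\sum_i\ell_i^{r+1}$ is continuous, so the $N$-cell infimum $D_N^\ast(R)$ is attained. Appending zero-length cells changes neither distortion nor entropy, so $D_N^\ast(R)$ is nonincreasing in $N$. At an interior optimum the first-order conditions yield
\[
(r+1)\ell_i^r=\lambda+\mu\alpha\ell_i^{\alpha-1}
\]
(with the analogous expression for $\alpha=1$) for each $i$ with $\ell_i>0$. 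Since the right-hand side, viewed as a function of $\ell_i>0$, has at most one critical point, this equation has at most two positive roots, so every optimum has at most two distinct cell lengths $a>b\geq 0$ with multiplicities $k$ and $m$. The refined claim in (ii) reduces to showing that among configurations $(k,m,a,b)$ satisfying $ka+mb=1$ and the entropy equality $ka^\alpha+mb^\alpha=e^{(1-\alpha)R}$, the distortion $ka^{r+1}+mb^{r+1}$ is minimized at $m=1$, $k=n$, with $n$ determined by $R\in(\log n,\log(n+1)]$. Granted this, $D_N^\ast(R)$ stabilizes at $N=n+1$, so an overall optimum exists (proving (i)) with the structure described in (ii); for $\alpha>0$, $D^\alpha(R)$ is strictly decreasing in $R$, so the entropy constraint is active at the optimum.

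The principal obstacle is the sharper structural statement $m=1$. The Lagrangian narrows the optimum to at most two distinct cell lengths, but eliminating the case $m\geq 2$ requires a direct comparison along the one-parameter family of $(k,m,a,b)$ compatible with the two equality constraints. One route is a swapping argument: replace two $b$-cells by a single cell of length $b'<b$ and compensate by shrinking one $a$-cell so that both constraints are preserved, then show the distortion strictly decreases. An alternative is to treat the constrained minimum distortion as a function of the integer $m$, with the remaining parameters determined implicitly by the constraints, and to establish monotonicity toward $m=1$ via direct differentiation.
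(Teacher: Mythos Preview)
Your argument for part~(iii) is correct and in fact cleaner than the paper's. The paper handles $\alpha\le 0$ via Lemma~\ref{uni_neg_para} (reducing to $\alpha=0$) and handles $\alpha\in(0,r+1)\setminus\{1\}$ by invoking \cite[Thm.~3.1]{Kre10a} to get $D_{U([0,1])}^{\alpha}(\log n)=C(r)n^{-r}$ exactly, then sandwiching. Your approach is more uniform: the upper bound via the $\lfloor e^R\rfloor$-level uniform quantizer works for every $\alpha$, and for the lower bound you use only monotonicity in $\alpha$ and Theorem~\ref{ref_theo_main_resul_known}(iii) at the single endpoint $\alpha=r+1$, where indeed $(1+r)\beta(r+1)=r$ so the exponent matches. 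This bypasses both the case split and the structural theorem from \cite{Kre10a}.

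For parts~(i) and~(ii), however, your proof is incomplete, and you say so yourself. The paper does not prove these parts either: it simply cites \cite[Thm.~3.1]{Kre10a}, where the full analysis of the constrained simplex problem is carried out. Your reduction to the simplex problem and the Lagrangian step showing at most two distinct cell lengths are fine (your count of roots is correct because $\Phi(\ell)=(r+1)\ell^r-\mu\alpha\ell^{\alpha-1}$ has derivative vanishing at most once on $(0,\infty)$ for $\alpha<r+1$). But the crux of~(ii) is precisely that the short length has multiplicity $m=1$, and neither of your suggested routes---the swapping argument or the monotonicity-in-$m$ argument---is actually executed. The swapping move you sketch (merge two $b$-cells, compensate on one $a$-cell) does not obviously preserve both equality constraints simultaneously, and even if it can be made to, showing the distortion strictly decreases requires a nontrivial calculation. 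The one-parameter monotonicity idea has the same difficulty: once $k,m$ are fixed integers, the two constraints determine $(a,b)$, and comparing across integer $m$ is a discrete problem, not a smooth one. In short, you have correctly identified the hard step and correctly not claimed to have done it; to finish, you would need to either reproduce the argument of \cite{Kre10a} or find a genuinely new one.
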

\begin{proof}
Assertions (i) and (ii)
follow directly from \cite[Thm.\ 3.1]{Kre10a} by noting that in view of
Lemma~\ref{ref_lemm_scale} it suffices to consider the
case $[a,b]=[0,1]$

To prove (iii) first we note that by Lemma~\ref{uni_neg_para}, the limit 
(\ref{ref_equ_duoe45}) holds for all $\alpha\in [-\infty,0)$ since it
  holds for $\alpha=0$.  Thus we need only concentrate on the case
  $\alpha \in ( 0, r+1 ) \setminus \{ 1 \}$. Applying
  \cite[Thm.\ 3.1]{Kre10a} we obtain
\[
D_{U([0,1])}^{\alpha }( \log (n) ) = C(r) n^{-r}
\]
for every $n \in \mathbb{N}$.
Now let $R \geq 0$ and $n_{R} \in \mathbb{N}$, such that $\log(n_{R})<
R \le  \log ( n_{R}+1 )$.
We get
\begin{eqnarray*}
\lefteqn{  n_{R}^{r} C(r) (n_{R}+1)^{-r}= n_{R}^{r}
  D_{U([0,1])}^{\alpha }( \log (n_{R}+1) )  }\qquad \qquad \\ 
& \leq  & e^{rR} D_{U([0,1])}^{\alpha }( R )
\leq (n_{R}+1)^{r} D_{U([0,1])}^{\alpha }( \log n_{R} ) = (n_{R}+1)^{r} C(r) n_{R}^{-r} . \nonumber
\end{eqnarray*}
Letting $R \rightarrow \infty$ yields (\ref{ref_equ_duoe45}) for
$\alpha \in ( 0, r+1 ) \setminus \{ 1 \}$. 
\end{proof}

\begin{proof}[Proof of Theorem \ref{ref_theo_main_resul}]

We divide the proof into four main steps. In step~1 we begin by
proving a (sharp) asymptotic lower bound on the optimal
quantization error for any distribution with  a density that is piecewise
constant on a finite number of intervals of equal lengths. In step~2
we generalize the lower bound of step~1 to any density whose support
is a compact interval on which it is bounded away from zero.
Together with a matching  upper bound based on the companding result
Corollary~\ref{asymp_quant_comp} this will finish the proof for
$\alpha \in (- \infty, 0)$. 
In step 3 we show that the lower bound holds for all distributions
subject to our restrictions and apply again  the companding upper
bound to finish the proof for 
$\alpha \in (0,1)$.
Step 4  treats the remaining  $\alpha = -\infty$ case  and thus
completes the proof. 

Throughout we assume
w.l.o.g.\ that $R \geq R_{0}(C/2)$ where is $C$ defined in
(\ref{const_quant})  and $R_{0}(C/2)$ is  from
Lemma~\ref{lemm_inf_g}.

\noindent \emph{Step 1.} 

Let $M$ be a compact interval of positive length and let $m \geq 2$
and $\alpha \in (-\infty, 1) \setminus \{ 0\}$. Assume that $\mu =
\sum_{i=1}^{m}s_{i} U(A_{i})$, where the $A_{i}$ are disjoint
intervals of equal length $l=l(A_{i})=\lambda(M)/m$ that form a
partition of $M$. We assume $s_i>0$ for all $i=1,\ldots,m$. Thus
$\sum_{i=1}^{m}s_{i} = 1$ and
\[
f= \frac{d \mu}{d \lambda}  = \sum_{i=1}^{m} s_{i} l^{-1} 1_{A_{i}} .
\]
For $\alpha \in (-\infty , 1 ) \setminus \{ 0 \}$  define 
$t_{i} = s_{i}^{1/a_2}  
\left( \sum_{j=1}^{m} s_{j}^{ 
a_1} \right)^{-\frac{1}{1-\alpha }}$, $i=1,\ldots,m$ as in
(\ref{ref_def_ti}).  Let 
\[
R \geq \max \{ 0, \max \{ - \log ( t_{i} ): i = 1,\ldots,m \} \}
\]
and define 
\[
R_{i} = R + \log ( t_{i} ) \geq 0.
\] 
From Proposition~\ref{ref_prop_unit_cube} we deduce
\begin{eqnarray*}
e^{r R} D_{U([0,1])}^{\alpha } (R_{i}) &=&
\left( e^{R-R_{i}} \right)^{r} e^{r R_{i}}
D_{U([0,1])}^{\alpha } (R_{i}) \\
& \rightarrow & t_{i}^{-r} C(r) \qquad \text{as } R \rightarrow \infty .
\end{eqnarray*}
A simple calculation shows
\begin{eqnarray}
\left( \int_{M} f^{a_1}\,  d \lambda \right) ^{a_2} &=&
\left( \int \left( \sum_{i=1}^{m} s_{i}l^{-1}1_{A_{i}}
\right)^{a_1}\,  d \lambda 
\right) ^{a_2} \nonumber \\
&=& l^{(1- a_1)a_2} \left( \sum_{i=1}^{m} s_{i}^{a_1} \right)^{a_2} \nonumber \\
&=& l^{r} \left( \sum_{i=1}^{m} s_{i}^{a_1} \right)^{a_2}
= l^{r} \sum_{i=1}^{m} s_{i} t_{i}^{-r} .
\label{ref_inequ_high_sup_01}
\end{eqnarray}
Now, 
according 
to Lemma~\ref{ref_lemm_gray} there exist functions
$\varepsilon : (0, \infty ) \rightarrow (0, \infty )$ 
and $\delta : (0, \infty ) \rightarrow  ( 0, \infty  )$  such that
  for every $R>0 $ and 
quantizer $q \in \mathcal {Q}$ 
with $H_{\mu}^{\alpha }(q) \leq R$ and $|D_{\mu}^{\alpha }(R) - D_{\mu}(q)| \leq \delta (R)$
we have
\begin{equation}
\label{ref_equ_fsewq}
\max \{ \mu ( q^{-1}(a)  ) : a \in q ( \mathbb{R} ) \} < \varepsilon (R) 
\end{equation}
where $\varepsilon (R) \rightarrow 0 \text{ as } R \rightarrow
\infty$. Moreover,
\begin{equation}
\label{ref_equ_fsewq2}
\max \{  \diam ( A_{i} \cap q^{-1}(a) ) : a \in  q(\mathbb{R}), i \in \{1,\ldots,m \} \} < 
\frac{l \cdot \varepsilon (R)}{\min \{ s_{i} : i=1,\ldots,m \}} .
\end{equation}
Now, again, let $R \geq R_{0}$ and $\gamma > 0$. According to 
Lemma~\ref{lemm_mono}
let $q_{R} \in \mathcal{Q}$ be a quantizer whose codecells with positive $\mu-$mass are intervals,
satisfying $H_{\mu}^{\alpha }(q_{R}) \leq R$ and 
\begin{equation}
\label{eq:deltabound}  
|D_{\mu}^{\alpha }(R) - D_{\mu}(q_{R})| \leq \min ( \gamma e^{-rR},
\delta (R)) .
\end{equation}
Hence, $q_{R}$ satisfies also the relations (\ref{ref_equ_fsewq}) and (\ref{ref_equ_fsewq2}).
In view of Lemma~\ref{lemm_inf_g} let us assume w.l.o.g.\ that $q_{R} \in \mathcal{G}_{R}$ if $\alpha < 0$.
Now let $i \in \{1,\ldots,m\}$
and 
\[
I_{i}(q_{R}) = \{ a \in q_{R}( \mathbb{R} ) :
\lambda\bigl( q_{R}^{-1}(a) \setminus A_i\bigr) =0\}
\]
and 
\[
A_{i,q_{R}} =  \bigcup_{a \in I_{i}(q_{R})} q_{R}^{-1}(a)  .
\]
With
\[
J_{i}(q_{R}) = \{ a \in q_{R}(\mathbb{R}) \setminus I_{i}(q_{R}) : \mu ( A_{i} \cap q_{R}^{-1}(a) ) > 0  \}
\] 
we obtain from (\ref{ref_equ_fsewq2}) that 
\[
\lim_{R \rightarrow \infty }
\sup \{  \diam ( A_{i} \cap q_{R}^{-1}(a) ) : a \in  J_{i}(q_{R}) \} = 0.
\]
Every point of
$J_{i}(q_{R})$ is a codepoints of a codecell which is straddling the boundary of $A_{i}$
and is not $\mu - \text{a.s.}$ contained in $A_{i}$.
Hence $\{A_{i} \cap q_{R}^{-1}(a) : a \in  J_{i}(q_{R})\}$ consists of
at most two intervals and we get 
\begin{equation}
\label{ref_equ_diamcifr}
\lim_{R \rightarrow \infty } \diam ( A_{i,q_{R}} ) = \diam ( A_{i} ) = l.
\end{equation}
We compute
\begin{eqnarray}
D_{\mu }(q_{R}) &=& \sum_{i=1}^{m} s_{i} l^{-1} \int_{A_{i}} | x -
q_{R}(x) | ^{r} \, d \lambda (x) \nonumber  \\
& \geq &  \sum_{i=1}^{m} s_{i} l^{-1} \int_{A_{i,q_{R}}} | x -
q_{R}(x) | ^{r}\,  d \lambda (x) . \label{eq:ailower}
\end{eqnarray}
Let
\begin{eqnarray}
R_{i,q_{R}} &= & H_{U(A_{i,q_{R}})}^{\alpha }(q_{R}) \nonumber
 \\
&=&
\frac{1}{1- \alpha } \log \left( \sum_{a \in I_{i}(q_{R}) }  
( U(A_{i, q_{R}}) ( q_{R}^{-1}(a) ) )^{\alpha } \right) \nonumber  \\*
&=& \frac{\alpha }{\alpha - 1} \left( \log (s_{i}) -
\log \left( \frac{l}{ \lambda (A_{i,q_{R}}) }  \right) \right)
+ \frac{1}{1- \alpha } \log \left( \sum_{a \in I_{i}(q_{R}) } 
\mu ( q_{R}^{-1}(a) ) )^{\alpha } \right)  \label{ref_rift_567}
\end{eqnarray}
where $ U(A_{i, q_{R}}) ( q_{R}^{-1}(a) )$ is the measure of the cell
$ q_{R}^{-1}(a)$ under the uniform distribution on $A_{i, q_{R}}$.
Then using Lemma~\ref{ref_lemm_scale} and (\ref{eq:ailower}) we obtain
\begin{eqnarray}
D_{\mu }(q_{R}) & \geq &   
\sum_{i=1}^{m} s_{i} l^{-1} D_{U(A_{i,q_{R}})}^{\alpha } ( R_{i,q_{R}} ) \diam ( A_{i,q_{R}} ) \nonumber \\
&=& \sum_{i=1}^{m} s_{i} l^{-1} D_{U([0,1])}^{\alpha } ( R_{i,q_{R}} ) \diam ( A_{i,q_{R}} )^{1+r}.
\label{ref_dmufr_435}
\end{eqnarray}
Now pick a sequence $(L_{n})$ of non-negative real numbers, 
such that $L_{n} \rightarrow \infty $,
\begin{equation}
\label{eq:lnlim}
e^{rL_{n}} D_{\mu }^{\alpha }(L_{n}) \rightarrow 
\liminf_{R \rightarrow \infty } e^{rR} D_{\mu }^{\alpha }(R), 
\end{equation}
and
\begin{equation}
\label{eq:lnlimvi}
\frac{ e^{R_{i,q_{L_{n}}}} }{ e^{L_{n}} }  \rightarrow v_{i} \in [0,\infty ], \qquad i=1,\ldots,m
\end{equation}
as $n\to \infty$.  Because we want to determine a lower bound for the
optimal quantization error, using Proposition~\ref{ref_prop_unit_cube}~(i) we can assume w.l.o.g.\ that $q_{L_{n}}$
is $R_{i,q_{L_{n}}}-$optimal for $U(A_{i,q_{L_{n}}})$, i.e., that
$D_{U(A_{i,q_{L_{n}}})}^{\alpha }(R_{i,q_{L_{n}}}) =
D_{U(A_{i,q_{L_{n}}})}(q_{L_{n}})$.  By Proposition~\ref{ref_prop_unit_cube}~(ii) the quantizer $q_{L_{n}}$ divides
$A_{i,q_{L_{n}}}$ into $(k+1)$-intervals with $R_{i,q_{L_n}} \in (
\log (k), \log (k+1)]$ where at least $k$ intervals are of equal
length.  

We next prove that $R_{i,q_{L_{n}}}\to \infty$ as $n\to
\infty$. Assume to the contrary that $(R_{i,q_{L_{n}}})_{n \in \mathbb{N}}$ is
bounded. Then $k=k(R_{i,q_{L_{n}}})$ will also be bounded.  Thus let
$k_{0} \in \mathbb{N}$ such that $k \in \{ 1,\ldots,k_{0} \}$ for
every $n \in \mathbb{N}$.  Together with (\ref{ref_equ_diamcifr}) we
deduce
\begin{eqnarray*}
\liminf_{n \rightarrow \infty }D_{\mu}(q_{L_{n}}) & \geq & 
\liminf_{n \rightarrow \infty } \int_{A_{i,q_{L_{n}}}} | x -
q_{L_{n}}(x) |^{r}  \, d \mu (x) \\
&\geq & \liminf_{n \rightarrow \infty } 
\frac{s_{i}}{l} k \cdot 2 \int_{0}^{\frac{1}{2} \frac{\diam (A_{i,q_{L_{n}}})}{k+1}} x^{r}\, d \lambda (x) \\
&\geq & C(r) s_{i} l^{r} \min \biggl\{ \frac{k}{(k+1)^{r+1}} : k \in \{ 1, \ldots, k_{0} \} \biggr\} > 0.
\end{eqnarray*}
But this contradicts (cf.\  Corollary \ref{ref_coro_lim0}) 
\[
\limsup_{n \rightarrow \infty } 
D_{\mu}(q_{L_{n}}) \leq \limsup_{n \rightarrow \infty } ( D_{\mu }^{\alpha }(L_{n}) + \gamma e^{-r L_{n}}) = 0.
\]
Thus we obtain  that $R_{i,q_{L_{n}}} \rightarrow \infty$  as $n\to
\infty$ for all  $i \in \{1,\ldots,m\}$. 
Proposition~\ref{ref_prop_unit_cube} yields
\begin{equation}
\label{eq:lnoptlimit}
\lim_{n \rightarrow \infty } e^{rR_{i,q_{L_{n}}}} D_{U([0,1])}^{\alpha }
(R_{i,q_{L_{n}}}) = C(r), \qquad i=1,\ldots,m.
\end{equation}
Because $\gamma > 0$ was arbitrary 
we obtain
\begin{eqnarray}
\lefteqn{\liminf_{R \rightarrow \infty } e^{rR} D_{\mu }^{\alpha }(R)
  = \lim_{n \rightarrow \infty } e^{rL_n} D_{\mu }^{\alpha }(L_n)
= \lim_{n \rightarrow \infty } e^{rL_n} D_{\mu }(q_{L_n})
} \qquad  \nonumber \\
&\geq & \lim_{n \rightarrow \infty }
e^{L_{n} r} \sum_{i=1}^{m} s_{i} l^{-1} D_{U([0,1])}^{\alpha } ( R_{i,q_{L_{n}}} ) 
\diam ( A_{i,q_{L_{n}}} )^{1+r} \nonumber  \\
&=& C(r) \sum_{i=1}^{m} s_{i}v_{i}^{-r}  l^{r}
\label{ref_eqn_v001}
\end{eqnarray}
where the first equality holds by (\ref{eq:lnlim}), the second by
(\ref{eq:deltabound}), the inequality follows from
(\ref{ref_dmufr_435}), and the third equality follows from
(\ref{ref_equ_diamcifr}), (\ref{eq:lnlimvi}), and
(\ref{eq:lnoptlimit}).  In the last expression, $1/v_i = 0$ if
$v_i=\infty$. The case $v_{i}=0$ cannot occur because otherwise the
right hand side of (\ref{ref_eqn_v001}) is not finite, which would
contradict the assertion of Proposition~\ref{ref_lemm_pierce}.  Recall
that $\{A_{i} \cap q_{R}^{-1}(a) : a \in J_{i}(q_{R})\}$ contains at
most two intervals for every $i$ and $n \in \mathbb{N}$.  Now assume
that $\alpha \in (0,1)$.  In this case, since by (\ref{eq:interval1})
we can assume w.l.o.g.\ that $H_{\mu}^{\alpha}(q_{L_n})=L_n$, we
obtain
\begin{eqnarray}
1 & \leq & \delta_{1}(L_{n},\mu , q_{L_{n}} ) := \frac{e^{L_{n}(1- \alpha )}}
{  \sum_{i=1}^{m} \sum_{a \in I_{i}(q_{L_{n}})} \mu ( q_{L_{n}}^{-1}(a) )^{\alpha }  } \nonumber \\
& = & \frac
{ \sum_{i=1}^{m} \sum_{a \in I_{i}(q_{L_{n}})} \mu ( q_{L_{n}}^{-1}(a) )^{\alpha } +  
\sum_{a \in q_{L_{n}}(\mathbb{R}) \setminus \cup_{j=1}^{n} I_{j}(q_{L_{n}})} \mu ( q_{L_{n}}^{-1}(a) )^{\alpha }
 }
{  \sum_{i=1}^{m} \sum_{a \in I_{i}(q_{L_{n}})} \mu ( q_{L_{n}}^{-1}(a) )^{\alpha }  } \nonumber \\
& \leq & \frac
{  \sum_{i=1}^{m}       \sum_{a \in I_{i}(q_{L_{n}})} \mu ( q_{L_{n}}^{-1}(a) )^{\alpha } 
+ (m+1) \sup_{a \in q_{L_{n}}(\mathbb{R})} \mu ( q_{L_{n}}^{-1}(a) )^{\alpha }  
  }
{ \sum_{i=1}^{m} \sum_{a \in I_{i}(q_{L_{n}})} \mu ( q_{L_{n}}^{-1}(a) )^{\alpha }  } \nonumber \\
& = &  1 + 
\frac{(m+1) \sup_{a \in q_{L_{n}}(\mathbb{R})} \mu ( q_{L_{n}}^{-1}(a) )^{\alpha }}
{\sum_{i=1}^{m} \sum_{a \in I_{i}(q_{L_{n}})} \mu ( q_{L_{n}}^{-1}(a) )^{\alpha }} .
\nonumber
\end{eqnarray}
From (\ref{ref_rift_567}) and $\lim_{n \rightarrow \infty } R_{i,q_{L_{n}}} = \infty$ we deduce 
\[
\lim_{n\to \infty} \sum_{i=1}^{m} \sum_{a \in I_{i}(q_{L_{n}})} \mu ( q_{L_{n}}^{-1}(a) )^{\alpha } = \infty.
\]
Thus we get 
\begin{equation}
\label{ref_equ_deltedas}
\lim_{n \rightarrow \infty}\delta_{1}(L_{n},\mu , q_{L_{n}} ) = 1.
\end{equation}
Using Lemma~\ref{lemma_straddle_conv} in \ref{appD} we recognize
that the limit relation (\ref{ref_equ_deltedas}) also holds 
for  $\alpha < 0$.
Consequently, we deduce together with (\ref{ref_equ_diamcifr}) and
(\ref{ref_rift_567}) 
for every $\alpha \in (-\infty ,1)\setminus \{0\}$ that
\begin{eqnarray*}
\sum_{i=1}^{m} s_{i}^{\alpha } v_{i}^{ 1 - \alpha } &=&
\sum_{i=1}^{m} s_{i}^{\alpha } \lim_{n \rightarrow \infty } e^{(R_{i,q_{L_{n}}} - L_{n})(1- \alpha )} \\
&=& \lim_{n \rightarrow \infty }
\left( \sum_{i=1}^{m}  
\sum_{a \in I_{i}(q_{L_{n}})} \mu ( q_{L_{n}}^{-1}(a) )^{\alpha } 
\left( \frac{l}{\lambda (A_{i,q_{L_{n}}}) } \right)^{\alpha }
\right) e^{-L_{n} ( 1- \alpha )}  
= 1.
\end{eqnarray*}
Moreover we obtain from (\ref{ref_equ_deltedas}) and
(\ref{ref_rift_567}) that $v_{i} < \infty$ for every $i=1,\ldots,m$. 
Since $\sum_{i=1}^{m} s_{i}^{\alpha } v_{i}^{ 1 - \alpha }=1$, we can
apply  Lemma~\ref{ref_lem_lower_bound},  (\ref{ref_eqn_v001}),  and
(\ref{ref_inequ_high_sup_01}) to obtain for  $\alpha \in (-\infty
,1)\setminus \{0\}$ that  
\[
\liminf_{R \rightarrow \infty } e^{rR } D_{\mu }^{\alpha }(R) \geq 
C(r) \sum_{i=1}^{m} s_{i}t_{i}^{-r}  l^{r}
= C(r) \left( \int_{M} f^{a_1}\, d \lambda \right) ^{a_2} .
\] 

\noindent \emph{Step 2.}

Now let us assume that the support of $\mu$ is  a 
compact interval $M \subset \mathbb{R}$ and
that $f$ is continuous on $M$. 
Again, let $\alpha \in (-\infty, 1) \setminus \{ 0 \}$.
Let $i(f)= \min \{ f(x) : x \in M \}$ resp.\ $s(f)= \max \{ f(x) : x \in M \}$.
Clearly, $s(f)< \infty$. 
Let us assume that
\begin{equation}
\label{ref_essinf0958635}
i(f) > 0.
\end{equation} 
Let  $l=\lambda(M)$. 
For $k \in \mathbb{N}$ partition $M$ into intervals  $\{ A_{i} :
i=1,\ldots,k \}$ 
of common length $\lambda(A_i)=l/k$ 
Set  
\[
\mu _{k} = \sum_{i=1}^{k} \mu ( A_{i} ) U(A_{i})
\]
and
\[
f_{k} = \frac{d\mu_{k}}{d\lambda } = \sum_{i=1}^{k} \frac{\mu (A_{i}) }{\lambda (A_{i})} 1_{A_{i}}.
\]
The continuity of $f$ implies that $f_{k}$ converges pointwise to $f$ as
$k\to \infty$.
In view of (\ref{ref_essinf0958635}) and due to
\begin{eqnarray}
i(f) &=& \min \{ f(x) : x \in M \} \leq  \min \{ f_{k}(x) : x \in M \} \nonumber \\  
& \leq & \max \{ f_{k}(x) : x \in M \} \leq  \max \{ f(x) : x \in M \} = s(f) < \infty
\nonumber 
\end{eqnarray}
for every $k \in \mathbb{N}$, 
dominated convergence implies 
\begin{equation}
\label{ref_equ_hhlambda1}
\lim_{k \rightarrow \infty } \int_{M} f_{k} ^{a_1} \, d\lambda =
\int_{M} f^{a_1}\,  d\lambda .
\end{equation}
Moreover step 1 yields 
\begin{equation}
\label{ref_equ_qralp}
\liminf_{R \rightarrow \infty } e^{rR} D_{\mu_{k} }^{\alpha }(R) 
\geq  C(r) \left( \int_{M} f_{k}^{a_1 }\, d \lambda \right)^{a_2} .
\end{equation}
Now let $R \geq \max(1,R_0)$. Let $\delta > 0$ 
and $q_{R}$ be a quantizer with $|D_{\mu }^{\alpha }(R ) - D_{\mu }(q_{R} )| < \delta e^{-rR}$
and $H_{\mu }^{\alpha }(q_{R}) \leq R$. 
In addition, we assume w.l.o.g.\ (cf.\  Lemma~\ref{pierce_neg_para},
resp.\ Lemma~\ref{lemm_mono})  
that $q_{R} \in \mathcal{K}_{R}(C)$ if $\alpha < 0$ and  $R$ is large
enough,  and
 $H_{\mu }^{\alpha }(q_{R}) = R$ if $\alpha > 0$.
For $i=1,\ldots,k$ let
\[
0 < c_{i,k} = \min\{ f(x) : x \in A_{i} \}  \leq t_{i,k}=  \max
\{ f(x) : x \in A_{i} \}  < \infty  
\]
and
\[
0 < c_{k} = \min \left\{ \frac{c_{i,k}}{t_{i,k}} : i = 1,\ldots,k\right\} .
\]
For every $a \in q_{R}(\mathbb{R})$ we have 
\[
c_{k} \mu_{k} (q_{R}^{-1}(a)) \leq \mu ( q_{R}^{-1}(a) ) \leq 
c_{k}^{-1} \mu_{k} (q_{R}^{-1}(a)) .
\]
and because $f$ is uniformly continuous,
in view of (\ref{ref_essinf0958635}), we have 
\begin{equation}
\label{ref_equ_fracsdf44}
\lim_{k \rightarrow \infty} c_{k} = 1.
\end{equation}
We obtain from the definitions of $H_{\mu }^{\alpha }(q_{R})$ and
$c_k$ that 
\begin{equation}
\label{ref_eqnarr_gdsfg44}
\min \biggl( c_{k}^{\frac{\alpha }{1 - \alpha }} , c_{k}^{\frac{\alpha }{ \alpha - 1}} \biggr) \leq
e^{H_{\mu }^{\alpha }(q_{R}) - H_{\mu_{k}}^{\alpha }(q_{R}) } 
\leq \max \biggl( c_{k}^{\frac{\alpha }{1 - \alpha }} , c_{k}^{\frac{\alpha }{ \alpha - 1}} \biggr) =: v_{k}
\end{equation}
where $v_{k}\to 1$ as $k\to \infty$. Again from the uniform continuity of $f$ we deduce
\begin{equation}
\label{ref_equ_shgnz94sf44}
\lim_{k \rightarrow \infty } \| f - f_{k} \|_{\infty } = 0
\end{equation}
with
\[
\| f - f_{k} \|_{\infty } = \max \{ | f(x) - f_{k}(x) | : x \in M \} .
\]
In view of Proposition~\ref{ref_lemm_pierce} there exists an  $m_0>0$,
such that for all $R\ge1$
\begin{equation}
\label{upp_bou_pierce_m}
D_{\mu }^{\alpha }(R) \leq m_0 e^{-rR}.
\end{equation}
By the choice of $q_{R}$ we have
\begin{equation}
\label{low_b_pr1}
e^{r R} D_{\mu }^{\alpha }(R ) \geq e^{r R }  D_{\mu }(q_{R} ) - \delta .
\end{equation}
Thus (\ref{upp_bou_pierce_m}) yields
\begin{eqnarray*}
| D_{\mu }(q_{R}) - D_{\mu_{k} }(q_{R} )  | 
& \leq  &
\int_{M} | x - q_{R}(x) |^{r} | f(x) - f_{k}(x)  |\, d \lambda (x) \\
& \leq &
\| f - f_{k} \|_{\infty } \frac{1}{i(f)}  \int_{M} | x - q_{R}(x)
|^{r} f(x)\,   d \lambda (x) \\
&\leq &
\| f - f_{k} \|_{\infty } \frac{1}{i(f)} ( D_{\mu }^{\alpha }(R)  + \delta e^{-r R} ) \\
& \leq &
\| f - f_{k} \|_{\infty } \frac{m_0  + \delta}{i(f)} e^{-r R}. 
\end{eqnarray*}
Hence, (\ref{low_b_pr1}) gives 
\begin{eqnarray}
\lefteqn{ e^{r R } D_{\mu }^{\alpha }(R )} \quad  \nonumber \\
& \geq &
e^{r R } ( D_{\mu_k }(q_{R} ) - |D_{\mu }(q_{R} ) - D_{\mu_k }(q_{R} ) | ) - \delta \nonumber \\
& \geq &  
e^{r R }  D_{\mu_{k} }(q_{R} ) - \| f - f_{k} \|_{\infty } \frac{m_0  + \delta}{i(f)}    - \delta  
\nonumber \\
& = &
e^{r ( R - H_{\mu_k}^{\alpha }(q_R) ) } e^{r H_{\mu_k}^{\alpha }(q_R) } D_{\mu_{k} }(q_{R} ) - 
\| f - f_{k} \|_{\infty } \frac{m_0 + \delta}{i(f)} - \delta  . \nonumber \\
& \geq &
e^{- r ( | R - H_{\mu}^{\alpha }(q_R) | + | H_{\mu}^{\alpha }(q_R) - H_{\mu_k}^{\alpha }(q_R) | ) } 
e^{r H_{\mu_k}^{\alpha }(q_R) } D_{\mu_{k} }(H_{\mu_k}^{\alpha }(q_R)
)\nonumber \\ 
& &\mbox{}  - 
\| f - f_{k} \|_{\infty } \frac{m_0 + \delta}{i(f)} - \delta  . 
\label{low_bound_r1}
\end{eqnarray}
Due to the choice of $q_{R}$  
there exists a function $g: (0,\infty) \mapsto (0, \infty)$  with
$e^{ R - H_{\mu }^{\alpha }( q_{R}) } \leq g(R)$ and $g(R) \to 1$ as $R \to \infty$.
Equation (\ref{ref_eqnarr_gdsfg44}) implies 
\begin{equation}
\label{upp_bou_entr_diff_muk}
| H_{\mu_k }^{\alpha }(q_R) - R | \leq 
| H_{\mu_k }^{\alpha }(q_R) - H_{\mu }^{\alpha }(q_R) | 
+ | H_{\mu }^{\alpha }(q_R) - R | 
 \leq 
| \log ( v_{k} ) | + | \log ( g(R) ) | .
\end{equation}
Clearly, inequality (\ref{upp_bou_entr_diff_muk}) yields 
$\lim_{R \rightarrow \infty} H_{\mu_k }^{\alpha }(q_R)= \infty$.
Applying relations (\ref{upp_bou_entr_diff_muk}) 
and (\ref{ref_equ_qralp})
to (\ref{low_bound_r1}) we deduce
\[
\liminf_{R \to \infty } e^{r R} D_{\mu }^{\alpha }(R ) \geq
e^{-r| \log ( v_{k} ) |} 
C(r) \left( \int_{M} f_{k}^{a_1 }\,  d \lambda \right)^{a_2}  -
\| f - f_{k} \|_{\infty } \frac{m_0  + \delta}{i(f)}   - \delta.
\]
By letting $k\rightarrow \infty$ and noting that  $\delta >0$ is
arbitrary we obtain  
from (\ref{ref_equ_hhlambda1}),
(\ref{ref_equ_fracsdf44}) and (\ref{ref_equ_shgnz94sf44})
that 
\begin{equation}
\label{eq_lowerstep2}  
\liminf_{R \rightarrow \infty } e^{r R } D_{\mu }^{\alpha }(R) \geq 
C(r) \left( \int_M f^{a_1 }\,  d \lambda \right)^{a_2}.    
\end{equation} 

Next we show a matching upper bound for $\alpha\in (-\infty,0)$.  The
assumptions on $f$ allow us to use Corollary~\ref{asymp_quant_comp}
showing the existence of a sequence of companding quantizers
$(q_N)=\bigl(Q_{f^*,N} \bigr)$ such that
\begin{equation}
\label{seq_limit}
\lim_{N\to \infty} e^{r H_{\mu}^{\alpha}(q_{N})} D_{\mu}(q_{N}) \leq 
C(r) \left(\int_M f^{a_1}\,  d \lambda \right)^{a_2}.
\end{equation}
Let $R_N= H_{\mu}^{\alpha}(q_{N})$ and note that Proposition \ref{ref_rel_renyi_entr}
implies 
\begin{equation}
\label{RNconv}
 \lim_{N\to \infty} R_N=\infty, \quad   \lim_{N\to \infty}
(R_N-R_{N-1}) =0.  
\end{equation}
Let $R>0$ be arbitrary and let  $n= \max\{N:  R_N\leq R\}$. Then
$R_n\leq R< R_{n+1}$ and since $D_{\mu}^{\alpha}(R)$ is
a nonincreasing function of $R$
\[
e^{rR} D_{\mu}^{\alpha}(R) \le e^{r R_{n+1}} D_{\mu}^{\alpha}(R_n) \leq
e^{r(R_{n+1}-R_n)}  e^{rR_n} D_{\mu}(q_{N}).
\]
This, (\ref{seq_limit}),   and (\ref{RNconv})  yield
\[
\limsup_{R\to \infty} e^{rR} D_{\mu}^{\alpha}(R) \leq  C(r) \left(\int_M f^{a_1} 
\, d \lambda\right)^{a_2} .
\]
Together with (\ref{eq_lowerstep2} ) this completes the proof for the case
$\alpha\in (-\infty,0)$.

\noindent \emph{Step 3.} 

\nopagebreak
Now let $\mu$ be arbitrary, but satisfying all assumptions of the theorem.
Let $\alpha \in (0, 1)$.
For $k,l \in \mathbb{N}$ let 
\[
I_{1} =  ( - \infty , -k ) ,\quad  I_{2} =  [-k,k] \cap
f^{-1}([1/l,l]), \quad
I_{3} = ( k , \infty )
\] 
and
\[
I_{4} = {} \mathbb{R} \setminus ( I_{1} \cup I_{2} \cup I_{3} ).
\]
Because $f$ is bounded and weakly unimodal we can pick $k_{0} \in \mathbb{N}$ such that $\mu ( I_{2} ) > 0$,
$f^{-1}([1/l,l]) = f^{-1}([1/l,\infty ))$, $1/l<l_0$ (see
  Definition~\ref{def_weak_unimod}), and
\[
\mu (I_{2}) = \max \{ \mu (I_{i} ) : i \in \{ 1,2,3,4 \} \}
\]
for every $k \geq k_{0}$ and $l \geq k_{0}$. Note that $I_2$ is a
compact interval.
Now let $\min (k,l) \geq k_{0}$. Let us first assume that $\mu (I_{i} ) > 0$ for
every $i=1,2,3,4$. Consider the  decomposition $\mu = \sum_{i=1}^{4}
\mu (I_{i} ) \mu ( \cdot | I_{i} )$. 
Lemma~\ref{ref_lemm_smu12}  
yields  
\begin{equation}
\label{low_bou_i2}
\liminf_{R \rightarrow \infty } e^{rR} D_{\mu }^{\alpha }(R) \geq 
\mu ( I_{2} ) ^{\frac{1-\alpha + \alpha r}{1- \alpha }} \liminf_{R \rightarrow \infty}
e^{rR} D_{\mu ( \cdot | I_{2} )}^{\alpha } ( R  ).
\end{equation}
By construction,  $i(\mu (I_{2})^{-1}
f1_{I_{2}})>0$, $s(\mu (I_{2})^{-1} f1_{I_{2}})<\infty$, and $\mu
(I_{2})^{-1} f1_{I_{2}}$ is supported by a compact interval. Thus we
can apply the results of step 2. Together with the definition of $a_1$
and $a_2$ we deduce from (\ref{eq_lowerstep2}) and (\ref{low_bou_i2})
that
\begin{eqnarray}
\label{ref_equ_limidjri305}
\liminf_{R \rightarrow \infty } e^{rR} D_{\mu }^{\alpha }(R)  & \geq & 
\mu ( I_{2} )^{\frac{1-\alpha + \alpha r}{1- \alpha }} C(r) 
\left( \int_{I_{2}}( \mu
\bigl(I_{2})^{-1} f\bigr)^{a_1} \, d \lambda  \right)^{a_2} \nonumber
\\ 
& \geq &  
\mu ( I_{2} )^{\frac{1-\alpha + \alpha r}{1- \alpha }-a_1 a_2} C(r) 
\left( \int_{I_{2}} f^{a_1} \, d \lambda  \right)^{a_2} \nonumber \\
&=& C(r) \left( \int_{I_{2}} f^{a_1}\,  d \lambda  \right)^{a_2}.
\end{eqnarray}
Due to $a_1>0 $ and by monotone convergence we obtain
\begin{equation}
\label{ref_equkllim}
\lim_{k \rightarrow \infty } \lim_{l \rightarrow \infty } \int_{I_{2}}
f^{a_1}\,  d \lambda  = 
\int f^{a_1} \, d \lambda.
\end{equation}
Thus we get from (\ref{ref_equ_limidjri305}) that 
\begin{equation}
\label{step3lower}
\liminf_{R \rightarrow \infty } 
e^{rR} D_{\mu }^{\alpha }(R) \geq 
C(r) 
\left( \int f^{a_1}\,  d \lambda  \right)^{a_2}.
\end{equation}
The case $\min \{\mu ( I_{1} ), \mu ( I_{3} ), \mu ( I_{4})
\} = 0$ can be treated similarly.

To show the matching upper bound, note that since $ 
 1/l \le f\le l$ on $I_2$, we can directly apply
 Corollary~\ref{asymp_quant_comp} to $\mu_2:= \mu(\cdot|I_2)$ and its
 density  $f_2:=\mu (I_{2})^{-1}
 f1_{I_{2}}$ to show the existence of a sequence of companding
 quantizers $(q_N)=\bigl(Q_{f_2^*,N} \bigr)$ such that
\[
\lim_{N\to \infty} e^{r H_{\mu_2}^{\alpha}(q_{N})} D_{\mu_2}(q_{N}) \leq 
C(r) \left(\int f_2^{a_1}\,  d \lambda \right)^{a_2}.
\]
Thus by the same argument as in the previous step
\begin{eqnarray}
\label{ref_c2_wer55}
\limsup_{R\to \infty} e^{rR} D_{\mu_2}^{\alpha}(R) &\leq & C(r)
\left(\int f_2^{a_1}  
\, d \lambda\right)^{a_2} \nonumber \\
&= & C(r)\mu ( I_{2} )^{-a_1 a_2}  
\left( \int_{I_{2}} f^{a_1} d \lambda  \right)^{a_2}.
\end{eqnarray}
Again from Lemma \ref{ref_lemm_smu12}  
we obtain for $\alpha \in ( 0,1 )$
the upper bound
\begin{eqnarray*}
\limsup_{R \rightarrow \infty } e^{rR} D_{\mu }^{\alpha }(R) & \leq &  
C(r)\mu ( I_{2} )^{1-a_1 a_2}  t_2^{-r}
\left( \int_{I_{2}} f^{a_1} d \lambda  \right)^{a_2} \\
&& \mbox{}  + 
3 \max_{i=1,3,4} \mu ( I_{i} ) t_{i}^{-r}  
\limsup_{R \rightarrow \infty } e^{rR} D_{\mu ( \cdot | I_{i}) }^{\alpha }(R) .
\end{eqnarray*}
Using Proposition \ref{ref_lemm_pierce} we get a $K > 0$ independent of $k,l$ such that
\[
\limsup_{R \rightarrow \infty } e^{rR} D_{\mu ( \cdot | I_{i}) }^{\alpha }(R) \leq K \mu (I_{i})^{-r/(r + \delta )}.
\]
for $i \in \{1,3,4\}$. 
Letting $l,k$ tend to infinity we obtain by the definition of $t_{i}$ that
$\lim_{l,k \rightarrow \infty} t_{2}^{-r}=1$, resp.\
$\lim_{l,k \rightarrow \infty} t_{i}^{-r} = 0$, $i=1,3,4$. 
Using (\ref{ref_equkllim}) and (\ref{ref_c2_wer55}) we get 
\[
\limsup_{R \rightarrow \infty } e^{rR} D_{\mu }^{\alpha }(R) \leq
C(r) \left( \int f^{a_1} d \lambda  \right)^{a_2}
\]
which, together with the lower bound (\ref{step3lower}) completes the
proof for the case $\alpha\in (0,1)$.  

\noindent\emph{Step 4.} 

Let $\alpha = - \infty$ and $\beta \in (-\infty, 0)$.
Fix $a_1=a_1(\beta )$ and $a_2=a_2(\beta )$. 
From Lemma~\ref{lemm_mono} we deduce
\[
\liminf_{R \rightarrow \infty } e^{rR} D_{\mu }^{-\infty}(R) \geq
\lim_{R \rightarrow \infty} e^{rR} D_{\mu }^{\beta }(R) 
= C(r) \biggl( \int_{\supp (\mu )} f^{a_1} \, d
\lambda\biggr)^{a_2} .
\]
Since the integral on the  right hand side converges to  $ \int f^{1-r}\,
d \lambda$ as $\beta\to -\infty$, we obtain
\[
\liminf_{R \rightarrow \infty } e^{rR} D_{\mu }^{-\infty}(R) \geq
 C(r) \int_{\supp (\mu )} f^{1-r}\,  d \lambda.
\]
The proof is finished by noting that 
Corollary~\ref{asymp_quant_comp} and an argument identical to the one
used in step 2 provide a matching upper bound.
\end{proof}

\section{Concluding remarks}
\label{sec_concl}

We have determined the sharp distortion asymptotics for optimal scalar
quantization with R\'enyi entropy constraint for values $\alpha\in
[-\infty,0)\cup (0,1)$ of the order parameter. Our results, together
  with  the classical $\alpha =0$ and $\alpha=1$ cases, and the recent result
  \cite{Kre10b} for $\alpha \in [r+1,\infty]$,  leave only  open  the
  case  $\alpha \in (1,1+r)$ for which non-matching  upper and
  lower bounds are known to date  (cf.\ \cite{Kre10b}). We note that
  the upper bound provided by 
  optimal companding in
  Corollary~\ref{asymp_quant_comp} also holds for  $\alpha \in
  (1,1+r)$. Based on this, we conjecture that our main result is also
  valid for this remaining range of the $\alpha$ parameter.

Apart from the question of high-rate asymptotics, it remains open if
optimal quantizers exist for all $\alpha \in [-\infty, 1+r ]$. The
non-existence of optimal quantizers in case of $\alpha > 1+r $ has
already been shown in \cite{Kre10a}.  Looking at our main result, it is
obvious that the integrals on the right hand sides of
(\ref{ref_equ_erdmuarf1}) and (\ref{sharp_asymp_alph_neg_inf}) are not
finite in general if $\mu$ has unbounded support. It needs further
research to determine the exact high-rate error asymptotics for
certain classes of source distributions with unbounded support and
$\alpha < 0$. Of special interest is the question whether 
companding quantizers with point density $f^{*}$ are still
asymptotically optimal for source densities with unbounded support.
The definition of $f^{*}$ needs the integrability of $f^{1 /
  a_2}$ in order to guarantee a finite number of quantization
points for the (asymptotically optimal) companding quantizer.
Nevertheless, the right hand side of (\ref{ref_equ_erdmuarf1}) is
defined only when $f^{a_1}$ is integrable.  It remains an open
problem if (\ref{ref_equ_erdmuarf1}) still holds for some $\alpha \in
{} (-\infty ,1+r) \backslash \{1\}$ and distributions where
$f^{a_1}$ is integrable but $f^{1 / a_2}$ is not.
Such an example, if it exists, would show that the companding approach
is not always applicable to generate asymptotically optimal
quantizers, but the known asymptotics (\ref{ref_equ_erdmuarf1}) are
still in force.  Another interesting open question is whether  the
non-integrability of $f^{1/ a_2}$ always implies  the
non-existence of optimal quantizers with a finite codebook.

A careful reading of the proofs shows that many arguments can be
straightforwardly generalized to the $d$-dimensional case and $r$th
power distortion based on some norm on $\mathbb{R}^d$. For $\alpha\in
[-\infty,1)$ and under appropriate conditions we conjecture that
\[
\limsup_{R\to \infty} e^{\frac{r}{d}R} D_{\mu}^{\alpha}(R) =  C(r,d)
\left(\int f^{a_1}  
\, d \lambda^d \right)^{a_2} 
\]
where   $ \lambda^d$ is the $d$-dimensional Lebesgue measure,
\[
a_{1} = \frac{1- \alpha + \alpha \frac{r}{d}}{1- \alpha +
  \frac{r}{d}},
\quad 
a_{2} =  \frac{1-\alpha+\frac{r}{d}}{1-\alpha}
\]
and $C(r,d)$ is a positive constant that depends only on $r$, $d$, and the
underlying norm. 

However, some important steps in our proofs are definitely restricted
to the scalar case, e.g., equation (\ref{ref_equ_onedim}) in Lemma
\ref{ref_lemm_gray}, which yields (\ref{ref_equ_diamcifr}).  One of
the key problems concerns the first step of the proof of
Theorem~\ref{ref_theo_main_resul}.  In higher dimensions one has to
control the contribution to distortion and entropy of cells straddling
the common boundary of at least two touching cubes in the support of
$\mu$.  The ``firewall'' construction used in case of $\alpha = 0$
(see \cite[p.87]{GrLu00}) does not seem to work  in the
general case. For $\alpha \neq 0$ it seems to be very hard to control
the entropy of the quantizer when adding or changing codecells and
codepoints in a certain region.  In order to progress in this
direction, one would certainly need more refined knowledge about the codecell
geometry of (asymptotically) optimal quantizers.  Even in the case
$\alpha = 0$ little is known on this subject (results in \cite{Sag08}
highlight the difficulty of the problem).  As already mentioned in the
introduction, the methods used for the case $\alpha = 1$ are also not
applicable to the general case because they rely on the special
functional form of the Shannon entropy.  It appears that
generalization to higher dimensions would necessitate the development
of  isodiametric inequalities for the (bounded) codecells of
asymptotically optimal quantizers.

\appendix
\appsec
\label{appA}

\noindent\emph{Proof of Lemma~\ref{lemm_mono}.} \  To show
(\ref{eq:mono}), let $q \in \mathcal{Q}$ be such that
$H_{\mu}^{\alpha}(q) \leq R$ and assume $\beta \leq \alpha$.  It is
easy to check that $\frac{d}{d\gamma} H_{\mu}^{\gamma}(q)\le 0$ on
$(-\infty,0)\cup (0,1)\cup (1,\infty)$, and thus the mapping $ \gamma
\mapsto H_{\mu}^{\gamma}(q)$ is non-increasing on these intervals. In
view of the continuity  of  $H_{\mu}^{\alpha}(q)$ at $\alpha\in
\{0,1\}$ (see (Remark~\ref{remark_hospital}(a)) we deduce that
$H_{\mu}^{\alpha}(q) \leq H_{\mu}^{\beta}(q)$.  Now the assertion
follows from Definition (\ref{ref_darst_f_q}).

Equation (\ref{eq:interval}) of the second statement follows directly
from the more general results Theorem~3.2 and Proposition~4.2 in
\cite{Kre11}.  For (\ref{eq:interval1}), we refer to \cite[Proposition
  2.1.(i)]{Kre10a}.  \qed

\medskip

\noindent\emph{Proof of Proposition~\ref{bennet_int}.} \ \ We proceed
in several steps.
\smallskip \\ \emph{1.} \ Since $\hat{G}$ is increasing, it has a derivative
$\hat{G}'$ a.e.\ (by convention we set $\hat{G}'(x)=0$ if $\hat{G}$ is
not differentiable at $x$). Also, note that $G$ and $\hat{G}$ are strictly
increasing on $I$, resp.\ on $[0,1]$, and $\hat{G}$ is Lipschitz with
constant $(\mathrm{ess } 
\inf_{I} g )^{-1}$ and thus absolutely continuous. Since
$\hat{G}^{\prime}( x ) = 1/g(\hat{G}(x))$ a.e.\ on $(0,1)$, we obtain
\[
\int \frac{f}{g^{r}} \, d \lambda = \int (\hat{G}^{\prime }(G(x)))^{r}\, d \mu (x).
\]
\smallskip \\
\emph{2.} \  Next we prove
\begin{equation}
\label{conv_mn}
\lim_{N \rightarrow \infty} \int (m_{N}(G(x)))^{r}\,  d \mu (x) =
\int (\hat{G}^{\prime }(G(x)))^{r} \, d \mu (x),
\end{equation}
where $m_{N}$ is the piecewise constant function defined by  $m_{N}(x)= N
\int_{[(i-1)/N,i/N)} \hat{G}^{\prime }\, d \lambda$, if $x\in[(i-1)/N,i/N)$,
    $i=1,\ldots,N$ and $m_N(x)=0$ otherwise.

Lebesgue's differentiation theorem (see, e.g.,
\cite[Thm.\  6.2.3]{Coh80}) implies 
that $m_{N}(x) \rightarrow \hat{G}^{\prime }(x)$ as $N \rightarrow \infty$
 a.e.\ on  $(0,1)$.
Also, from $\hat{G}^{\prime}( x ) = 1/g(\hat{G}(x))$ we deduce 
\begin{eqnarray*}
m_{N}(\cdot ) 
& \leq & 
\max \biggl\{ N \int_{[(i-1)/N,i/N)} \frac{1}{g(\hat{G}(x))}\, d \lambda (x) :
  1\le  i\leq N  \biggr\} \nonumber \\
& \leq &
(\mathrm{ess } \inf\nolimits_{I} g)^{-1}. 
\end{eqnarray*}
Thus, the dominated convergence theorem  yields (\ref{conv_mn}). \smallskip \\
\emph{3.} \  Let $\bar{Q}_{g,N}$ be the quantizer with the
same codecells as $Q_{g,N}$ but with the midpoints of the codecells as
quantization points:
\[
\bar{Q}_{g,N}(x)=  \frac{1}{2}( \hat{G}(i/N) + \hat{G}((i-1)/N) )
  \quad \text{if 
  $x\in I_{i,N}$,\ \ }  i=1,\ldots,N.
\]
We will show that 
\[
\lim_{N \rightarrow \infty } N^{r} \int | x - \bar{Q}_{g,N}(x) |^{r} f_{N}(x)\, d \lambda(x) 
= C(r) \int \frac{f}{g^{r}}\, d \lambda, \quad \quad 
\]
where $f_N$ is the piecewise constant density defined by $f_{N}(x) =
\frac{1}{\lambda(I_{i,N})} \int_{I_{i,N}} f \, d \lambda$ if $x\in
I_{i,N}$, $i=1,\ldots,N$ and  $f_N(x)=0$ otherwise. 

A simple calculation shows
\begin{eqnarray*}
\lefteqn{ N^{r} \int | x - \bar{Q}_{g,N}(x) |^{r}
  f_{N}(x)\, d \lambda(x) }\quad \quad \\ 
&=& N^{r} \sum_{i=1}^{N} C(r) f_{N}\left(\hat{G}\biggl(\frac{i}{N}\biggr)\right) ( \lambda ( I_{i,N} ) )^{r+1} \\
&=& C(r) N^{r} \sum_{i=1}^{N} \int_{I_{i,N}} f(x) (N^{-1}m_{N}(G(x)))^{r}\, d \lambda(x) \\
&=&  C(r) \int m_{N}(G(x))^r f(x)\,  d \lambda(x).
\end{eqnarray*}
Now the assertion follows from steps 1 and  2.
\smallskip \\
\emph{4.}  \  Next we show that 
\[
\lim_{N \rightarrow \infty } N^{r} D_{\mu }(\bar{Q}_{g,N}) = C(r) \int
\frac{f}{g^{r}} \, d \lambda.
\]

For any $i \in \{ 1,\ldots,N \}$ and $x \in I_{i,N}$ we have 
\[
 N^{r} |x - \bar{Q}_{g,N}(x) |^{r}  \leq 
N^{r}\bigl(\lambda(I_{i,N})\bigr)^r = 
N^{r} \left( \int_{((i-1)/N, i/N)} \hat{G}^{\prime }\, d \lambda\right)^{r} =
m_{N}(G(x))^{r}. 
\]
Therefore
\begin{eqnarray*}
\lefteqn{ \biggl| N^{r} \int | x - \bar{Q}_{g,N}(x) |^{r} f(x)\, d \lambda(x) -
  N^{r} \int | x - \bar{Q}_{g,N}(x) |^{r} f_{N}(x)\, d
  \lambda(x) \biggr| } \qquad \qquad  \\
& \leq &
\int
  m_N(G(x))^r |f(x)-f_N(x)|\, d \lambda(x) \nonumber \\
&\le &
(\mathrm{ess } \inf\nolimits_{I} g)^{-1}   \int |f(x)-f_N(x)|\, d \lambda(x).
\end{eqnarray*}
By Lebesgue's differentiation theorem we have $f_{N} \rightarrow f$
a.e., and now  Scheff\'e's theorem  \cite[Thm.\  16.11]{Bil86}
implies 
\[
 \lim_{N\to \infty} \int |f-f_N|\, d\lambda =0.
\]
Hence,
\[
\lim_{N \rightarrow \infty } N^{r} D_{\mu }(\bar{Q}_{g,N}) = 
\lim_{N \rightarrow \infty }
N^{r} \int| x - \bar{Q}_{g,N}(x) |^{r} f_{N}(x)\, d \lambda(x),
\]
where the right hand side is equal to $ C(r) \int \frac{f}{g^{r}}\,
d \lambda$ from step 3.
\smallskip \\
\emph{5.} \  In view of step 4,  to  prove relation (\ref{bennet}) it suffices
to show  that
\begin{equation}
\label{mid_vs_general_qpoint}
\lim_{N \rightarrow \infty } N^{r} D_{\mu }(\bar{Q}_{g,N}) =
\lim_{N \rightarrow \infty } N^{r} D_{\mu }(Q_{g,N}).
\end{equation}

Applying the mean value theorem of differentiation (if $r>1$) or by
the triangle inequality (if $r=1$), we have for each $i \in \{
1,\ldots,N \}$ and $x\in I_{i,N}$,
\begin{eqnarray}
\lefteqn{ N^{r}\bigl| | x - \bar{Q}_{g,N}(x) |^{r} - | x - Q_{g,N}(x)
  |^{r} \bigr|} \nonumber\qquad  \\ 
& \leq &
N^{r} | \bar{Q}_{g,N}(x) - Q_{g,N}(x) | r ( \lambda(I_{i,N}))^{r-1} .
\label{upp_bou_mean_v_th}
\end{eqnarray}
Further, note that the definitions of $m_{N}$, $Q_{g,N}$, $\bar{Q}_{g,N}$
also yield
\begin{eqnarray}
\lefteqn{N^{r}\bigl| | x - \bar{Q}_{g,N}(x) |^{r} - | x - Q_{g,N}(x)
  |^{r} \bigr |} \nonumber \qquad \\ 
& \leq &
r N^{r} ( \lambda(I_{i,N}))^{r} = r \cdot m_{N}(G(x))^{r}. 
\label{upp_bou_dom_conv}
\end{eqnarray}

Let $I_{i,N}^1$ and $I_{i,N}^2$ denote the partition of $I_{i,N}$ into
two intervals of equal length $\lambda(I_{i,N})/2$. Let $j=j(x) \in \{1,2\}$
be such that $x \in I_{i,N}^{j}$. Letting  $a=\inf (I_{i,N})$ and $b =
\sup ( I_{i,N} )$ we obtain by the absolute continuity of $\hat{G}$ 
\[
\hat{G}((a+b)/2) = \hat{G}(a) + \int_{a}^{(a+b)/2} \hat{G}^{\prime }\,  d \lambda
\]
and
\[
\frac{\hat{G}(a) + \hat{G}(b)}{2} = \hat{G}(a) + \int_{a}^{(a+b)/2} \frac{\hat{G}(b) -
  \hat{G}(a)}{b-a} \, d \lambda .
\]
Thus we get 
\begin{eqnarray}
| \bar{Q}_{g,N}(x) - Q_{g,N}(x) | 
&=& 
\biggl| \hat{G}((a+b)/2) - \frac{\hat{G}(a) + \hat{G}(b)}{2} \biggr| \nonumber \\
& \leq &
\lambda(I_{i,N}^{j}) |L(x,N)|
\label{diff_q_point}
\end{eqnarray}
where
\begin{eqnarray*}
L(x,N) &=& \frac{1}{\lambda(I_{i,N}^{j})} \int_{I_{i,N}^{j}} 
\hat{G}^{\prime }  \, d \lambda
- 
\frac{\hat{G}(b)-\hat{G}(a)}{b-a}  \\
&=& 
\frac{1}{\lambda(I_{i,N}^{j})} \int_{I_{i,N}^{j}} 
\hat{G}^{\prime } \, d \lambda 
- 
\frac{1}{\lambda(I_{i,N})} \int_{I_{i,N}} 
\hat{G}^{\prime } \, d \lambda
\end{eqnarray*}
if $x\in I_{i,N}^j$, $i=1,\ldots,N$.  
In view of (\ref{upp_bou_mean_v_th}) and (\ref{diff_q_point}) we deduce
\begin{eqnarray*}
N^{r}\bigl | | x - \bar{Q}_{g,N}(x) |^{r} - | x - Q_{g,N}(x) |^{r} \bigr|
& \leq &
r N^{r} \lambda(I_{i,N}^{j})| L(x,N)| ( \lambda(I_{i,N}))^{r-1}
\nonumber \\ 
& \leq &
r \cdot m_{N}(G(x))^{r}| L(x,N)| . 
\end{eqnarray*}
Lebesgue's differentiation theorem yields $\lim_{N \to\infty} L(x,N) =
0$ a.e. Hence
\begin{equation}
\label{conv_zero}
\lim_{N\to \infty} N^{r}\bigl| | x - \bar{Q}_{g,N}(x) |^{r} - | x
- Q_{g,N}(x) |^{r} \bigr| = 0 \quad \text{a.e.}
\end{equation}
Due to the relations (\ref{conv_zero}), (\ref{upp_bou_dom_conv}) and 
together with step 2, we can apply the generalized dominated convergence theorem
\cite[Chapter 11.4]{Roy68} 
to obtain (\ref{mid_vs_general_qpoint}). \qed

\begin{lemma}
\label{dens_inv_compr}
Let $\mu$ be a probability distribution which is absolutely continuous with
respect to $\lambda$ and let $f$ denote its density.
Let $G$ be a compressor for $\mu$ with point density $g$. 
If $\{g=0\} \subset \{ f = 0 \}$,
then
$\mu \circ G^{-1}$ is absolutely continuous with
respect to $\lambda$.
Also 
\begin{equation}
\label{h_density}
\hat{G}^{\prime }(y )= \frac{1}{g(\hat{G}(y ))} 1_{\{ g>0 \}}(\hat{G}(y))
\quad   \text{a.e. on $(0,1)$} 
\end{equation}
and $\mu \circ G^{-1}$ has the density 
\begin{equation}
\label{density}
f_{G}(y) = f(\hat{G}(y ))\hat{G}'(y) =  \frac{f(\hat{G}(y ))}{g(\hat{G}(y))} 1_{ \{g> 0 \} }(\hat{G}(y ))
\quad   \text{a.e. on $(0,1)$.}
\end{equation}
\end{lemma}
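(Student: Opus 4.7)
The plan is to establish the pushforward formula (\ref{density}) first via a quantile transformation argument (which also yields absolute continuity of $\mu\circ G^{-1}$), and then derive (\ref{h_density}) separately using Lebesgue's differentiation theorem together with the identity $G\circ \hat G = \mathrm{id}$ on $(0,1)$.

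First I would record two background facts. Since $g$ is a probability density, $G$ is continuous and nondecreasing on $\mathbb{R}$ with $G(\pm\infty)\in\{0,1\}$, and the closed set $\{x : G(x)\leq y\}$ attains its supremum, so $G(\hat G(y)) = y$ for every $y\in (0,1)$. Also, $\hat G$ is nondecreasing on $(0,1)$, hence differentiable $\lambda$-a.e. by Lebesgue's theorem on monotone functions. The key auxiliary fact is that $\hat G$ is the quantile function of $\nu_g := g\lambda$: by checking the CDF, $\hat G_\# U(0,1) = \nu_g$, so that
\[
\int h\, d\nu_g = \int_0^1 h(\hat G(y))\, dy
\]
for any nonnegative Borel $h$. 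Applying this with $h = 1_{\{g=0\}}$ gives $\lambda(\{y\in (0,1) : g(\hat G(y))=0\})=\nu_g(\{g=0\}) = 0$; in particular, $1_{\{g>0\}}(\hat G(y))=1$ for a.e.\ $y\in(0,1)$.

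For (\ref{density}) and absolute continuity, let $A\subset (0,1)$ be Borel. Using the assumption $\{g=0\}\subset\{f=0\}$ and the pushforward formula above together with the identity $G(\hat G(y))=y$,
\[
(\mu\circ G^{-1})(A) = \int_{G^{-1}(A)} f\, d\lambda
= \int_{G^{-1}(A)} \frac{f}{g}1_{\{g>0\}}\, d\nu_g
= \int_A \frac{f(\hat G(y))}{g(\hat G(y))}1_{\{g>0\}}(\hat G(y))\, dy,
\]
where in the last step the factor $1_{G^{-1}(A)}(\hat G(y)) = 1_A(G(\hat G(y))) = 1_A(y)$. This shows $\mu\circ G^{-1}\ll \lambda$ and gives the indicator form of $f_G$ in (\ref{density}).

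Finally, for (\ref{h_density}) I would argue as follows. At $\lambda$-a.e.\ $y_0\in(0,1)$, the point $y_0$ satisfies simultaneously (i) $\hat G$ is differentiable at $y_0$, (ii) $g(\hat G(y_0))>0$ (by the zero-measure fact above), and (iii) $\hat G(y_0)$ is a Lebesgue point of $g$. Since $G$ is absolutely continuous with $G'=g$ a.e., for any $y$,
\[
y-y_0 = G(\hat G(y)) - G(\hat G(y_0))= \int_{\hat G(y_0)}^{\hat G(y)} g\, d\lambda.
\]
Dividing by $\hat G(y)-\hat G(y_0)$ (which is nonzero for $y$ near $y_0$ because $g(\hat G(y_0))>0$ forces $\hat G$ to be locally non-flat, hence continuous at $y_0$) and using the Lebesgue point property, the right-hand quotient tends to $g(\hat G(y_0))$. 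Therefore $(y-y_0)/(\hat G(y)-\hat G(y_0)) \to g(\hat G(y_0))$, i.e.\ $\hat G'(y_0) = 1/g(\hat G(y_0))$, which is (\ref{h_density}); combined with the indicator form of $f_G$ just proved, this yields the first equality $f_G(y)=f(\hat G(y))\hat G'(y)$ in (\ref{density}). The main obstacle is this last step: one must carefully verify that the a.e.\ non-vanishing of $g\circ\hat G$ and local continuity of $\hat G$ at such $y_0$ legitimately interchange the limits, which is precisely where the earlier quantile-pushforward lemma for $\nu_g$ becomes indispensable.
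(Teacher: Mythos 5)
Your argument is correct and follows a genuinely different decomposition from the paper's, though both ultimately rest on the quantile-function pushforward $U([0,1])\circ\hat G^{-1}=g\lambda$. The paper proceeds in the reverse order: it first proves (\ref{h_density}) by noting that $G'=g>0$ at $\mu$-a.e.\ $x$ (this is where $\{g=0\}\subset\{f=0\}$ enters), so that $G$ is locally invertible there with $\hat G(G(x))=x$, and then reads off $\hat G'(G(x))=1/g(x)$ from the inverse function theorem; it then uses the pushforward identity to get absolute continuity of $\mu\circ G^{-1}$ from $\mu\ll g\lambda$, and finally derives (\ref{density}) by invoking a cited chain-rule/change-of-variables theorem for the Lebesgue integral. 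You invert the order: you obtain the indicator form of $f_G$ directly from the pushforward together with $\mu\ll g\lambda$ and $G\circ\hat G=\mathrm{id}$ on $(0,1)$, which simultaneously yields absolute continuity with no extra work, and only then establish (\ref{h_density}) by one-sided Lebesgue differentiation of $\int g\,d\lambda$ along $\hat G$. The gain of your route is that it is more self-contained --- it needs neither the inverse-function-theorem step nor the cited change-of-variables theorem --- at the cost of having to verify that $\hat G(y_0)$ is a Lebesgue point of $g$ for $\lambda$-a.e.\ $y_0\in(0,1)$; this is again a consequence of the same pushforward identity applied to the $\lambda$-null (hence $g\lambda$-null) set of non-Lebesgue points, and is worth spelling out in a polished write-up. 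One minor inaccuracy in your narration: you do not need ``$g(\hat G(y_0))>0$ forces $\hat G$ to be locally non-flat'' to justify dividing --- $\hat G$ is strictly increasing on all of $(0,1)$ simply because $G\circ\hat G=\mathrm{id}$ there --- and continuity of $\hat G$ at $y_0$ already follows from condition (i), differentiability at $y_0$.
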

\begin{proof}
In order to prove that $\mu \circ G^{-1}$ is absolutely continuous
let us make 
the key observation that, although $G$ is in general not invertible,   we have
\begin{equation}
\label{invert}
\hat{G}(G( x ))= x \quad \mu \text{-a.e.\ } x \in \mathbb{R}  
\end{equation}
Indeed, by the definition of $G$ and due to $\{g=0\} \subset \{ f = 0 \}$
there exists a measurable set $A_{G} \subset \mathbb{R}$ such that
$G$ is differentiable on $A_{G}$, $\mu (A_{G})=1$,  and
\[
G^{\prime }(x) = g(x) \in (0,\infty ) \text{ for every } x \in A_{G} .
\] 
Hence $G$ is locally invertible at $x \in A_{G}$, so $\hat{G}(G(x))=x$
which proves (\ref{invert}).
Moreover,
\begin{equation}
\label{diff_inv_dens_g}
\hat{G}^{\prime}(G(x)) = 1/ g(x) \text{ for every } x \in A_{G} 
\end{equation}
which proves (\ref{h_density}).

$\hat{G}$ is strictly increasing (and
thus one-to-one) and maps $(0,1)$ onto $\hat{G}((0,1))$.
Thus, together with  
(\ref{invert}) we obtain for every Borel measurable $B \subset \mathbb{R}$ that
\begin{equation}
\label{absc}    
\mu \circ G^{-1}(B) = \mu ( \{ x : \hat{G}(G(x)) \in \hat{G}(B) \} ) = \mu ( \hat{G}(B) ). 
\end{equation}
If $U([0,1])$ denotes the uniform distribution on $[0,1]$ we obtain
again from
(\ref{invert}) that
\[
U([0,1]) \circ \hat{G}^{-1}((- \infty, x]) = G(x) \quad \text{for\ }
 \text{a.e.\ } x \in \mathbb{R}. 
\]
Thus, $U([0,1]) \circ \hat{G}^{-1} = g \lambda$. 
Now let $B \subset {} (0,1)$ be Borel measurable and
$\lambda(B)=0$. This implies $U([0,1])\circ \hat{G}^{-1}(\hat{G}(B)) = 0$.
Because  $\mu$  is absolutely continuous with respect to $ g\lambda$
we obtain $\mu ( \hat{G}(B) ) = 0$. Hence, (\ref{absc}) implies $\mu \circ G^{-1}(B)=0$ showing
that $\mu \circ G^{-1}(B)$ is absolutely continuous with respect to $\lambda$.
In order to prove (\ref{density}) let $[a,b] \subset {} (0,1)$.
In view of (\ref{absc}) and from the definition of $\hat{G}$ we obtain
\begin{equation}
\label{dens_ident}
\mu ( G^{-1}([a,b]) ) = \int_{\hat{G}(a)}^{\hat{G}(b)} f \, d \lambda.
\end{equation}
From (\ref{diff_inv_dens_g}) we deduce 
\[
\hat{G}^{\prime }(y)= \frac{1}{g(\hat{G}(y))} 1_{\{ g >0 \}}(\hat{G}(y)) \quad
\text{a.e.\ on $(0,1)$.} 
\]
Because $\mu \circ G^{-1}$ is absolutely continuous with 
respect to $\lambda$ its cumulative distribution function
is absolutely continuous and, therefore, differentiable a.e.
Applying the chain rule for the Lebesgue integral (see \cite[Corollary 4]{SeVa69}) 
we obtain
\begin{equation}
\label{int_ident}
\int_{\hat{G}(a)}^{\hat{G}(b)} f(x)\,  d \lambda(x) = \int_a^b f(\hat{G}(y))\hat{G}'(y) \, d \lambda(y).
\end{equation}
Now, (\ref{dens_ident}) and (\ref{int_ident}) prove the first equation in (\ref{density}).
The second equality  in (\ref{density}) follows from (\ref{h_density}).
\end{proof}

\noindent\emph{Proof of Lemma~\ref{ref_lemma_xyz}.} 

\noindent \emph{1.} $\alpha \in {} (0,\infty)$.  For this range of
$\alpha$ the result goes back to R\'enyi \cite[11\S]{Ren60a} who
stated it with somewhat less generality. Csisz\'ar
\cite[Thm.\ 2]{Csi73} gives a  more general form of the
result that implies our statement. 

\noindent \emph{2.} $\alpha = - \infty$. 
\nopagebreak

Clearly,
\begin{equation}
\label{low_bou_ess_inf}
\liminf_{\Delta \rightarrow 0} 
\frac{\inf \{ \mu(\hat{q}_{\Delta, M }^{-1}(a)) : a \in \hat{q}_{\Delta, M}(\mathbb{R})  \} }{\Delta }
\geq 
\mathrm{ess } \inf{} _{M} f .
\end{equation}
Now let $\varepsilon > 0$ and define $N_{\varepsilon} = \{x: f(x) <  \mathrm{ess } \inf{} _{M} f  + \varepsilon \} \cap M$.
Hence, $\lambda(N_{\varepsilon}) > 0$. By Lebesgue's differentiation theorem 
we can find an $x \in N_{\varepsilon}$ such that $\mu$ is differentiable at $x$ with $f(x) = \frac{d \mu }{d \lambda}(x)$.
Moreover a $\Delta_{0}(\varepsilon) > 0$ exists, such that for every $\Delta \leq \Delta_{0}$ a
$b \in \hat{q}_{\Delta, M}(\mathbb{R})$ can be found with $x \in \hat{q}_{\Delta, M}^{-1}(b)$ and
\[
\frac{\mu ( \hat{q}_{\Delta, M}^{-1}(b) )}{\Delta } \leq \mathrm{ess } \inf{} _{M} f  + 2 \varepsilon .
\]
Because $\varepsilon$ is arbitrary we obtain
\begin{equation}
\label{upp_bou_ess_inf}
\limsup_{\Delta \rightarrow 0} 
\frac{\inf \{ \mu(\hat{q}_{\Delta, M }^{-1}(a)) : a \in \hat{q}_{\Delta, M}(\mathbb{R})  \} }{\Delta }
\leq 
\mathrm{ess } \inf {}_{M} f .
\end{equation}
In view of Definition~\ref{ref_Def_renyi_entr_diff}
and the definition of $H_{\mu }^{- \infty }(\cdot)$, the
combination of (\ref{low_bou_ess_inf}) and (\ref{upp_bou_ess_inf}) yields the assertion. 

\noindent \emph{3.} $\alpha \in {} ( - \infty, 0 ]$. 
Here we adapt R\'enyi's original proof to our case.  With the convention
$0^{0}:= 0$ and in view of Definition \ref{ref_Def_renyi_entr_diff}
resp.\ Remark \ref{remark_hospital} it suffices to show that
\begin{equation}
\label{alpha_entr_conv}
\int_{\supp (\mu )} f^{\alpha }\, d \lambda =
\lim_{\Delta \rightarrow 0} \sum_{a \in \hat{q}_{\Delta }(\mathbb{R})} \Delta^{1-\alpha } 
\left( \int_{\hat{q}_{\Delta }^{-1}(a)} f\, d \lambda \right)^{\alpha } .
\end{equation}
For $\Delta > 0$ and $x \in \mathbb{R}$ we define
\[
g_{1, \Delta }(x ) =  1_{\supp (\mu)}(x)\sum_{a \in \hat{q}_{\Delta }(\mathbb{R})} 1_{\hat{q}_{\Delta}^{-1}(a)} (x) \frac{1}{\Delta} 
\int_{\hat{q}_{\Delta}^{-1}(a)}  f^{\alpha }\, d \lambda  
\]
and
\begin{eqnarray*}
g_{2, \Delta }(x ) &=&  1_{\supp (\mu)}(x) \sum_{a \in \hat{q}_{\Delta }(\mathbb{R})}
1_{\hat{q}_{\Delta}^{-1}(a)} (x) \Delta^{- \alpha }  
\left( \int_{\hat{q}_{\Delta}^{-1}(a)} f\, d \lambda \right)^{\alpha } \\
&=&  1_{\supp (\mu)}(x) \left( \sum_{a \in \hat{q}_{\Delta }(\mathbb{R})} 1_{\hat{q}_{\Delta}^{-1}(a)}(x)
\frac{1}{\Delta}  
\int_{\hat{q}_{\Delta}^{-1}(a)} f\,  d \lambda \right)^{\alpha }. 
\end{eqnarray*}
Applying Lebesgue's differentiation theorem 
we obtain $g_{1, \Delta } \rightarrow f^{\alpha}$ and $g_{2, \Delta}
\rightarrow f^{\alpha}$ a.e.\  as $\Delta \rightarrow 0$.
Now note that since $\alpha \le 0$, the function $x\mapsto x^{\alpha}$
is convex on $(0,\infty)$, so by Jensen's inequality 
\[
 \left( \frac{1}{\Delta}  
\int_{\hat{q}_{\Delta}^{-1}(a)} f\,  d \lambda \right)^{\alpha } \le 
\frac{1}{\Delta}    \left( \int_{\hat{q}_{\Delta}^{-1}(a)} f^{\alpha}\,  d
\lambda \right)
\]
for all $a\in q^{-1}(\R)$, implying $g_{2,\Delta}(x)\le
g_{1,\Delta}(x)$ for all $x$. Since $g_{2,\Delta}\ge 0$ and   $\int
g_{1, \Delta }\, d \lambda = \int_{\supp (\mu)} f^{\alpha }\, d \lambda
\in {} (0, \infty )$,  we can apply the generalized
dominated convergence theorem \cite[Chapter 11.4]{Roy68} to obtain
\[
\lim_{\Delta \rightarrow 0}  \int g_{1, \Delta }\, d \lambda = 
\lim_{\Delta \rightarrow 0}  \int g_{2, \Delta }\, d \lambda, 
\]
which is equivalent to (\ref{alpha_entr_conv}) 
and, therefore, finishes the proof.\qed 

\appsec
\label{appB}

\noindent\emph{Proof of Proposition~\ref{ref_lemm_pierce}.}

(i) Recall that $\lfloor x\rfloor $ denotes the largest
integer less than or equal to $x\in \R$.  In view of 
Lemma~\ref{lemm_mono} we have  $D_{\mu }^{\alpha }(R) \leq D_{\mu }^{0}(R)$.
Consequently, we deduce from \cite[Lemma 1]{LuPa08} 
the existence of a constant $\kappa > 0$  
(that depends only on $r$ and $\delta$) such that
\[
D_{\mu }^{\alpha }(R) \leq 
D_{\mu }^{0 }(R) \leq
D_{\mu }^{0 }( \log ( \lfloor e^{R} \rfloor ) ) \leq ( \lfloor e^{R} \rfloor )^{-r} 
\kappa^{r} \left( \int | x | ^{r+ \delta } \, d\mu (x) \right)^{r/(r + \delta )}.
\]  
Due to $R \geq 1$ we obtain
\[
( \lfloor e^{R} \rfloor )^{-r} \leq e^{-rR } \left( \frac{e^{R}}{e^{R}-1} \right)^{r}
\leq e^{-rR } \left( \frac{e}{e-1} \right)^{r},
\]
which yields the assertion with $C_0=\kappa^r \left( \frac{e}{e-1}
\right)^{r}$. \smallskip \\ 
(ii)  In view of Lemma~\ref{lemm_mono} it
is enough to prove relation (\ref{upp_bou_neg_alp}) for $\alpha = -
\infty$.  Let $I=\supp(\mu)$,  $R \geq 0$ and $q_{R} \in \mathcal{Q}$ with
$H_{\mu}^{-\infty }(q_{R}) \leq R$.  According to
Lemma~\ref{lemm_mono} let us assume  
w.l.o.g.\ that all codecells of $q_{R}$ with positive $\mu-$mass are
intervals.  By subdivision of codecells with $\mu-$mass greater than
or equal to $2 e^{-R}$ we can assume w.l.o.g.\ that
\begin{equation}
\label{inequ_intsec}
e^{-R} \leq \mu ( q_{R}^{-1}(a) ) < 2 e^{-R}
\end{equation}
for every $a \in q_{R}(\mathbb{R})$ with $\mu(q_{R}^{-1}(a))>0$, where
the first inequality holds since $H_{\mu}^{-\infty}(q_R)\le R$.
Moreover, for every such $a$ we obtain
\begin{equation}
\label{bou_leng_meas}
\diam (  q_{R}^{-1}(a) \cap I )
\leq 
\frac{\mu (q_{R}^{-1}(a))}{i(f)} 
\end{equation}
and we can assume w.l.o.g.\ that $a \in q_{R}^{-1}(a)  \cap I$ if
$q_{R}^{-1}(a) ) \cap I \neq \emptyset$ (otherwise the distortion can
be decreased by redefining $a$).
Then we have
\begin{eqnarray*}
D_{\mu }(q_{R}) &=& \sum_{a \in q_{R}(\mathbb{R})}
\int_{q_{R}^{-1}(a)} |x-a|^{r} f(x) \, d\lambda(x) \\
& \leq & \sum_{a \in q_{R}(\mathbb{R})} \int_{q_{R}^{-1}(a)} 
( \diam ( q_{R}^{-1}(a) \cap I ) )^{r} f(x) \, d\lambda(x).
\end{eqnarray*}
In view of  (\ref{inequ_intsec}) and (\ref{bou_leng_meas}) we get 
\begin{eqnarray*}
D_{\mu }(q_{R}) & \leq & \sum_{a \in q_{R}(\mathbb{R})} 
\int_{q_{R}^{-1}(a)} \left( \frac{\mu (q_{R}^{-1}(a))}{i(f)}
\right)^{r} f(x) \, d\lambda(x) \nonumber \\ 
&=& \frac{1}{i(f)^{r}} \sum_{a \in q_{R}(\mathbb{R})} (\mu (q_{R}^{-1}(a)))^{r+1} \nonumber \\
&<& \frac{1}{i(f)^{r}} \sum_{a \in q_{R}(\mathbb{R})} \mu (q_{R}^{-1}(a)) (2 e^{-R})^{r} = 
\frac{2^{r}}{i(f)^{r}} e^{-rR},
\end{eqnarray*}
which yields (\ref{upp_bou_neg_alp}) by taking the infimum over all $q_R \in \mathcal{Q}$ 
with $H^{-\infty}_\mu(q_R)\le R$. \qed 

\medskip

\noindent\emph{Proof of Lemma~\ref{ref_lemm_gray}.} \  Let
$\varepsilon > 0$. Choose $c, t \in (0,\infty)$, such that
\begin{equation}
\label{iequ_ac_at_2}
1 - \mu ( A_{c,t}) < \frac{\varepsilon }{2} 
\end{equation}
with $A_{c,t}=\{ x : f(x) \geq c \}\cap \{ x : f(x) \leq t \} \cap [-t,t]$.
Let
\[
\kappa = \frac{c}{(1+r)2^{r}}
\]
and use  Corollary \ref{ref_coro_lim0} to  choose $R_{0} > 0$ such that
\[
t \left(  \frac{2 D_{\mu}^{\alpha } (R_{0}) }{\kappa}  \right)^{\frac{1}{1+r}} < \frac{\varepsilon }{2} . 
\]
Now let $R \geq R_{0}$, $\delta = D_{\mu}^{\alpha } (R) > 0$,  and choose $q \in \mathcal{Q}$ with
$H_{\mu}^{\alpha }(q) \leq R$ and
$| D_{\mu }(q) - D_{\mu }^{\alpha }(R) | < \delta$.
We have
\begin{eqnarray}
D_{\mu }(q) &=& 
\sum_{a \in q(\mathbb{R})} \int_{q^{-1}(a)} | x -a | ^{r} f(x)\, d\lambda (x) \nonumber \\
& \geq &  
\sum_{a \in q(\mathbb{R})} c \int_{q^{-1}(a) 
\cap A_{c,t} } | x -a | ^{r}\, d\lambda(x) .
\label{err_ac_at}
\end{eqnarray}
Let $B(x,l)=[x-l,x+l]$ for any $l>0$ and $x\in \R$. 
For every $a \in q ( \mathbb{R} )$ define 
 \begin{equation}
\label{ref_f1acap_44}
s_{a} = \lambda( q^{-1}(a) \cap A_{c,t})/2.
\end{equation}
Since  $A_{c,t}$ is bounded, we have $s_a\in [0,\infty)$. Moreover, it is easy to show that
\begin{equation}
\label{qm1_ac_at}
\int_{q^{-1}(a) \cap A_{c,t}} | x -a | ^{r}\, d\lambda (x) \geq  
\int_{B(a,s_{a})} | x -a | ^{r} \, d\lambda (x)
\end{equation}
(see, e.g.,  \cite[Lemma 2.8]{GrLu00}).
Using (\ref{ref_f1acap_44}) we compute 
\begin{equation}
\label{ref_basa_34453}
\int_{B(a,s_{a})} | x -a | ^{r}\, d\lambda (x) = 
\frac{2 s_{a}^{r+1}}{1+r}   = 
\frac{s_{a}^{r}}{1+r} \lambda( q^{-1}(a) \cap A_{c,t} ) .
\end{equation}
Combining (\ref{qm1_ac_at}) and (\ref{ref_basa_34453})
with (\ref{err_ac_at}) we obtain
\begin{eqnarray*}
D_{\mu }(q) & \geq & 
\frac{c}{1+r} \sum_{a \in q(\mathbb{R})} \lambda( q^{-1}(a) \cap A_{c,t} ) s_{a}^{r} \\
&=&  \frac{c}{(1+r)2^{r}} 
\sum_{a \in q(\mathbb{R})} \lambda( q^{-1}(a) \cap A_{c,t} ) ( 2 s_{a} )^{r} .
\end{eqnarray*}
Using (\ref{ref_f1acap_44}) we get
\begin{eqnarray*}
D_{\mu }(q) & \geq & 
\frac{c}{(1+r)2^{r}} 
\sum_{a \in q(\mathbb{R})} \lambda ( q^{-1}(a) \cap A_{c,t} ) ^{1+r} \\
&\geq & \kappa \cdot \sup_{a \in q(\mathbb{R})}  \lambda ( q^{-1}(a) \cap
A_{c,t} ) ^{1+r}  . 
\end{eqnarray*}
On the other hand the choice of $\delta$ and the monotonicity of $D_{\mu}^{\alpha } (\cdot )$ yields
\[
D_{\mu }(q) \leq 2 D_{\mu }^{\alpha } (R) \leq 2 D_{\mu }^{\alpha } (R_{0}) 
\]
Thus we deduce
\[
\kappa \cdot \sup_{a \in q(\mathbb{R})}  \lambda ( q^{-1}(a) \cap A_{c,t} ) ^{1+r} \leq 
2 D_{\mu }^{\alpha } (R_{0}).
\]
Also, since $f$ is upper bounded by $t$ on $  A_{c,t}$, 
\begin{eqnarray}
\max_{a \in q(\mathbb{R})}  \mu ( q^{-1}(a)\cap  A_{c,t} ) 
&\leq & t \cdot
\sup_{a \in q(\mathbb{R})}  \lambda ( q^{-1}(a) \cap A_{c,t} )  \nonumber \\
&\leq &
t \left(  \frac{ 2D_{\mu}^{\alpha } (R_{0}) }{\kappa}  \right)^{\frac{1}{1+r}} < \frac{\varepsilon }{2} .
\label{ref_afrmjie_44_77}
\end{eqnarray}
With (\ref{iequ_ac_at_2}) and (\ref{ref_afrmjie_44_77})
we finally obtain
\[
\max_{a \in q(\mathbb{R})}  \mu ( q^{-1}(a) )
\leq \max_{a \in q(\mathbb{R})}  \mu ( q^{-1}(a) \cap A_{c,t} ) + 
1 - \mu ( A_{c,t}  ) < \varepsilon ,
\]
which proves (\ref{ref_iequ_b_01}). 
Now, additionally, let $i(f) > 0$. 
Let $a \in q(\mathbb{R})$ with $\mu ( q^{-1}(a) ) > 0$.
By Lemma~\ref{lemm_mono} we can assume, that $q^{-1}(a)$ is an interval.
Thus we obtain 
\[
\lambda  (q^{-1}(a) \cap I_{i}) = \diam ( q^{-1}(a) \cap I_{i} )
\] 
for every $i \in \{ 1,\ldots,m \}$.
Together with $I_{i} \subset \supp (\mu )$ we deduce
\begin{eqnarray}
\nonumber 
\mu ( q^{-1}(a) ) \geq \mu ( q^{-1}(a) \cap I_{i} ) &=&
\int_{q^{-1}(a) \cap I_{i}} f(x) \, d \lambda  (x) \\ \nonumber 
& \geq & i(f) \lambda  (q^{-1}(a) \cap I_{i}) \\  
&=&  i(f) \diam ( q^{-1}(a) \cap I_{i} ) 
\label{ref_789f_rff_3224}
\end{eqnarray}
for every $i \in \{ 1,\ldots,m \}$.
Relation (\ref{ref_equ_onedim}) follows now immediately from
(\ref{ref_iequ_b_01}) and (\ref{ref_789f_rff_3224}).    \qed

\medskip

\noindent\emph{Proof of Lemma~\ref{pierce_neg_para}.} \ Let $\alpha
\in [ - \infty, 0]$.  Then by Lemma~\ref{lemm_mono} for any $\gamma>1$
there exists a quantizer $q$ with $H_{\mu }^{\alpha }(q) \leq R$ such
that each cell of $q$ is an interval with positive $\mu-$mass (and thus $q(\R)=N_q$) and
\begin{equation}
\label{ref_lemm_dmug}
D_{\mu}(q) \leq \gamma \cdot D_{\mu}^{ \alpha }(R)   .
\end{equation}
According to definition (\ref{ref_def_entr}) we obtain in case of $\alpha > - \infty$ that
\begin{equation}
\label{n_upp_bou}
\sum_{a \in q(\mathbb{R})} 
\mu ( q^{-1}(a) )^{\alpha } \leq e^{(1-\alpha )R}.
\end{equation}
We deduce
\begin{equation}
\label{R_upp_bou}
e^{R} \geq \biggl(\; \sum_{a \in q(\mathbb{R})} \mu ( q^{-1}(a) ) (1/ \mu
( q^{-1}(a) ))^{1-\alpha } \biggr)^{1/(1-\alpha )} \geq n 
\end{equation}
where $n$ is the number of codepoints of $q$ and the second inequality follows
from Jensen's inequality applied to the concave function $x \mapsto
x^{1/(1-\alpha)}$. In case of $\alpha = - \infty$ we  obviously have $n < \infty$. We get
\[
1 = \sum_{a \in q(\mathbb{R})} \mu ( q^{-1}(a) ) 
\geq n \cdot \min \{ \mu ( q^{-1}(a) ) : a \in q(\mathbb{R}) \} \geq n  e^{-R}.
\]
Hence, $n \leq e^{R}$ for every $\alpha \in [-\infty, 0]$. Let
$ \supp (\mu )=[c,d]$, where $-\infty<c<d<\infty$. 
For every $a \in q(\mathbb{R})$ let $m_{a}$ denote  the midpoint of $q^{-1}(a) \cap [c,d]$. 
As in the proof of Lemma~\ref{ref_lemm_gray}, we obtain
\begin{eqnarray*}
D_{\mu }(q) &=& \sum_{a \in q(\mathbb{R})} \int_{q^{-1}(a)} |x-a|^{r}
f(x)\,  d \lambda(x) \\
& \geq & \sum_{a \in q(\mathbb{R})} i(f) \int_{q^{-1}(a) \cap [c,d]}
|x-m_{a}|^{r}\,  d \lambda(x) \\
&=& \sum_{a \in q(\mathbb{R})} i(f) ( 2^{r}(1+r) )^{-1} \diam ( q^{-1}(a) \cap [c,d] )^{r+1} .
\end{eqnarray*}
Clearly (cf.\ (\ref{bou_leng_meas})), 
\[
\frac{\mu (q^{-1}(a))}{s(f)} 
\leq 
\diam (  q^{-1}(a)  \cap [c,d] )
\leq 
\frac{\mu (q^{-1}(a))}{i(f)} 
\] 
for every $a \in q(\mathbb{R})$ with $\mu(q^{-1}(a))>0$.
Thus we deduce
from the convexity of $x \mapsto x^{r+1}$ that
\begin{eqnarray*}
D_{\mu }(q) & \geq & i(f) ( 2^{r}(1+r) )^{-1} s(f)^{-r-1} \sum_{a \in q(\mathbb{R})} ( \mu (q^{-1}(a)) )^{r+1} \\
& \geq & i(f) ( 2^{r}(1+r) )^{-1} s(f)^{-r-1} \sum_{i=1}^{n} (1/n)^{r+1} \\
&=&
i(f) ( 2^{r}(1+r) )^{-1} s(f)^{-r-1} n^{-r}.  
\end{eqnarray*}
Combining (\ref{ref_lemm_dmug}) and Proposition~\ref{ref_lemm_pierce}
(ii)  we obtain 
\begin{equation}
\label{ref_leq_ih}
i(f) ( 2^{r}(1+r) )^{-1} s(f)^{-r-1} n^{-r} \leq \gamma \frac{2^{r}}{i(f)^{r}} e^{-rR}.
\end{equation}
Because $\gamma \in {} (1, \infty)$ was arbitrary, inequality (\ref{ref_leq_ih}) remains valid if we set 
$\gamma = 1$. 
Hence we obtain
\[
\left( \frac{i(f)}{s(f)} \right)^{\frac{r+1}{r}}
\left(\frac{1}{4^{r}(1+r)}\right)^{1/r} e^{R} \leq n , 
\]
which yields (\ref{dmuealphrhr}). \\
Now assume $\alpha \in (- \infty, 0)$. We will modify $q$ such that
the new quantizer is in  $\mathcal{K}_{R}$ and it still satisfies the
rate constraint, while its distortion does not exceed that of $q$.
Let 
\[
p = \max \{ \mu ( q^{-1}(a) ) : a \in q(\mathbb{R})  \} > 0
\]
and $a_{p} \in q(\mathbb{R})$ such that $\mu ( q^{-1}(a_{p}) ) = p$.
If $H_{\mu }^{\alpha }( q ) < R$ we can subdivide the cell $q^{-1}(a_{p})$
into two cells with equal $\mu-$mass, such that the entropy increases by
\begin{eqnarray*}
&& - \frac{1}{1 - \alpha } \log \biggl( p^{\alpha } 
+ \sum_{a \in q(\mathbb{R}) \setminus \{ a_{p} \} }  \mu ( q^{-1}(a) )^{\alpha }  \biggr) \\
&& \quad +
\frac{1}{1 - \alpha } \log \biggl( 2 (p/2)^{\alpha } 
+ \sum_{a \in q(\mathbb{R}) \setminus \{ a_{p} \} }  \mu ( q^{-1}(a) )^{\alpha }  \biggr) > 0.
\end{eqnarray*}
If we take the optimal quantization points for the two new cells, the
new quantizer does not increase the quantization error. As long as the entropy
is lower than $R$ we repeat this procedure. Hence there exists a modified
quantizer (also denoted by $q$) satisfying
\begin{equation}
\label{upp_bou_num_cell}
e^{(1-\alpha )R} - e^{(1-\alpha ) H_{\mu }^{\alpha }(  q  )} \leq ( 2^{1-\alpha } - 1 ) p^{\alpha }.
\end{equation}
Note that (\ref{n_upp_bou}) and (\ref{R_upp_bou}) remain valid also for this
modified quantizer.
Consequently, 
\begin{equation}
\label{bound_tild}
0 < C e^{R} \leq \card (   q ) \leq  e^{R} < \infty .
\end{equation}
Thus we deduce 
\[
e^{(1- \alpha )R} \geq e^{(1- \alpha )H_{\mu}^{\alpha }(  q )} =
\sum_{a \in q(\mathbb{R})} \mu (  q^{-1}(a) )^{\alpha } \geq
\card (   q )  \cdot p^{\alpha} \geq Ce^{R} p^{\alpha },
\]
which implies
\[
p^{\alpha } \leq C^{-1} e^{-R} e^{(1- \alpha )R}.
\]
Together with (\ref{upp_bou_num_cell}) and $R> \log ( \frac{2^{1-\alpha }-1}{C} )$ we obtain 
\begin{eqnarray}
1 &\leq & \frac{e^{(1- \alpha )R}}{ e^{(1-\alpha )H_{\mu }^{\alpha }(  q )} } 
= \frac{e^{(1- \alpha )R}}{e^{(1- \alpha )R} 
- ( e^{(1- \alpha )R} -  e^{(1-\alpha )H_{\mu }^{\alpha }(  q )} )} \nonumber \\
& \leq & 
\frac{e^{(1- \alpha )R}}{e^{(1- \alpha )R} -  ( 2^{1-\alpha } - 1 ) p^{\alpha } } \nonumber \\
& \leq &
\frac{e^{(1- \alpha )R}}{e^{(1- \alpha )R} 
-  ( 2^{1-\alpha } - 1 ) C^{-1} e^{-R} e^{(1- \alpha )R} } .
\label{bisec_bou}
\end{eqnarray}
In view of (\ref{bisec_bou}) and (\ref{bound_tild}) we conclude that $
q \in \mathcal{K}_{R}$, 
which proves (\ref{dmuealphrkr}).  \qed

\begin{proof}[Proof of Lemma \ref{lemm_inf_g}]
Recall the definition  (\ref{const_quant}) of constant   $C$.
Fix $\kappa \in (0,C)$. Let $R_{0} > 0$ such that $Ce^{R}-(m-1) \geq \kappa
e^{R}$ for every $R \geq R_{0}$. 
According Lemma~\ref{ref_lemm_gray}, in the definition of
$D_{\mu}^{ \alpha }(R)$ 
it suffices w.l.o.g.\  to
consider for $R \geq R_{0}$ only those quantizers $q \in \mathcal{H}_{R}$ 
satisfying
\begin{equation}
\label{ref_equ_diamf1q}
\sup \{ \diam ( q^{-1}(a) \cap I ) : a \in q(\mathbb{R}) \} < \diam (I) / 2m .
\end{equation}
In view of Lemma~\ref{pierce_neg_para} 
it suffices to show that for $R\ge R_{0}$ any quantizer  $q \in
\mathcal{H}_{R}$  that satisfies (\ref{ref_equ_diamf1q}) can be
modified such that the distortion of the new quantizer $\tilde{q}$ does
not exceed that of $q$ and it  satisfies  $\tilde{q} \in
\mathcal{K}_{R}(\kappa)$ and 
\[
2 \inf \{ \diam ( \tilde{q}^{-1}(a) \cap I ) : a \in S(\tilde{q}) \} \geq \inf \{ \diam ( \tilde{q}^{-1}(a) ) : a \in A(\tilde{q}) \} .
\]

According to the
upper bound (\ref{ref_equ_diamf1q}) we always have $A(q) \neq
\emptyset$.  If $S(q)= \emptyset$, then the assertion is
obvious. Hence, let $S(q) \neq \emptyset$.  Let us assume
w.l.o.g.\ that $\mu ( q^{-1}(b) ) > 0$ and that (see Lemma~\ref{lemm_mono}) 
$q^{-1}(b)$ is an interval for every $b \in
q(\mathbb{R})$.  For every $a \in S(q)$ let $\emptyset \neq N(a)
\subset q(\mathbb{R}) \setminus \{ a \}$ be the set of neighbor
points, i.e., for every $b \in N(a)$ we have either $\sup
q^{-1}(b)=\inf q^{-1}(a)$ or $\inf q^{-1}(b)=\sup q^{-1}(a)$.  Due to
(\ref{ref_equ_diamf1q}) we know that $N(a) \cap S(q) = \emptyset$.
Moreover, $N(a) \subset A(q)$ and $\card (N(a))=2$.  Fix $i_{a} \in \{
1,\ldots,m-1 \}$ such that $q^{-1}(a) \subset I_{i_{a}} \cup
I_{i_{a}+1}$.  Because $a \in S(q)$, we have $\Delta_{1} = \diam (
q^{-1}(a) \cap I_{i_{a}} ) > 0$ and $\Delta_{2} = \diam ( q^{-1}(a)
\cap I_{i_{a}+1} ) > 0$.  Moreover, $\diam ( q^{-1}(a)) = \Delta_{1} +
\Delta_{2}$.  Let $b_{1} \in N(a)$ such that $\inf (q^{-1}(a))=\sup
(q^{-1}(b_{1}))$ and let $b_{2} \in N(a)$ such that $\inf
(q^{-1}(b_{2}))=\sup (q^{-1}(a))$.  Next we will show that
\begin{equation}
\label{iequ_delta_neg_01}
\inf (q^{-1}(a)) + \frac{\Delta_{1}}{2} \leq a \leq \inf (q^{-1}(a)) + \Delta_{1} + \frac{\Delta_{2}}{2}.
\end{equation}
To see this, one recognizes that $a$ has to be optimal for $\mu(\cdot | q^{-1}(a))$. 
As a consequence (see, e.g.,  \cite[Lemma 2.6 (a)]{GrLu00}), $a \in [ \inf(q^{-1}(a)), \sup(q^{-1}(a)) ]$.
Moreover, $a$ has to be a stationary point (see \cite[Lemma 2.5]{GrLu00}), which yields 
\begin{equation}
\label{iequ_delta_neg_02}
\int_{[ \inf (q^{-1}(a)), a]} | x-a |^{r-1}\, d \mu (x) = \int_{[ a, \sup (q^{-1}(a)) ]} | x-a |^{r-1}\, d \mu (x).
\end{equation}
Now let us assume that the first inequality in (\ref{iequ_delta_neg_01})
does not hold.
Hence, 
\begin{equation}
\label{lhs_ind}
a < \inf ( q^{-1}(a) ) + \Delta_{1}/2.
\end{equation}
Note that $\sup ( I_{i_{a}} ) = \inf q^{-1}(a) + \Delta_{1}$
and that $\sup ( I_{i_{a}} ) + \Delta_{2} = \sup (q^{-1}(a))$.
From (\ref{iequ_delta_neg_02}) and $\Delta_{2}>0$ we get
\begin{eqnarray}
\label{lhs_ind_trunc}
\int_{[ \inf (q^{-1}(a)), a]} | x-a |^{r-1}\, d \mu (x) 
&>& \int_{[ a, \inf q^{-1}(a) + \Delta_{1} ]} | x-a |^{r-1}\,  d \mu (x).
\end{eqnarray}
Because the density of $\mu$ is constant on $[ \inf (q^{-1}(a)) , \inf (q^{-1}(a)) + \Delta_{1} ]$ we
obtain from (\ref{lhs_ind_trunc}) that $a > \inf ( q^{-1}(a) ) + \Delta_{1}/2$, which contradicts 
(\ref{lhs_ind}). Thus we have proved the 
left inequality in (\ref{iequ_delta_neg_01}).
Similarly, we deduce from
$\Delta_{1}>0$ and (\ref{iequ_delta_neg_02}) the right inequality in
(\ref{iequ_delta_neg_01}).

Recall that $\mu$ has constant density on $q^{-1}(b_{i});$ $i=1,2.$
Again by stationarity (\ref{iequ_delta_neg_02}) we obtain 
\begin{equation}
\label{ref_equ_b1}
b_{1} = \inf q^{-1}(a) - \diam ( q^{-1}(b_{1}) )/2
\end{equation}
and
\[
b_{2} = \inf q^{-1}(a) + \Delta_{1} + \Delta_{2} + \diam ( q^{-1}(b_{2}) )/2 .
\]
Let  $\Delta = \Delta_{1} + \Delta_{2}$. 
Next we show  that  w.l.o.g.\  we can assume $ 2 \Delta \geq \min ( \diam (q^{-1}(b_{1})), \diam
(q^{-1}(b_{2})) )$. Assume to the contrary that 
\begin{equation}
\label{ref_delta_fgh}
2 \Delta < \min ( \diam (q^{-1}(b_{1})), \diam (q^{-1}(b_{2})) ).
\end{equation}
Then we have  $\diam (q^{-1}(b_{1})) > 2\Delta> 2 \Delta_{1} + \Delta_{2}$, and
applying (\ref{ref_equ_b1}) we get
\[
 \inf q^{-1}(a) -b_{1} >   \inf q^{-1}(a) + \Delta_{1} +
 \frac{\Delta_{2}}{2}  -  \inf q^{-1}(a) 
\]
Hence, (\ref{iequ_delta_neg_01}) implies
\begin{equation}
\label{iequ_delta_neg_03}
\inf q^{-1}(a) - b_{1}  >  a - \inf q^{-1}(a).
\end{equation}
Similarly we obtain
\begin{equation}
\label{iequ_delta_neg_04}
b_{2} - \sup q^{-1}(a) >  \sup q^{-1}(a) - a.
\end{equation}
In view of (\ref{ref_delta_fgh}) and by the definition of $\mu$ we have 
\[
\frac{\diam (I)}{m } \, \mu ( q^{-1}(b_{1}) ) = \diam ( q^{-1}(b_{1}) ) \cdot s_{i_{a}} > 2 \Delta \cdot s_{i_{a}}
\]
and
\[ 
\quad \frac{\diam (I)}{m }\, \mu ( q^{-1}(b_{2}) ) = 
\diam ( q^{-1}(b_{2}) ) \cdot s_{i_{a}+1} > 2 \Delta \cdot s_{i_{a}+1}. 
\]
Moreover,
\[
 \frac{\diam (I)}{m }\, \mu ( q^{-1}(a) ) = \Delta_{1} s_{i_{a}} + \Delta_{2} s_{i_{a}+1} < 2 \Delta \max \{ s_{i_{a}} , s_{i_{a}+1} \} .
\]
Thus we obtain 
\begin{equation}
\label{eq:max}
 \mu (q^{-1}(a)) < \max ( \mu (q^{-1}(b_{1})), \mu
(q^{-1}(b_{2})) )
\end{equation}
as long as (\ref{ref_delta_fgh}) holds.  Thus, in view of
(\ref{iequ_delta_neg_03}) and (\ref{iequ_delta_neg_04}), we can modify
$q$ by increasing the codecell $q^{-1}(a)$, which yields a reduction
of the quantization error and a non-increasing entropy of $q$ (due to
$\alpha < 0$, as long as (\ref{eq:max}) holds, the entropy is a
non-decreasing function of the left endpoint of the cell $q^{-1}(a)$
and a non-increasing function of the right endpoint of $q^{-1}(a)$).
The codecell can be expanded this way until $2 \Delta= \min (
\diam (q^{-1}(b_{1})), \diam (q^{-1}(b_{2})) )$ holds. 
Note that independent of this modification $q$ remains an element of
$\mathcal{H}_{R}$.   Thus we can assume w.l.o.g.\ that
\begin{equation}
\label{Delta_inf}
2 \Delta \geq \min ( \diam (q^{-1}(b_{1})), \diam (q^{-1}(b_{2})) ).
\end{equation}
If $q \in \mathcal{K}_{R}(\kappa)$, then the proof is finished.
Hence, let us assume that $q \notin \mathcal{K}_{R}(\kappa)$.
We will show that $q$ can always be modified such that the new  quantizer
belongs to $\mathcal{K}_{R}(\kappa)$ and still satisfies relation
(\ref{Delta_inf}). 
We proceed as in the proof of relation (\ref{dmuealphrkr}).
Let 
\[
W(q) = \{ a \in q( \mathbb{R} ) : \mu (q^{-1}(a)) = \max \{ \mu(q^{-1}(b)) : b \in A(q) \} \}.
\]
We subdivide one by one the cells $q^{-1}(a)$ with $a \in W(q)$
and $p=\mu (q^{-1}(a))$ as in the proof of (\ref{dmuealphrkr}) in
Lemma~\ref{pierce_neg_para}.  Note, that the entropy of the
quantizer will exceed any given bound if we repeat the subdivision
process enough times.  We stop this process with a quantizer $\tilde
q$ that satisfies relation (\ref{upp_bou_num_cell}).  Now recall that
$ Ce^{R}-(m-1)\ge \kappa e^R$ if $R\ge R_0$ by the definition at the
beginning of the proof. Thus, with $p=\mu (\tilde{q}^{-1}(a))$, we
have 
\begin{eqnarray*}
e^{(1-\alpha )R} 
& \geq & e^{(1-\alpha )H_{\mu }^{\alpha }(\tilde q)} \geq ( \card (\tilde q(\mathbb{R})) - (m-1) ) p^{\alpha } \\
& \geq &
(Ce^{R}-(m-1))p^{\alpha } \geq  \kappa e^{R}p^{\alpha }.
\end{eqnarray*}
Now the  inequality  $e^{(1-\alpha )R} \ge  me^{R}p^{\alpha }$ 
allows us to perform steps identical to the ones in the chain of
inequalities   (\ref{bisec_bou}) 
and we obtain  that the  quantizer belongs to $\mathcal{K}_{R}(\kappa)$.
Obviously, (\ref{Delta_inf}) is still in force for $\tilde q$ and the
proof is complete.
\end{proof}

\appsec
\label{appC}

\noindent\emph{Proof of Proposition~\ref{ref_pro_zerleg}.} \ For every
$i \in \{1,\ldots,m\}$ choose a quantizer $q_{i} \in \mathcal{Q}$ for
$\mu_{i}$ with $H_{\mu_{i}}^{\alpha }(q_{i}) \leq R_{i}$.  Let
\[
J_{i} = \{ a \in q_{i}(\mathbb{R}) : \mu ( q_{i}^{-1}(a) \cap A_{i} ) > 0 \}. 
\]
Let $I_{i} \subset \mathbb{N}$ be an index set of the same 
cardinality as $J_{i}$ and for every $k \in I_{i}$ choose $a_{i,k} \in
q_{i}(\mathbb{R})$  
such that $J_{i}=\{ a_{i, k} : k \in I_{i} \}$.
Let 
\[
N = \mathbb{R} \setminus \cup_{i = 1}^{m} \cup_{k \in I_{i}} q_{i}^{-1}(a_{i, k}) \cap A_{i}.
\]
Note that $\mu (N)=0$.
Now we define the quantizer $q$ by 
the codecells 
\[
\{ N \} \cup \{ q_{i}^{-1}(a_{i, k}) \cap A_{i} : i=1,\ldots,m ; k \in I_{i}  \}
\]
and corresponding codepoints
\[
\{ 0 \} \cup \{ a_{i, k} : i=1,\ldots,m ; k \in I_{i} \} .
\]
Note that despite our general assumption,  the codepoints now  are not necessarily distinct.
Recall the convention $0^{0}=0$. Since  $\mu(N)=0$,  the definition of $H_{\mu }^{\alpha } (q)$ yields
\begin{eqnarray*}
H_{\mu }^{\alpha } (q) &=& 
\frac{1}{1-\alpha } \log \left(  \sum_{i=1}^{m} \sum_{k \in I_{i}} 
\mu ( q_{i}^{-1}(a_{i, k}) \cap A_{i} ) ^{\alpha }
\right) \\
&=& 
\frac{1}{1-\alpha } \log \left(  \sum_{i=1}^{m} s_{i}^{\alpha }\sum_{a \in q_{i}(\mathbb{R})} 
\mu_{i} ( q_{i}^{-1}(a) ) ^{\alpha }
\right) \\
&=& \frac{1}{1-\alpha } \log \left( 
\sum_{i=1}^{m} s_{i}^{\alpha } e^{(1-\alpha ) H_{\mu_{i}}^{\alpha }(q_{i}) } \right). 
\end{eqnarray*}
Since  $H_{\mu_{i}}^{\alpha }(q_{i}) \leq R_{i}$,  we obtain in both cases ($\alpha < 1$ and $\alpha > 1$)
that
\[
H_{\mu }^{\alpha } (q) \leq \frac{1}{1-\alpha } \log \left( 
\sum_{i=1}^{m} s_{i}^{\alpha } e^{(1-\alpha ) R_{i} } \right).
\]
Now it is easy to check that $H_{\mu }^{\alpha } (q) \leq R$ is satisfied if either
(\ref{ref_cond_01}) or (\ref{ref_cond_02}) holds. 
Further we deduce
\begin{eqnarray*}
D_{\mu }^{\alpha } (R) &\leq & D_{\mu}(q) = \int | x - q(x) | ^{r} \, d \mu (x) \\
&=& \sum_{i=1}^{m} s_{i} \int_{A_{i}} | x - q_{i}(x) | ^{r} \ d \mu_{i} (x) 
= \sum_{i=1}^{m} s_{i} D_{\mu_{i}} (q_{i}).  
\end{eqnarray*}
Taking the infimum on the right hand side of above inequality yields
the assertion. \qed 

\medskip

\noindent\emph{Proof of Lemma~\ref{ref_lemm_smu12}.} \ 
From Definition \ref{ref_def_mdivis} we have $s \in (0,1)$.
Let $R \geq 0$ and $\delta > 0$.  
Let $q \in \mathcal{Q}$ with $H_{\mu }^{\alpha }(q) \leq R$ and $\delta + D_{\mu }^{\alpha }(R) \geq D_{\mu }(q)$.
We obtain
\begin{equation}
\label{ref_equ_dmualprdmg}
\delta + D_{\mu }^{\alpha }(R) \geq D_{\mu }(q) \geq s \int | x - q(x)
|^{r}\,  d \mu_{i_{0}} (x).
\end{equation}
Since $\alpha \in [0,1)$, we deduce
\begin{eqnarray*}
R  & \geq & H_{\mu }^{\alpha } (q) = \frac{1}{1-\alpha } \log 
\left( \sum_{a \in q( \mathbb{R} )} 
\left( \sum_{i=1}^{m} s_{i} \mu_{i} ( q^{-1}(a) )  \right)^{\alpha } \right) \\
& \geq & \frac{1}{1-\alpha } \log \biggl( \; \sum_{a \in q( \mathbb{R} )}
\left( s \mu_{i_{0}} ( q^{-1}(a) )  \right)^{\alpha } \biggr) \\
&=& \frac{\alpha }{1-\alpha } \log (s) + H_{\mu_{i_{0}}}^{\alpha } (q).
\end{eqnarray*}
Because $\delta$ was arbitrary we get from (\ref{ref_equ_dmualprdmg}) that
\[
D_{\mu }^{\alpha } (R) \geq s D_{\mu_{i_{0}}}^{\alpha } 
\left( R - \frac{\alpha }{1-\alpha } \log (s) \right),
\]
which yields
\begin{eqnarray*}
e^{r R} D_{\mu }^{\alpha }(R) & \geq &
s e^{ r \left( \frac{\alpha }{1-\alpha } \log (s) \right) }
e^{ r \left( R - \frac{\alpha }{1-\alpha } \log (s) \right) }  D_{\mu_{i_{0}}}^{\alpha } 
\left( R - \frac{\alpha }{1-\alpha } \log (s) \right)  \\
&=&  s^{a_{1} a_{2}} e^{r \left( R - \frac{\alpha }{ 1- \alpha }   
\log (s) \right)} D_{\mu_{i_{0}}}^{\alpha }\left( R - \frac{\alpha }{ 1-
  \alpha } \log (s) \right) 
\end{eqnarray*}
and therefore proves (\ref{ref_equ_ergnjks}).

Now let $\alpha \in [0,r+1 ) \setminus \{ 1 \}$ and 
fix $R_{0}> 0$, such that
\[
R_{0} \geq \max \{ -\log( t_{i} ): i=1,\ldots,m \}.
\] 
For any $R > R_{0}$ let $R_{i} = R + \log( t_{i} ) > 0$, $i=1,\ldots,m$.
We obtain
\begin{equation}
\label{ref_iequ_sie}
\sum_{i=1}^{m} s_{i}^{\alpha} e^{(1-\alpha )R_{i}} = 
e^{( 1-\alpha )R},
\end{equation}
if $\alpha \in [0,r+1 ) \setminus \{ 1 \}$.
Indeed, (\ref{ref_iequ_sie}) is equivalent to
$ \sum_{i=1}^{m} s_{i}^{\alpha} t_{i}^{1-\alpha } = 1$.
But this equation is satisfied by the definition of $t_{i}$. 
Applying Proposition~\ref{ref_pro_zerleg} we obtain
\[
D_{\mu }^{\alpha } (R) \leq s D_{\mu_{i_{0}} }^{\alpha } \left( R_{i_{0}} \right)
+ \sum_{i=1; i \neq i_{0}}^{m} s_{i} D_{\mu_{i} }^{\alpha } \left( R_{i} \right).
\]
Thus we can compute
\begin{eqnarray}
e^{rR} D_{\mu }^{\alpha }(R) & \leq & 
e^{rR} s D_{\mu_{i_{0}} }^{\alpha } ( R_{i_{0}} ) + \sum_{i=1; i \neq i_{0}}^{m}
e^{rR} s_{i} D_{\mu_{i} }^{\alpha } ( R_{i} ) \nonumber \\
&=& 
e^{r(R-R_{i_{0}})} s e^{rR_{i_{0}}} D_{\mu_{i_{0}} }^{\alpha } ( R_{i_{0}} ) + \sum_{i=1; i \neq i_{0}}^{m}
e^{r(R-R_{i})} s_{i} e^{rR_{i}} D_{\mu_{i} }^{\alpha } ( R_{i} ) \nonumber \\
&=& s t_{i_{0}}^{-r} e^{rR_{i_{0}}} D_{\mu_{i_{0}} }^{\alpha } ( R_{i_{0}} ) + \sum_{i=1; i \neq i_{0}}^{m}
s_{i} t_{i}^{-r} e^{rR_{i}} D_{\mu_{i} }^{\alpha } ( R_{i} ).
\label{r_h_s_363}
\end{eqnarray}
Because all terms in  (\ref{r_h_s_363}) are nonnegative  
we obtain (\ref{ref_iequ_rrinferr}). \qed

\appsec
\label{appD}

\begin{lemma}
\label{lemma_straddle_conv}
Let $m \in \mathbb{N}$ and $\sum_{i=1}^{m}s_{i} = 1$ with $s_{i}>0$ for every $i \in \{1,\ldots,m\}$.
Let the probability measure $\mu$ be supported on a bounded interval $I$
such that
$\mu = \sum_{i=1}^{m} s_{i} U(I_{i})$  where the $I_{i}$ are intervals
of  equal length $\lambda(I)/m$ that partition  $I$. 
Let $\alpha \in (- \infty, 0)$ and $(R_{n})_{n \in \mathbb{N}}$ be an
increasing sequence of positive numbers such that 
$R_{n} \to \infty$ as $n \to \infty$. Then for every sequence 
$(q_{n})_{n \in \mathbb{N}}$ of quantizers with $q_{n} \in
\mathcal{G}_{R_{n}}$, relation (\ref{ref_equ_deltedas}) holds. 
\end{lemma}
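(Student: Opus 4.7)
The plan is to decompose and reduce. Let $S_n=\sum_{a\in q_n(\R)}\mu(q_n^{-1}(a))^\alpha=e^{(1-\alpha)H_\mu^\alpha(q_n)}$ and let $T_n$ denote the denominator of $\delta_1$, so
\[
\delta_1(R_n,\mu,q_n) \;=\; e^{(1-\alpha)(R_n-H_\mu^\alpha(q_n))}\cdot\frac{S_n}{T_n}.
\]
Since $q_n\in\mathcal{K}_{R_n}(\kappa)$, raising the entropy-gap inequality to the $(1-\alpha)$-power yields $e^{(1-\alpha)(R_n-H_\mu^\alpha(q_n))}\le\bigl(1-(2^{1-\alpha}-1)\kappa^{-1}e^{-R_n}\bigr)^{-1}\to 1$, while $H_\mu^\alpha(q_n)\le R_n$ gives the trivial lower bound $1$, so the first factor converges to $1$. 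Hence the task reduces to showing $S_n/T_n\to 1$, equivalently that the straddling contribution $S_n-T_n=\sum_{a\in S(q_n)}\mu(q_n^{-1}(a))^\alpha$ is $o(T_n)$.

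Next I will bound the straddling sum using $\mathcal{G}$. Interval codecells and the $m-1$ interior boundaries of $\{I_i\}$ give $\card(S(q_n))\le m-1$. Using the piecewise-uniform density $f=\sum_i(ms_i/\lambda(I))1_{I_i}$, for every $a\in S(q_n)$ one has $\mu(q_n^{-1}(a))\ge(m\min_is_i/\lambda(I))\diam(q_n^{-1}(a)\cap I)$, which by the $\mathcal{G}$-condition is at least $(m\min_is_i)/(2\lambda(I))$ times $\inf_{a'\in A(q_n)}\diam(q_n^{-1}(a'))$; comparing with $p_{\min}:=\min_{a'\in A(q_n)}\mu(q_n^{-1}(a'))\le(m\max_is_i/\lambda(I))\inf_{a'\in A(q_n)}\diam(q_n^{-1}(a'))$ yields $\mu(q_n^{-1}(a))\ge c_0\,p_{\min}$ with $c_0=\min_is_i/(2\max_is_i)$, and hence $S_n-T_n\le(m-1)c_0^\alpha p_{\min}^\alpha$ (since $\alpha<0$). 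For a matching lower bound on $T_n$, I would apply Jensen's inequality to the convex function $x\mapsto x^\alpha$ within each block $I_j$, giving $\sum_{a\in I_j(q_n)}\mu^\alpha\ge |I_j(q_n)|^{1-\alpha}P_{n,j}^\alpha$ with $P_{n,j}=\sum_{a\in I_j(q_n)}\mu$, and then summing over $j$ using $\sum_j|I_j(q_n)|\ge Ce^{R_n}-(m-1)$ from $\mathcal{H}_{R_n}$, together with the fact that the total straddling mass vanishes so $\sum_jP_{n,j}\to 1$, to conclude $T_n\gtrsim e^{(1-\alpha)R_n}$.

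The main obstacle, and the crux of the argument, is to upgrade these bounds to $p_{\min}^\alpha=o(e^{(1-\alpha)R_n})$, which is what makes $(S_n-T_n)/T_n\to 0$ instead of merely bounded. The route I would pursue is to combine the sharpness of the $\mathcal{K}$-entropy-gap condition $R_n-H_\mu^\alpha(q_n)\to 0$ (forcing $S_n/e^{(1-\alpha)R_n}\to 1$) with the cardinality lower bound $\card(N_{q_n})\ge Ce^{R_n}$ and the full-distribution Jensen bound $S_n\ge\card(N_{q_n})^{1-\alpha}$: these together constrain how non-uniform the cell masses can be, and the $\mathcal{G}$-control on straddling-cell diameters prevents the deficit between $S_n$ and $\card(N_{q_n})^{1-\alpha}$ from being absorbed into a handful of straddling cells. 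The cleanest formalization is likely a contradiction argument: if $p_{\min}^\alpha$ were comparable to $e^{(1-\alpha)R_n}$ along a subsequence, then the single term $p_{\min}^\alpha$ together with the Jensen contribution of the remaining cells would force $S_n$ strictly above $e^{(1-\alpha)R_n}$, violating $H_\mu^\alpha(q_n)\le R_n$; ruling this out rigorously while keeping the argument uniform in the permissible range $Ce^{R_n}\le\card(N_{q_n})\le e^{R_n}$ is the delicate point.
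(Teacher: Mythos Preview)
Your decomposition $\delta_1=e^{(1-\alpha)(R_n-H_\mu^\alpha(q_n))}\cdot S_n/T_n$ and the handling of the first factor via $\mathcal{K}_{R_n}$ are exactly right, and your bound $S_n-T_n\le (m-1)c_0^{\alpha}p_{\min}^{\alpha}$ is also the paper's starting point. The gap is in your third paragraph: the contradiction argument does not close. If $p_{\min}^{\alpha}\ge c\,e^{(1-\alpha)R_n}$ along a subsequence, then combining this term with the Jensen lower bound on the remaining $A(q_n)$-cells gives only
\[
S_n \;\ge\; c\,e^{(1-\alpha)R_n}+\bigl(\card(A(q_n))-1\bigr)^{1-\alpha}(P_A-p_{\min})^{\alpha}
\;\gtrsim\;(c+C^{1-\alpha})\,e^{(1-\alpha)R_n},
\]
and since $C<1$ implies $C^{1-\alpha}<1$, this is compatible with $S_n\le e^{(1-\alpha)R_n}$ whenever $c\le 1-C^{1-\alpha}$. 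So you obtain a bound on $c$, not $c=0$; the ratio $(S_n-T_n)/T_n$ is shown bounded, not $o(1)$.

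The paper sidesteps this entirely by exploiting an additional structural fact that you did not use: in the context where the lemma is applied (Step~1 of the main proof), one may assume without loss of generality that the restriction of $q_n$ to each block $I_j$ is optimal for the uniform distribution there, and hence by Proposition~\ref{ref_prop_unit_cube} (and \cite[Example~5.5]{GrLu00}) the non-straddling cells within each fixed $I_j$ all have \emph{equal length}. Under this assumption, if $b\in A(q_n)\cap I_{i}$ realizes the minimum diameter, then \emph{every} cell in $A(q_n)\cap I_{i}$ has that same diameter, so
\[
\frac{(\min_{a\in A(q_n)}\diam q_n^{-1}(a))^{\alpha}}{\sum_{a\in A(q_n)}\diam(q_n^{-1}(a))^{\alpha}}
\;\le\;\frac{(\diam q_n^{-1}(b))^{\alpha}}{\sum_{a\in A(q_n)\cap I_i}\diam(q_n^{-1}(a))^{\alpha}}
\;=\;\frac{1}{\card(A(q_n)\cap I_i)}\;\longrightarrow\;0,
\]
using that $\card(A(q_n)\cap I_i)\to\infty$ (which follows from $H_\mu^\alpha(q_n)\to\infty$ via $\mathcal{K}_{R_n}$). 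Converting diameters to $\mu$-masses with the constants $h_1,h_2$ from the piecewise-constant density then yields $p_{\min}^{\alpha}/T_n\to 0$ immediately. The point is that the equal-length reduction turns the troublesome ratio into $1/\card$ for free; your Jensen-and-contradiction route tries to extract the same conclusion from cardinality and entropy constraints alone, and that information is simply not enough.
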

\begin{proof}
Recall from (\ref{aqsqdef}) the definition of $A(q)$ and $S(q)$. For
any  $n \in \mathbb{N}$ 
\begin{eqnarray}
1 & \leq & \frac{e^{(1- \alpha )H_{\mu}^{\alpha }(q_{n})}}{\sum_{a \in A(q_{n})} 
\mu ( q_{n}^{-1}(a) )^{\alpha } } \nonumber \\
&=& \frac{\sum_{a \in A(q_{n})} \mu ( q_{n}^{-1}(a) )^{\alpha } + 
\sum_{a \in S(q_{n})} \mu ( q_{n}^{-1}(a) )^{\alpha } }
{\sum_{a \in A(q_{n})} \mu ( q_{n}^{-1}(a) )^{\alpha } } \nonumber \\
& \leq &
1 + \frac{\card ( S (q_{n}) ) \cdot \sup \{ \mu ( q_{n}^{-1}(a) )^{\alpha } : a \in S(q_{n}) \}}
{\sum_{a \in A(q_{n})} \mu ( q_{n}^{-1}(a) )^{\alpha } } \nonumber \\
&\leq & 1 + (m-1) \frac{ \sup \{ \mu ( q_{n}^{-1}(a) )^{\alpha } : a \in S(q_{n}) \}}
{\sum_{a \in A(q_{n})} \mu ( q_{n}^{-1}(a) )^{\alpha } } \nonumber \\
&=& 1 + (m-1) \frac{ (\inf \{ \mu ( q_{n}^{-1}(a) ) : a \in S(q_{n}) \}) ^{\alpha }}
{\sum_{a \in A(q_{n})} \mu ( q_{n}^{-1}(a) )^{\alpha } }.
\label{upp_bou_01}
\end{eqnarray}
Now let 
\[
h_{1}=\min \biggl\{ \frac{s_{i}}{\lambda(I)/m} : i \in \{ 1,\ldots,m \} \biggr\} > 0
\]
and
\[
h_{2}=\max \biggl\{ \frac{s_{i}}{\lambda(I)/m} : i \in \{ 1,\ldots,m \} \biggr\} > 0 .
\]
Since 
$q_{n} \in \mathcal{G}_{R_{n}}$,  we have 
\begin{eqnarray}
1 & \leq & 1 + (m-1)(h_{1}/2)^{\alpha } \frac{ (\min \{ \diam ( q_{n}^{-1}(a) ) : a \in A(q_{n}) \}) ^{\alpha }}
{\sum_{a \in A(q_{n})} \mu ( q_{n}^{-1}(a) )^{\alpha } } \nonumber \\
& \leq & 1 + (m-1)(h_{1}/2h_{2})^{\alpha } \frac{ (\min \{ \diam ( q_{n}^{-1}(a) ) : a \in A(q_{n}) \}) ^{\alpha }}
{\sum_{a \in A(q_{n})} \diam ( q_{n}^{-1}(a) )^{\alpha } } .
\label{upp_bou_02}
\end{eqnarray}
Fix $i=i(n)\in \{1,\ldots,m \}$ and $b \in A(q_{n}) \cap I_{i}$ such that 
\begin{equation}
\label{min_def_point}
\diam ( q_{n}^{-1}(b) ) =  \min \{ \diam ( q_{n}^{-1}(a) ) : a \in A(q_{n}) \}.
\end{equation}
From Proposition~\ref{ref_prop_unit_cube} and by \cite[Example 5.5]{GrLu00} 
we know that all codecells $q_{n}^{-1}(a)$ with $a \in A(q_{n}) \cap
I_{i}$ can be assumed to have equal length.
Because $q_{n} \in \mathcal{G}_{R_n} \subset \mathcal{K}_{R_n}$ we 
obtain $\lim_{n \rightarrow \infty} H_{\mu }^{\alpha }(q_{n}) = \infty$.
In view of (\ref{min_def_point}) we thus get $\card ( A(q_{n}) \cap I_{i} ) \to \infty$ as $n \to \infty$.
From (\ref{upp_bou_01}) and (\ref{upp_bou_02}) we deduce
\begin{eqnarray}
1 & \leq & \frac{e^{(1- \alpha )H_{\mu}^{\alpha }(q_{n})}}{\sum_{a \in A(q_{n})} \mu ( q_{n}^{-1}(a) )^{\alpha } } 
\leq  1 +  \frac{ (m-1)(h_{1}/2h_{2})^{\alpha }(\diam (q_{n}^{-1}(b))) ^{\alpha }} 
{\sum_{a \in A(q_{n}) \cap I_{i}} \diam ( q_{n}^{-1}(a) )^{\alpha } } \nonumber \\
&=& 1 +  \frac{ (m-1)(h_{1}/2h_{2})^{\alpha }} 
{ \card ( A(q_{n}) \cap I_{i} ) }
\to 1 \text{ as } n \to \infty .
\label{rel_conv_entr}
\end{eqnarray}
Again from $q_{n} \in \mathcal{G}_{R_{n}} \subset \mathcal{K}_{R_{n}}$ we have 
$\lim_{n \rightarrow \infty} e^{(1 - \alpha )(R_{n} - H_{\mu}^{\alpha }(q_{n}))} = 1$, 
which yields together with (\ref{rel_conv_entr})
the assertion. 
\end{proof}

Let $m \geq 2$ and $s_{1},\ldots,s_{m} \in (0,1)^{m}$ with
$\sum_{i=1}^{m}s_{i}=1$. For $(v_{1},\ldots,v_{m}) \in (0,\infty )^{m}$
and $\alpha \in ( - \infty, \infty ) \setminus \{ 1 \}$
we define
\[
F(v_{1},\ldots,v_{m}) = \sum_{i=1}^{m} s_{i} v_{i}^{-r}
\]
and set $t_{i} = s_{i}^{1/a_2} \left( \sum_{j=1}^{m} s_{j}^{
  a_1} \right)^{-\frac{1}{1-\alpha }}$, $i=1,\ldots,m$ as in
(\ref{ref_def_ti}).

\begin{lemma}
\label{ref_lem_lower_bound}
If $\alpha \in ( -\infty, 1)$, then
\[
F(t_{1},\ldots,t_{m}) = \inf \{ F(v_{1},\ldots,v_{m}) : (v_{1},\ldots,v_{m}) \in {} (0,\infty )^{m}; \,
\sum_{i=1}^{m} s_{i}^{\alpha } v_{i}^{1-\alpha } = 1 \}.
\]
\end{lemma}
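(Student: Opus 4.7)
The plan is to reduce the lemma to a single application of H\"older's inequality with a clever decomposition, and then identify the equality case. Since $\alpha\in(-\infty,1)$ and $r>0$, the quantity $a_2=(1-\alpha+r)/(1-\alpha)$ satisfies $a_2>1$, so the H\"older conjugate $a_2'=a_2/(a_2-1)=(1-\alpha+r)/r$ is also greater than $1$. The key algebraic observation is that one can write
\[
s_i^{a_1}
= \bigl(s_i v_i^{-r}\bigr)^{1/a_2}\,\bigl(s_i^{\alpha}v_i^{1-\alpha}\bigr)^{(a_2-1)/a_2}
\]
for every $i$, as can be verified by checking separately the powers of $s_i$ and $v_i$: the exponent of $v_i$ on the right is $[-r+(1-\alpha)(a_2-1)]/a_2 = 0$, using $(1-\alpha)(a_2-1)=r$, while the exponent of $s_i$ is $[1+\alpha(a_2-1)]/a_2=(1-\alpha+\alpha r)/(1-\alpha+r)=a_1$.

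Given this identity, H\"older's inequality with exponents $a_2$ and $a_2/(a_2-1)$ yields
\[
\sum_{i=1}^m s_i^{a_1}
\leq
\biggl(\sum_{i=1}^m s_i v_i^{-r}\biggr)^{1/a_2}
\biggl(\sum_{i=1}^m s_i^{\alpha} v_i^{1-\alpha}\biggr)^{(a_2-1)/a_2}.
\]
Using the constraint $\sum_i s_i^{\alpha} v_i^{1-\alpha}=1$, the second factor on the right is $1$, and raising both sides to the $a_2$ power gives $F(v_1,\ldots,v_m)\geq \bigl(\sum_i s_i^{a_1}\bigr)^{a_2}$. A direct computation shows that $F(t_1,\ldots,t_m)=\bigl(\sum_i s_i^{a_1}\bigr)^{a_2}$: indeed $s_i t_i^{-r}= s_i^{1-r/a_2}\bigl(\sum_j s_j^{a_1}\bigr)^{r/(1-\alpha)}= s_i^{a_1}\bigl(\sum_j s_j^{a_1}\bigr)^{r/(1-\alpha)}$, and summing over $i$ yields $\bigl(\sum_j s_j^{a_1}\bigr)^{1+r/(1-\alpha)}=\bigl(\sum_j s_j^{a_1}\bigr)^{a_2}$. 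Moreover $t=(t_1,\ldots,t_m)$ satisfies the constraint, since $s_i^{\alpha} t_i^{1-\alpha}= s_i^{a_1}/\sum_j s_j^{a_1}$. Hence the lower bound is attained at $t$, proving the lemma; the equality case in H\"older additionally shows that $t$ is the unique minimizer (up to the constraint-fixing constant).

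I do not expect any real obstacle: the whole argument rests on spotting the correct decomposition of $s_i^{a_1}$ into a product of $1/a_2$ and $(a_2-1)/a_2$ powers of the two building blocks $s_i v_i^{-r}$ and $s_i^{\alpha} v_i^{1-\alpha}$. The only mild bookkeeping is to verify that the exponent of $v_i$ vanishes (which is exactly the identity $(1-\alpha)(a_2-1)=r$, built into the definition of $a_2$) and that the exponent of $s_i$ simplifies to $a_1$ (built into the definition of $a_1$). These identities are precisely what forces the specific form of $a_1$ and $a_2$ in \eqref{lambda12def}, so the lemma essentially explains why these exponents are the natural ones.
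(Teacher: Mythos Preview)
Your proof is correct and takes a genuinely different route from the paper's. The paper performs the change of variables $x_i=s_i^{\alpha}v_i^{1-\alpha}$, which turns the constraint into $\sum_i x_i=1$ and rewrites the objective as $G(x)=\sum_i s_i^{(1-\alpha+\alpha r)/(1-\alpha)}x_i^{-r/(1-\alpha)}$; it then invokes \cite[Lemma~6.8]{GrLu00} to minimize $G$ over the simplex and back-substitutes to recover $w_i=t_i$. Your argument instead applies H\"older's inequality directly with exponents $a_2$ and $a_2/(a_2-1)$ to the factorization $s_i^{a_1}=(s_iv_i^{-r})^{1/a_2}(s_i^{\alpha}v_i^{1-\alpha})^{(a_2-1)/a_2}$, which yields the sharp lower bound $F(v)\ge\bigl(\sum_i s_i^{a_1}\bigr)^{a_2}$ in one stroke and identifies $t$ as the equality case. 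Your approach is more self-contained (no external reference needed) and makes transparent \emph{why} the particular exponents $a_1,a_2$ arise: they are exactly what makes the $v_i$-exponent cancel in the factorization. The paper's approach, on the other hand, isolates the optimization on the simplex as a separate modular step, which may be convenient if that lemma is already available.
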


\begin{proof}
Let $x_{i} = s_{i}^{\alpha } v_{i}^{1- \alpha }$. We calculate
\[
v_{i} = ( x_{i} s_{i}^{- \alpha } )^{\frac{1}{1 - \alpha }}
\]
and
\begin{eqnarray*}
F(v_{1},\ldots,v_{m}) & = & \sum_{i=1}^{m} s_{i} ( x_{i} s_{i}^{- \alpha } )^{\frac{-r}{1 - \alpha }} \\
&=& \sum_{i=1}^{m} s_{i}^{\frac{1 - \alpha + \alpha r }{1 - \alpha }} x_{i}^{- \frac{r}{1 - \alpha }} 
=: G(x_{1},\ldots,x_{m}).
\end{eqnarray*}
Applying \cite[Lemma 6.8]{GrLu00} we deduce that $G$ attains its 
minimum on $(0, \infty )^{m}$ subject to the constraint $\sum_{i=1}^{m}x_{i} = 1$
at the point $(y_{1},\ldots,y_{m})$ with
\[
y_{i} = \frac{ \left( s_{i}^{\frac{1 - \alpha + \alpha r }{1 - \alpha }} \right)^{\frac{1}{1+\frac{r}{1-\alpha }}} }
{ \sum_{j=1}^{m} \left( s_{j}^{\frac{1 - \alpha + \alpha r }{1 - \alpha }} \right)^{\frac{1}{1+\frac{r}{1-\alpha }}}  }
= \frac{s_{i}^{a_1}}{\sum_{j=1}^{m}s_{j}^{a_1}}
\]
for every $i \in \{ 1,\ldots,m \}$. Hence, $F$ attains its minimum subject to the constraint
$\sum_{i=1}^{m} s_{i}^{\alpha } v_{i}^{1-\alpha } = 1$ at the point
$(w_{1},\ldots,w_{m})$ with $w_{i} = (y_{i} s_{i}^{- \alpha })^{\frac{1}{1 - \alpha }}$ for 
every $i \in \{ 1,\ldots,m \}$.
We deduce
\begin{eqnarray*}
w_{i}^{1 - \alpha } &=& 
\frac{ s_{i}^{\frac{1 - \alpha + \alpha r}{1 - \alpha + r}} s_{i}^{- \alpha } }
{\sum_{j=1}^{m} \left( s_{j}^{\frac{1 - \alpha + \alpha r }{1 - \alpha }} \right)^{\frac{1}{1+\frac{r}{1-\alpha }}}}
= 
\frac{s_{i}^{ \frac{(1-\alpha )^{2}}{1 - \alpha + r}}}
{\sum_{j=1}^{m}s_{j}^{a_1}}  
\end{eqnarray*}
which yields $w_{i}=t_{i}$.
\end{proof}

\section*{Acknowledgments}
The authors would like to thank two anonymous reviewers for their
detailed and constructive comments.


\small

\end{document}